\documentclass[11pt]{article}
\usepackage{fullpage}
\usepackage[T1]{fontenc}
\usepackage{graphicx}
\usepackage{tikz}
\usepackage{authblk}

\usepackage{amsmath,amssymb}

\usepackage{amsthm}
\usepackage{nicematrix}
\usepackage{subcaption}
\usepackage[linewidth=1pt]{mdframed}

\usepackage[ruled,vlined]{algorithm2e}
\usepackage{tcolorbox} 

\usepackage{enumitem}
\usepackage{dsfont}
\usepackage{bm}
\usepackage{enumitem}
\usepackage{calc}

\usepackage{pgfmath}
\usepackage{pgffor}
\usetikzlibrary[patterns]
\usetikzlibrary{calc}
\usetikzlibrary{matrix, positioning}
\usepackage{pgfplots}
\usepgfplotslibrary{fillbetween}
\usetikzlibrary{intersections, patterns}
\pgfplotsset{width=8cm,compat=1.9}

\usepackage{hyperref}
\usepackage[capitalise]{cleveref}
\newtheorem{theorem}{Theorem}[section]
\newtheorem{definition}[theorem]{Definition}
\newtheorem{lemma}[theorem]{Lemma}
\newtheorem{corollary}[theorem]{Corollary}
\newtheorem{claim}[theorem]{Claim}
\newtheorem{conjecture}[theorem]{Conjecture}
\crefname{conjecture}{Conjecture}{Conjectures}
\crefname{claim}{Claim}{Claims}

\newcommand{\abs}[1]{\left\lvert#1\right\rvert}

\newtheorem{fact}{Fact}[section]

\newcommand{\bSigma}{\bm{\Sigma}}

\newcommand{\iid}{\overset{\text{iid}}{\sim}}

\DeclareMathOperator{\E}{\mathbb{E}}

\def\ddefloop#1{\ifx\ddefloop#1\else\ddef{#1}\expandafter\ddefloop\fi}
\def\ddef#1{\expandafter\def\csname bb#1\endcsname{\ensuremath{\mathbb{#1}}}}
\ddefloop ABCDEFGHIJKLMNOPQRSTUVWXYZ\ddefloop

\def\ddef#1{\expandafter\def\csname sf#1\endcsname{\ensuremath{\mathsf{#1}}}}
\ddefloop ABCDEFGHIJKLMNOPQRSTUVWXYZ\ddefloop

\def\ddef#1{\expandafter\def\csname cal#1\endcsname{\ensuremath{\mathcal{#1}}}}
\ddefloop ABCDEFGHIJKLMNOPQRSTUVWXYZ\ddefloop

\def\ddef#1{\expandafter\def\csname vec#1\endcsname{\ensuremath{\mathbf{#1}}}}
\ddefloop abcdefghijklmnopqrstuvwxyz\ddefloop

\def\ddef#1{\expandafter\def\csname mat#1\endcsname{\ensuremath{\mathbf{#1}}}}
\ddefloop ABCDEFGHIJKLMNOPQRSTUVWXYZ\ddefloop
\newcommand{\sign}{\textrm{sign}}

\newenvironment{pseudocode}{
  \par\smallskip
  \begin{center}
  \begin{tcolorbox}[
    width=\linewidth,
    colback=white,
    colframe=black,
    boxrule=0.5pt,
    arc=0pt,
    left=6pt,
    right=6pt,
    top=2pt,
    bottom=2pt,
    boxsep=3pt]
  \small
  \begin{tabbing}
  \qquad\=\qquad\=\qquad\=\qquad\=\qquad\=\qquad\=\qquad\= \kill}{
  \end{tabbing}
  \end{tcolorbox}
  \end{center}
  \smallskip}

\title{Collision Resistance of Single-Layer Neural Nets}

    \author[1]{Marco Benedetti}
    \author[2]{Andrej Bogdanov\thanks{Part of the work was done while visiting Bocconi University, supported by European Research Council (ERC) under the EU’s Horizon 2020 research and innovation programme (Grant agreement No. 101019547).}}
    \author[1]{Enrico M. Malatesta}
    \author[1]{Marc Mézard}
    \author[1]{\\ Gianmarco Perrupato}
    \author[1]{Alon Rosen\thanks{Work supported by European Research Council (ERC) under the EU’s Horizon 2020 research and innovation programme (Grant agreement No. 101019547) and Cariplo CRYPTONOMEX grant.}}
    \author[]{Nikolaj I. Schwartzbach\thanks{Most of the work was done while at Bocconi University, supported by European Research Council (ERC) under the EU’s Horizon 2020 research and innovation programme (Grant agreement No. 101019547).}}
    \author[1]{Riccardo Zecchina}
    \affil[1]{Bocconi University}
    \affil[2]{University of Ottawa}

\begin{document}

\maketitle

\begin{abstract}
We initiate the study of the algorithmic complexity of finding collisions in single-layer binary neural networks. Given a random matrix $\matA \in \mathbb{R}^{m\times n}$, an input $\vecx \in \{-1,1\}^n$ is mapped to a binary output vector $\varphi(\matA\vecx)\in \{-1,1\}^m$, where $\varphi$ is an activation function with constant behavior on $[\kappa, \infty)$ for some threshold $\kappa \geq 0$.


We identify the threshold scale $\kappa=\Theta(1/\sqrt{\alpha})$, where $\alpha=m/n$, as separating two complementary phenomena. When $\kappa \ll 1/\sqrt{\alpha}$, we give a simple online algorithm that efficiently produces extensive collisions. When $\kappa \gg 1/\sqrt{\alpha}$, for a natural \emph{randomized} non-periodic activation and suitable oscillation complexity, we prove that the extensive-collision space exhibits an overlap gap property (OGP), yielding an exponential lower bound against online algorithms.

Ours is the first work to use the overlap gap property as a rigorous criterion for collision resistance. The key difference between collision finding and average-case search is that collision finding has a new ``worst-case'' aspect: the collision finder has full control over the choice of colliding pairs. Our lower bound is proved in the online model; extending such guarantees to broader classes of algorithms, including spectral, algebraic, lattice-based, or quantum methods, remains an open direction.
\end{abstract}

\paragraph{Keywords:} statistical physics, cryptography, perceptron, multi overlap gap, teacher-student problem, online algorithms, second preimage attack, collision resistance

\pagestyle{plain}
\pagenumbering{arabic}

\section{Introduction}

Is it feasible to find two different inputs that map to the same output in a neural net? Finding such collisions is sometimes easy~\cite{LiZM19,OzbulakEtAlSciRep26}, but the question is whether there exist combinations of input domains and activation functions where collision finding becomes hard. 

We consider a standard setup of a single-layer neural network with $n$ binary inputs $\vecx\in \{-1,1\}^n$ and $m$ binary outputs~\cite{rosenblatt1958perceptron,Rosenblatt239697,mcculloch1943logical}. The outputs are obtained by multiplying $\vecx$ with a weight matrix $\matA\in \mathbb{R}^{m\times n}$, followed by applying an activation map $\varphi:\bbR^m\to\{-1,1\}^m$:
\[
f(\vecx)=\varphi(\matA\vecx)\in \{-1,1\}^m\,.
\]

A \emph{collision} is a pair $\vecx\neq\vecx'$ such that $\varphi(\matA\vecx)=\varphi(\matA\vecx')$. We focus on the shrinking regime $m<n$, where collisions must exist by the pigeonhole principle. The question remains interesting even when $m\ge n$ as long as collisions persist. We restrict to discrete inputs since continuous domains permit trivial linear-algebraic collisions via the kernel of $\matA$. We also assume that $\matA\sim \calN(0,1/n)^{m\times n}$ is given as input to, but not controlled by, the collision finder. It is sufficient to focus on a single-layer network since any collision found in one layer propagates through subsequent layers.

Collision resistance in single-layer binary networks depends both on the activation function and the constraint density $\alpha = m/n$. Our results identify a scaling window for collision finding, governed by the interaction between the activation’s margin width $\kappa$ and the density $\alpha$. We analyze three representative activations under this lens:

\begin{itemize}
    \item The \emph{asymmetric binary perceptron} (ABP), $\varphi(z)=\sign(z)$.
    \item The \emph{symmetric binary perceptron} (SBP), $\varphi(z)=\sign(\kappa-|z|)$.
    \item A \emph{randomized oscillating activation}, where each neuron $i$ uses an odd number $K$ of independent random thresholds $t_{ij} \sim \mathrm{Unif}[-\kappa,\kappa]$, 
    \[
    \varphi_i(x)=\prod_{j=1}^{K}\sign(x-t_{ij})\,.
    \]
    Here $\kappa>0$ determines the interval of oscillation and the odd integer $K\ge1$ determines its complexity. The oddness makes the activation change sign between the two tails, rather than returning to the same value on both sides of the oscillating window. This is the regime we analyze: the OGP and online lower-bound proofs use odd \(K\) in the row-wise contraction argument, and we do not treat even \(K\). Each row $i$ of the output uses its own randomly chosen thresholds, so the activations $\varphi_i$ vary across coordinates. The thresholds are sampled once together with the instance and are then fixed and revealed; the activation is randomized only in the choice of the instance, not during evaluation.
\end{itemize}

When $\kappa \ll 1/\sqrt{\alpha}$, we design an online algorithm that finds \emph{extensive collisions} with a sufficiently small constant overlap gap, for any activation that is constant on the positive tail $[\kappa,\infty)$. This includes both ABP ($\kappa=0$) and SBP in the small-$\kappa$ regime (for SBP, sufficiently positive margins all lie on the same constant tail, so the outputs coincide).

For the randomized oscillating activation, in the complementary large-width regime, we show that the collision space undergoes an $r$-wise {\em overlap gap property} ($r$-OGP). The OGP is a geometric criterion linked to algorithmic barriers in random inference and optimization~\cite{GamarnikSudanAOP17,GJW20,GamarnikJagannathAOP21,SubagInvent17,Gamarnik21OGPsurvey}. In our setting, the relevant form is an \emph{ensemble} \(r\)-OGP, which applies to several correlated instances sharing a common prefix; this rules out online algorithms for finding extensive collisions in the corresponding parameter regime. In related models, such gaps also align with known limitations of low-degree polynomial methods and message-passing heuristics~\cite{GJW20,GamarnikJagannathAOP21}. Our formal lower bound here is for online algorithms; lower bounds for broader algorithmic models, including low-degree, message-passing, and quantum algorithms, remain open.


The SBP lies between these two regimes. For small \(\kappa\), it falls within our positive-tail algorithmic window; for very large \(\kappa\), the activation is nearly constant on typical Gaussian margins, and collisions become trivial. The difficult case is the intermediate window, where the positive-tail escape route no longer applies but no OGP-based hardness result is known for collision finding. Resolving whether SBP is collision resistant in this middle regime, for some \(\alpha<1\) and some \(\kappa\), remains a central open challenge.

\subsection{A Statistical Physics Perspective} 

The ABP is susceptible to a second preimage attack: sample $\vecy = \sign(\matA \vecx)$ for random $\vecx$ and try to solve
the so-called \emph{teacher-student problem} on input $(\matA,\vecy)$. The output will be an $\vecx'$ such that,
$\sign(\matA \vecx) = \vecy= \sign(\matA \vecx')$, which is a collision whenever $\vecx' \neq \vecx$. This occurs with probability at least $1/2$ for $m<n$ and with noticeable
probability up to $\alpha \leq 1.1$. Approximate message passing (AMP) algorithms are known to heuristically solve
the teacher-student problem in this regime~\cite{BILSZ15}. 
\begin{fact} The ABP is (heuristically) not collision resistant for any $\alpha < 1.1$.
\end{fact}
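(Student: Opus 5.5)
The plan is the second-preimage reduction sketched just before the Fact. Call it a heuristic statement because its only non-rigorous input is that AMP succeeds on the teacher-student perceptron.

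The reduction runs as follows. Given $\matA\sim\calN(0,1/n)^{m\times n}$, the collision finder samples a uniform $\vecx\in\{-1,1\}^n$ and computes the planted labels $\vecy=\sign(\matA\vecx)$. The pair $(\matA,\vecy)$ is distributed exactly as a teacher-student instance with teacher $\vecx$. We run the AMP algorithm of~\cite{BILSZ15} (followed by rounding, plus a final check that $\sign(\matA\vecx')=\vecy$) to obtain a student $\vecx'$ consistent with all $m$ labels. By construction $f(\vecx)=f(\vecx')$, so we have a collision whenever $\vecx'\neq\vecx$. Both steps take polynomial time, so it remains to show that $\vecx'\neq\vecx$ with non-negligible probability.

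The key step is to argue that the student returned is typically not the teacher. Conditioned on $(\matA,\vecy)$, the teacher is uniform over the solution set $S(\matA,\vecy)=\{\vecz:\sign(\matA\vecz)=\vecy\}$. This follows from Bayes' rule, since the prior on $\vecx$ is uniform and the likelihood is the indicator of membership in $S$. The output $\vecx'$ is a function of $(\matA,\vecy)$ only, hence independent of $\vecx$ given the instance. Therefore
\[
\Pr[\vecx'=\vecx]=\E\bigl[1/\abs{S(\matA,\vecy)}\bigr].
\]
For $m<n$ the map $f$ has at most $2^m\le 2^{n-1}$ values, so a uniform $\vecx$ lands in a fibre of size at least $2$ with probability at least $1/2$. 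This gives $\Pr[\vecx'\neq\vecx]\ge 1/2\cdot 1/2$ overall, and a cleaner count recovers the stated $1/2$.

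For $1\le\alpha<1.1$ we invoke the statistical-physics prediction that the planted perceptron still has exponentially many solutions below the exact-recovery threshold $\alpha\approx 1.245$. The fibre size is then large with noticeable probability, and $\E[1/\abs{S}]$ is bounded away from $1$.

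The main obstacle is the AMP step. Its success in producing an exact consistent $\vecx'$ for all $\alpha<1.1$ is supported by the replica/state-evolution analysis of~\cite{BILSZ15} and by numerics, not by a rigorous theorem. This is why the Fact is stated heuristically. If one wanted more, I would first attempt a rigorous state-evolution analysis showing that AMP reaches overlap close to $1$, and then a local clean-up argument to reach an exact solution.
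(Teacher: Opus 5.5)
Your proposal takes the paper's own route: the second-preimage reduction to the teacher--student problem, solved heuristically by AMP~\cite{BILSZ15}, with your observation that the teacher is uniform on $S(\matA,\vecy)$ given the instance (so $\Pr[\vecx'=\vecx]\le\E[1/\abs{S}]\le 2^{m-n}$) making rigorous, conditional on the solver succeeding, the paper's informal ``probability at least $1/2$ for $m<n$.'' One caution, about your closing remark only: below the recovery threshold ($\alpha\approx1.245$) the posterior is spread over $S$, so no algorithm reaches overlap close to $1$ with the teacher, and a rigorous route should aim to reach some point of $S$ rather than the teacher.
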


The perceptron and its random variants have long served as a model system in statistical physics \cite{Gardner88,GardnerDerrida88}; see also the monograph~\cite{MezardMontanariBook}. A recent line of work studies the \emph{square-wave perceptron} (SWP), where a periodic activation induces structured oscillations across the input domain~\cite{swp,collisionResistance2025}. Using non-rigorous replica methods alongside first-moment arguments, it predicts an overlap gap at large densities and suggests a phase transition in collision-finding complexity. 

Our work is complementary: we study non-periodic, \emph{randomized} oscillating activations with odd \(K\) and prove a rigorous $r$-OGP, while also giving a positive online algorithm for any activation that is constant on the positive tail $[\kappa,\infty)$ in the small-width regime and for sufficiently small overlap gap. In particular, the randomized (non-periodic) structure we consider avoids the lattice-like regularity of the SWP. This motivates the conjecture that a single-layer neural network with random weights, binary inputs, and a randomized oscillating activation is collision resistant in the parameter regime where the OGP first moment is negative. In this regime, our ensemble OGP result proves an online lower bound.

\subsection{Result I: An Online Collision Finder for Positive-Tail Activations}
For activations that are constant on a positive tail, collision finding is algorithmically easy at low densities. Indeed, if \(\varphi(x)\) is constant for all \(x\ge\kappa\), then it suffices to drive all margins above \(\kappa\); once there, the activation is flat and the outputs coincide. We design a simple online algorithm that efficiently produces two distinct inputs mapping to the same output pattern whenever \(\alpha\) is below a constant threshold.
 
 The algorithm operates in the \emph{extensive regime}, where the two colliding inputs differ on a linear fraction of their coordinates. This is also the regime relevant for our hardness results later: both our positive and negative results concern collisions that are extensive rather than so-called \emph{subextensive} collisions that differ in a vanishing fraction of coordinates.

\begin{theorem}[Online Collision Finding at Low Density]\label{thm:informal-online}
There exists \(\delta_0>0\) such that for every fixed \(\delta\in(0,\delta_0)\) there are constants \(\alpha_0=\alpha_0(\delta)>0\) and \(c>0\) such that, for every \(\alpha<\alpha_0\) and every margin parameter \(\kappa\le c/\sqrt{\alpha}\), there is an efficient online algorithm which, given a random matrix \(\matA\in\mathbb{R}^{m\times n}\) with \(m=\alpha n\), outputs (with constant probability) two inputs \(\vecx,\vecy\in\{\pm1\}^n\) satisfying:
\begin{align*}
    \left\lvert\vecx^\top\vecy\right\rvert\le (1-\delta)\,n\,,
    \quad\quad
    \min_{i}(\matA \vecx)_i \ge \kappa\,, \quad\quad \min_{i}(\matA \vecy)_i \ge \kappa\,.
\end{align*}
In particular, for any activation function $\varphi$ that is constant on $[\kappa,\infty)$, we have $\varphi(\matA\vecx)=\varphi(\matA\vecy)$, so $\vecx$ and $\vecy$ form a $\delta$-extensive collision for $\varphi$.
\end{theorem}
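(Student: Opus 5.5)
We construct \(\vecx\) and \(\vecy\) coordinate by coordinate with an online rule, run as a two-phase procedure. Split the coordinates into a first block \(S\) of size \((1-\delta')n\) and a second block \(T\) of size \(\delta' n\), where \(\delta'\approx 2\delta\) is chosen so that the final overlap is at most \((1-\delta)n\).

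\textbf{Phase 1 (shared prefix).} On \(S\) set \(x_j=y_j\). Choose these signs greedily in the style of the online perceptron/discrepancy algorithms. At step \(j\), pick \(x_j\in\{\pm1\}\) to push up the currently smallest partial margins. A concrete choice is \(x_j=\sign\bigl(\sum_i w_i(t)A_{ij}\bigr)\) with weights \(w_i(t)=\exp(-\lambda h_i(t))\) on the partial margins \(h_i(t)=\sum_{k\le j}A_{ik}x_k\). Each selected step increases \(h_i\) by roughly \(\E|g|\cdot w_i/\|w\|\), so the potential \(\sum_i w_i\) is a supermartingale.

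The drift on row \(i\) per step is of order \(\sqrt{1/n}\cdot(\text{share of row } i)\). Over \(n\) steps with \(m=\alpha n\) rows, the total "budget" of margin is of order \(\sqrt{n}\cdot\sqrt{1/n}\,/\sqrt{\alpha}=\Theta(1/\sqrt{\alpha})\) per row, spread evenly. A potential argument with \(\lambda\asymp\sqrt{\alpha}\) then shows that all margins on \(S\) reach at least \(2c/\sqrt{\alpha}\), with constant probability (indeed w.h.p.), provided \(\alpha<\alpha_0(\delta)\).

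\textbf{Phase 2 (independent suffix).} On \(T\), choose \(\vecx_T\) and \(\vecy_T\) to disagree on a constant fraction of coordinates. The simplest route is to pick \(\vecx_T\) i.i.d. uniform and set \(\vecy_T=-\vecx_T\); then \(\vecx^\top\vecy=(1-2\delta')n\). The contributions \((\matA_T\vecx_T)_i\) are Gaussian with variance \(\delta'\). A union bound over \(m\) rows gives a worst-case loss of \(\sqrt{2\delta'\log m}\), which is too large. So instead we keep using the same greedy rule in Phase 2, run separately for \(\vecx\) and for \(\vecy\). We force disagreement on every other coordinate: on odd coordinates of \(T\), \(\vecx\) follows the greedy choice and \(\vecy\) takes the opposite sign, and on even coordinates the roles swap. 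Each sequence is then greedy on half of \(T\) and adversarial on the other half.

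\textbf{Main obstacle.} The hard part is controlling these adversarial steps. They contribute a random walk of variance about \(\delta'/2\) per row, and the forced steps are correlated with the margins through the greedy choices. The goal is to show that the greedy half, together with the Phase-1 surplus \(2c/\sqrt{\alpha}\), keeps every row above \(c/\sqrt{\alpha}\ge\kappa\).

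I would attack this by tracking the potential \(\Phi=\sum_i e^{-\lambda h_i}\) through both phases. A forced step multiplies \(\E\Phi\) by at most \(e^{\lambda^2/(2n)}\), while a greedy step decreases it by a factor \(1-\Omega(\lambda/\sqrt{nm})\) when margins are balanced. Summing over the \(\delta' n\) forced steps gives a total multiplicative loss of \(e^{O(\lambda^2\delta')}=e^{O(\alpha\delta')}\), which is only a constant. The final bound \(\Phi\le m\,e^{-\lambda\cdot 2c/\sqrt\alpha}\,e^{O(1)}<e^{-\lambda\kappa}\) requires \(2c\sqrt{\alpha}\lambda/\alpha\gtrsim\log m\). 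That condition fails for constant \(\alpha\) unless \(\lambda\) grows like \(\sqrt{\alpha}\log m\).

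So I expect to need a refined two-scale potential, or a final "fix-up" stage. In the fix-up stage the last \(\epsilon n\) coordinates of \(S\) are reserved to repair the \(O(\sqrt{\log m})\)-tail rows via a Spencer/Bansal-style online discrepancy argument. This step is where \(\alpha_0(\delta)\) and \(\delta_0\) are determined.

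Once both sequences have all margins at least \(\kappa\), the collision follows directly: any activation constant on \([\kappa,\infty)\) gives \(\varphi(\matA\vecx)=\varphi(\matA\vecy)\), so \(\vecx,\vecy\) form a \(\delta\)-extensive collision.
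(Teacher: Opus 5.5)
There is a genuine gap: the proposal stops at exactly the step that carries the theorem.

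\textbf{Phase 1 does not clear every row.} Your claim that an exponential-potential argument drives all $m$ margins to $2c/\sqrt{\alpha}$ is not justified for constant $\alpha$. The potential only controls the expected \emph{fraction} of rows below a level. Upgrading this to every row costs a factor $m$, i.e.\ a $\log m$ loss, as you compute yourself at the end. The paper notes the same limitation: plain $Push$ needs $O(m\log m)$ columns to clear every coordinate.

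Incidentally, in the $\matA$ normalization the usable scale is $\lambda\asymp 1/\sqrt{\alpha}$ (this is $\lambda=\sqrt{2/\pi}$ for $\matB=\matA/\sqrt{\alpha}$), not $\sqrt{\alpha}$. With $\lambda\asymp\sqrt{\alpha}$ the factor $e^{-\lambda\cdot 2c/\sqrt{\alpha}}$ is a constant and certifies nothing.

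An unspecified ``Spencer/Bansal-style'' fix-up does not close this gap. Online discrepancy walks keep a signed sum near zero in $\ell_\infty$. You would still have to explain how a residual set of low rows is raised within $O(m)$ columns without a fresh $\log m$ loss on the other rows.

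The missing ingredient in the paper is $FocusedPush$. Once $Push$ has brought the potential of the shifted state below $e^{-499\lambda}$, it runs the potential rule only on the $m'$ currently smallest coordinates, for $m'=m_0,m_0/2,\ldots,2$ with $m_0=m$. Each scale uses a rescaled $\lambda_{m',m_0}$ and $\lceil 6\gamma^{-1}m_0^{5/8}m'^{3/8}\rceil$ steps, for $O(m_0)$ steps in total. It maintains the invariant that every $b(m')$-unsafe entry, with $b(m')=500(m'/m_0)^{1/8}$, lies in the critical set. Entries outside the critical set only perform an unbiased walk started from a safe height. This multiscale focusing is what removes the $\log m$.

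\textbf{The phase order creates a second, independent gap.} You put the fix-up inside $S$ and the disagreement block $T$ after it, so nothing repairs the damage done in $T$. Write $h^{x}=h+d$ and $h^{y}=h-d$, with $d$ the contribution of $T$ to $\matA\vecx$. You then need $|d_i|\le h_i-\kappa$ for every row, an $\ell_\infty$ requirement over $m$ rows.

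Your alternating rule gives no evident drift toward this. At the start of $T$ the two margin vectors coincide, so the forced sign for $\vecx$ (the negation of $\vecy$'s greedy sign) is, to first order, anti-greedy for $\vecx$. Your claim that a forced step multiplies $\E\Phi$ by at most $e^{\lambda^2/(2n)}$ holds only for signs independent of the current column. Here the first-order term is positive and cancels the gain of the greedy half. So, at least initially, $T$ behaves like a driftless walk. You give no argument that it avoids the $\Theta(\sqrt{\delta'\log m})$ worst-row deviation you yourself computed for random signs.

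The paper sidesteps this entirely by putting the $\lceil\delta n/2\rceil$ opposite-sign coordinates \emph{first}, using equal signs afterwards, and tracking $\vecs=\min(\matB\vecx,\matB\vecy)$. Since $\min(u+za,v+za)=\min(u,v)+za$, the collision problem becomes a single-vector margin problem. The disagreement block only raises the initial potential to $\E\Phi\le 2e^{\lambda^2\beta_1/2}$, which $O(\beta_1 m)$ extra $Push$ steps absorb. $FocusedPush$ then runs after the overlap is already fixed.
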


Our algorithm works for both the ABP and the SBP for sufficiently small densities. For ABP, the same procedure, when run on a single vector, also finds a preimage of the all-positive output $\mathbf{1}^m$.

The algorithm's evolution is governed by an exponential potential that gives a single global measure of how many coordinates still have poor margin. In the collision-finding variant, we first create the desired overlap gap by assigning opposite signs to $\vecx$ and $\vecy$ on an initial block of coordinates. From that point on, every new coordinate is assigned the same sign in both vectors, so the overlap remains fixed while the margins continue to improve. The analysis then follows the pointwise minimum of the two margin vectors: a coordinate is considered safe only once both $\matA\vecx$ and $\matA\vecy$ are safely above the target margin. The \(Push\) phase steadily reduces the exponential potential of this minimum state, shrinking the low-margin tail. A final refinement phase focuses on the remaining bad coordinates and raises them above the threshold, ensuring that every entry of both $\matA\vecx$ and $\matA\vecy$ is at least $\kappa$.

This potential-based analysis gives a different way to study sequential margin-building algorithms. An earlier algorithm of Kim and Roche~\cite{KimR95} also constructs signs in stages: at each stage it examines a fresh block of columns and chooses the corresponding signs by majority vote among the rows with the smallest current partial margins. Their proof tracks the evolution of this low-margin tail through the stages and shows that all margins become positive when \(\alpha=O(1/\log^2 n)\), achieving minimum margin \(\Omega(1/\log n)\). Our exponential-potential argument serves a similar purpose, but reduces the low-margin tail to a single quantity that is easier to adapt to the collision-finding setting.

A recent algorithm of Schramm et al.~\cite{10.1145/3717823.3718124} gives a sharper result for the single-preimage margin problem using tools from discrepancy minimization. Their method first solves a linear program over the relaxed cube $[-1,1]^n$, with a random objective direction and the desired margin constraints, and then rounds the remaining fractional coordinates while preserving feasibility. For every constant $c<2/\pi$ and all sufficiently large $\kappa$, it finds $\vecx$ with $\matA\vecx\ge \kappa\mathbf{1}$ whenever $\alpha\le c/\kappa^2$, matching the large-margin statistical threshold up to the sharp constant. Our potential-based method has a less optimized constant for this single-preimage task, but it is tailored to collision finding: by tracking the pointwise minimum of two evolving margin vectors, it constructs two correlated inputs with prescribed extensive overlap, rather than a single preimage of $\mathbf{1}^m$. Whether the sharper discrepancy-minimization approach can be adapted to this collision-finding setting remains open.

\subsection{The Symmetric Binary Perceptron}

The SBP exposes the main limitation of the positive-tail framework. Our online algorithm applies whenever $\kappa \ll 1/\sqrt{\alpha}$, yielding efficient collisions with a sufficiently small constant overlap gap in the small-$\kappa$ regime. Separately, in extreme parameter regimes where the output can be saturated to $\pm\mathbf{1}^m$, earlier margin-based algorithms~\cite{BS20,KimR95,10.1145/3717823.3718124} also produce collisions. Beyond these extremes, however, the collision resistance of SBP remains unresolved. Neither known algorithmic techniques nor hardness arguments currently apply when $\kappa$ lies in an intermediate window, such as $\kappa \approx 0.675$ and $\alpha \approx 0.9$, where statistical physics suggests nontrivial structure may emerge.

One insight into the structure of SBP is provided by Gamarnik et al.~\cite{GKPX22}, who show that in the teacher–student version of SBP, the solution space satisfies an $m$-overlap gap property (m-OGP) at very small $\kappa$ and $\alpha$. This implies that, in this setting, \emph{second preimage attacks}, where one first samples $\vecx$, computes $\varphi(A\vecx)$, and seeks a new $\vecx'$ mapping to the same output, cannot succeed efficiently. However, this does not rule out collision finding in the standard formulation, where the collision-finder has full control over the two inputs and is not restricted to second-preimage attacks. Nor does their result extend to moderate or large values of $\alpha$.

A related hardness indication comes from recent work of Vafa and Vaikuntanathan~\cite{VV25}, who give a worst-case-to-average-case reduction from standard approximate lattice problems to the SBP search problem. Translating their notation to ours, their polynomial-time hardness result applies to margins of order \(\kappa=\alpha^{1/2+\epsilon}\) for every constant \(\epsilon>0\), under polynomial-factor worst-case lattice hardness assumptions; under stronger subexponential lattice hardness assumptions, they obtain hardness nearly at the Bansal--Spencer scale \(\kappa\approx\sqrt{\alpha}\), up to logarithmic factors. Their reduction applies in a different regime, where the number of variables is polynomially larger than the number of constraints, and they explicitly leave the proportional regime \(m=\Theta(n)\) open. Thus their result does not directly address collision finding or the intermediate-density SBP window considered here, but it provides complementary evidence for SBP hardness from worst-case lattice assumptions.

\begin{quote}
\noindent{\bf Open Question.}
{\em For some $\alpha<1$, does there exist an intermediate window of $\kappa$ with
$1/\log n \ll \kappa \ll 1/\sqrt{\alpha}$ in which SBP is collision resistant
against polynomial-time algorithms?}
\end{quote}

We therefore introduce a randomized activation with the same kind of non-monotone, windowed behavior as SBP, but with a collision space that can be analyzed by a first-moment argument. The randomized oscillating activation replaces the fixed SBP boundary by independent random thresholds at each neuron. This extra randomness removes the rigid alignments that make the SBP collision space difficult to analyze, while preserving the obstruction to the positive-tail algorithm. In this model we prove an overlap gap property for extensive collisions, and hence an online lower bound, in a regime where our algorithmic approach no longer applies.

\subsection{Result II: Overlap Gaps for Randomized Oscillating Activation}

To obtain a model in which the collision geometry can be analyzed by a first-moment argument, we turn to the randomized oscillating activation introduced above. The oscillations persist across a wide region of inputs, and when $\kappa$ is sufficiently large the positive-tail algorithm cannot drive all margins above $\kappa$ within the available $n$ columns. The row-wise random thresholds make the activation non-monotone and heterogeneous while keeping the one-row collision probability tractable.

We prove that for this randomized model, extensive collision pairs exhibit an \emph{overlap gap property} (OGP). The theorem considers collision pairs that are internally well separated, and shows that the mutual overlap between two such pairs cannot lie in an intermediate range. In this sense, the collision space has a forbidden band of external overlaps: valid pairs can be close to one another or far apart, but not at intermediate separation. Such overlap gaps are known to create algorithmic barriers in spin-glass models and random constraint satisfaction problems.

\begin{theorem}[Overlap Gap Property]\label{thm:informal-ogp}
For every constant density \(\alpha\in(0,1)\), for suitable constants \(r\ge2\), \(\delta>0\), and \(C>0\) depending on \(\alpha\), whenever \(K\) is odd, \(n\) is sufficiently large, and the parameters satisfy
\[
\kappa^2\ge \frac{C}{\alpha}
\qquad\text{and}\qquad
K\ge C\,\frac{\kappa}{\sqrt{\delta}},
\]
the randomized oscillating activation exhibits an $r$-wise overlap gap property for $\delta$-extensive collisions with high probability over $\matA$ and the random thresholds.
\end{theorem}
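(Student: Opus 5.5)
The proof is a first-moment (union bound) argument over forbidden configurations. Fix \(r\) and overlap parameters \(0<\beta_1<\beta_2<1\) to be chosen. A forbidden configuration is a tuple of \(r\) collision pairs \((\vecx^{(a)},\vecy^{(a)})_{a\le r}\) with three properties: each pair is \(\delta\)-extensive, \(|\langle\vecx^{(a)},\vecy^{(a)}\rangle|\le(1-\delta)n\); each pair collides, \(\varphi_i((\matA\vecx^{(a)})_i)=\varphi_i((\matA\vecy^{(a)})_i)\) for all \(i\); and all pairwise external overlaps lie in \([\beta_1 n,\beta_2 n]\). Let \(N\) count such configurations. We show \(\E N\le e^{-\Omega(n)}\), and Markov's inequality then gives the \(r\)-OGP with high probability.

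For the ensemble version needed for the online lower bound, the same computation runs with correlated matrices \(\matA^{(a)}\) that share a common prefix of columns. The Gaussian margins stay jointly Gaussian with covariance modified by the interpolation parameter, and the per-row bound below only uses a lower bound on the conditional variance of each difference, so it carries over.

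The entropy factor is at most \(2^{n}\cdot 2^{(2r-1)n}\le e^{2rn\ln 2}\). A sharper count via the overlap constraints is possible but not needed. The key point is that, by row independence of \(\matA\) and of the thresholds \(t_{ij}\), the probability factorizes as \(p^{m}\). Here \(p\) is the probability that a single row \(\vecg\sim\calN(0,I_n/n)\) with thresholds \(t_1,\dots,t_K\) makes all \(r\) pairs collide. So we need \(\alpha\ln(1/p)>2r\ln 2\), i.e.\ \(p\le e^{-c r/\alpha}\) for a suitable absolute \(c\).

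To bound \(p\), write \(u_a=\langle\vecg,\vecx^{(a)}\rangle\) and \(v_a=\langle\vecg,\vecy^{(a)}\rangle\). The difference \(u_a-v_a\) has variance \(2(1-\rho_a)\ge 2\delta\), where \(\rho_a\) is the normalized internal overlap. A single pair collides only if an even number of thresholds lie between \(u_a\) and \(v_a\). Conditioning on \((u_a,v_a)\), this event can be analyzed through the parity of a binomial count. If the interval \([u_a,v_a]\cap[-\kappa,\kappa]\) has length \(\ell\), the number of thresholds inside is \(\mathrm{Bin}(K,\ell/(2\kappa))\). Its parity is even with probability \(\tfrac12(1+(1-\ell/\kappa)^K)\). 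When \(\ell\gtrsim\kappa/K\), the term \((1-\ell/\kappa)^K\) is small, so the collision probability is about \(1/2\). The condition \(K\ge C\kappa/\sqrt\delta\) guarantees \(\ell\gtrsim\sqrt\delta\gg\kappa/K\) with constant probability.

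The remaining requirements are the following. The margins must land inside \([-\kappa,\kappa]\), which holds with high probability since margins are \(O(1)\) and \(\kappa\ge\sqrt{C/\alpha}\) is large. Oddness of \(K\) excludes escape to the tails: both endpoints outside the window on opposite sides produce a certain sign flip, while the same side is impossible for a mean-zero pair of separated margins. Most importantly, we need the \(r\) parity events to be nearly independent. Here I would use the external overlap constraint \(\beta_2<1\): the \(2r\) margins form a Gaussian vector whose covariance is nondegenerate, with conditional variances bounded below in terms of \(1-\beta_2\) and \(\delta\). Conditioned on the thresholds, the endpoints are spread at scale \(\Omega(\sqrt{\delta})\), so distinct pairs' intervals cover distinct threshold sets with constant probability. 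Iterating through the pairs, each new pair given the previous ones collides with probability at most \(1-\eta\) for a constant \(\eta=\eta(\delta,\beta_2)\). This yields \(p\le(1-\eta)^{r}\).

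The main obstacle is this step, because \(p\le(1-\eta)^r\) alone gives \(\alpha r\eta\) versus \(2r\ln 2\), which is not favorable for small \(\alpha\). I would therefore sharpen the count of configurations using the overlap constraints. With \(\beta_1,\beta_2\) close to \(1\), the \(r\) pairs are clustered, so the entropy is \(n\ln2+r\,n\,h(\delta)+r\,n\,h(1-\beta_1)\) with small binary entropies. Choosing \(r\ge r_0(\alpha)\) large then lets the energy term \(\alpha r\eta\) dominate \(n\ln 2/r\) per pair plus the small entropies. The conditional-independence bound must be made uniform despite the pairs being close. This is the delicate part, and I expect it to force the precise dependence of \(C\) on \(\alpha\) stated in the theorem.
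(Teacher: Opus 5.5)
Your skeleton matches the paper's: a first moment over admissible $r$-tuples, row independence, entropy dominated by the first pair plus a small per-pair cost from clustering, $r\gtrsim 1/\alpha$ to amortize the first pair, and the one-pair parity formula $(1-\ell/\kappa)^K$. The gap is the step that carries the theorem: a per-row bound, exponentially small in $r$, for all $r$ pairs colliding jointly. You do not establish it, and the sequential sketch would not work as stated.

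\emph{Pairwise distinctness is not enough.} Conditioning on pairs $1,\dots,a-1$ colliding means conditioning on the parities of threshold counts in the intervals $I_b$ between $u_b$ and $v_b$. Pair $a$'s parity is fresh only if $I_a$ is not (nearly) the symmetric difference of some subset of earlier intervals. That is a condition over all subsets, and pairwise distinctness of intervals does not give it. For example, take $u_3\approx u_1$, $u_2\approx v_1$, $v_3\approx v_2$. No two of the three intervals nearly coincide, yet $I_3=I_1\triangle I_2$, so pair 3 collides with conditional probability $\approx 1$. Ruling this out is exactly where the forbidden window must be tied to $\delta$. The paper needs the internal separation $\tau=\delta/2$ to exceed the window's upper end $b=\delta/c_1$, and also $2a>b$, i.e.\ $2<c_1<c_2<2c_1$. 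Your $\beta_1,\beta_2$ are never related to $\delta$.

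\emph{Constant-probability separation is too weak.} The conditioning events are exponentially small in $a$. So every bad event (unseparated endpoints, or a margin outside $[-\kappa,\kappa]$) must have unconditional probability exponentially small in $r$; in the paper this is $O(2^{-r})$. That requirement is what actually dictates the parameters:
\begin{itemize}
\item the separation scale must be of order $2^{-r}\sqrt{\delta}/r^2$, hence $K\gtrsim 2^r r^3\kappa/\sqrt{\delta}$;
\item the tail term must satisfy $r e^{-\kappa^2/2}\lesssim 2^{-r}$, hence $\kappa^2\gtrsim r\sim 1/\alpha$.
\end{itemize}

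Two smaller errors, both harmless once fixed:
\begin{itemize}
\item The $2r$ margins need not have nondegenerate covariance. Flipping both $\vecx^{(1)}$ and $\vecy^{(1)}$ on some coordinates where they agree gives an admissible pair with $u_2-v_2=u_1-v_1$ identically.
\item The first pair costs $2n$ bits, not $n\ln 2+nh(\delta)$. The internal condition is a separation constraint, not a proximity constraint.
\end{itemize}

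The paper avoids conditioning altogether. It expands
$\Pr[\varphi(X_j)=\varphi(Y_j)\ \forall j]=2^{-r}\sum_{S\subseteq[r]}\E\bigl[\prod_{j\in S}\varphi(X_j)\varphi(Y_j)\bigr]$
and bounds each nonempty-$S$ correlation additively, in four steps.
\begin{enumerate}
\item Its Selection Lemma, using $\tau>b$, $2a>b$ and the triangle inequality for $d$, finds $s+1$ of the $2s$ variables with pairwise covariance at most $1-\eta_0\delta$.
\item A union bound separates these $s+1$ variables by $q=p\sqrt{\eta_0\delta}$, except with probability $\binom{s+1}{2}p$.
\item A pigeonhole over the consecutive order-statistic gaps $(Z_{\pi(2j-1)},Z_{\pi(2j)})$ produces an odd-parity gap of length at least $q$. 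So one uniform threshold gives $g\le 1-q/\kappa$.
\item Because $K$ is odd, $x\mapsto x^K$ is increasing, so $g^K\le(1-q/\kappa)^K$ even when $g<0$. This, not escape to the tails, is where oddness is used. Same-side tail escape is not impossible, just $O(e^{-\kappa^2/2})$.
\end{enumerate}
Taking $p\approx 2^{-r}/r^2$ makes every error term $O(2^{-r})$. The row probability is then $O(2^{-r})$, i.e.\ $\Lambda_{r,\delta}(\kappa,K)\ge(1-c)r$. Finally, $H(\delta/(2c_1))\le c\alpha$ together with $r\ge C/\alpha$ closes the count.

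The missing idea in your proposal is this expansion together with the selection-plus-pigeonhole lemma.
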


The formal statement in \Cref{thm:randomized-ogp-first-moment} gives the sharper first-moment condition, expressed through the row exponent \(\Lambda_{r,\delta}(\kappa,K)\). The constant \(C(\alpha)\) above comes from choosing \(r=O(1/\alpha)\) and from the constants needed for this exponent to dominate the entropy term; in particular, it absorbs the factor \(2^r r^3\) appearing in the sufficient lower bound on \(K\) in \Cref{corollary:rogp}.

\begin{theorem}[Online Collision Resistance]\label{thm:informal-hard}
For every constant density \(\alpha\in(0,1)\), for suitable constants \(\delta>0\) and \(C>0\) depending on \(\alpha\), whenever \(K\) is odd, \(n\) is sufficiently large, and the parameters satisfy
\[
\kappa^2\ge \frac{C}{\alpha}
\qquad\text{and}\qquad
K\ge C\,\frac{\kappa}{\sqrt{\delta}},
\]
every online algorithm has success probability at most $2^{-\Omega(\alpha n)}$ of outputting a $\delta$-extensive collision on a single random instance. In particular, the randomized family $\vecx\mapsto\varphi(\matA\vecx)$ is online $\delta$-extensive collision resistant at density $\alpha$.
\end{theorem}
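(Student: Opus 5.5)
The plan is the standard ensemble-OGP-to-online-lower-bound route (Gamarnik--Kızıldağ style), with the ensemble $r$-OGP of \Cref{thm:informal-ogp} (formally \Cref{thm:randomized-ogp-first-moment}) as the input. We build a correlated family of instances by column resampling. Let $\matA^{(0)}$ be the original matrix and $\matA^{(1)},\dots,\matA^{(r)}$ be independent copies sharing the thresholds $t_{ij}$. For a time $\tau\in\{0,1,\dots,n\}$, let $\matA_\tau^{(\ell)}$ agree with $\matA^{(0)}$ on the first $\tau$ columns and with $\matA^{(\ell)}$ on the rest. For each $\ell$, $\matA_\tau^{(\ell)}$ is marginally a fresh instance. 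An online algorithm must fix its output coordinates $1,\dots,\tau$ (for both members of the pair) using only the first $\tau$ columns. So on the $r$ instances $\matA_\tau^{(1)},\dots,\matA_\tau^{(r)}$ its outputs $(\vecx^{(\ell)},\vecy^{(\ell)})$ agree on the prefix $[\tau]$.

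First, I will invoke the ensemble form of the OGP over this interpolation family. With probability $1-2^{-\Omega(\alpha n)}$, simultaneously for all $\tau$ in a grid of $O(1)$ or $\mathrm{poly}(n)$ values (union bound), no $r$-tuple of $\delta$-extensive collisions, one per instance $\matA_\tau^{(\ell)}$, has all pairwise external overlaps in the forbidden band $(\beta-\eta,\beta)$. The grid needs only polynomially many points, so the $2^{-\Omega(\alpha n)}$ first-moment bound absorbs it. The first-moment computation itself comes from the row exponent $\Lambda_{r,\delta}(\kappa,K)$. Because the instances share columns, rows remain independent across $i$ given the overlap profile. I expect the needed statement is exactly that theorem, with the condition $\kappa^2\ge C/\alpha$, $K\ge C\kappa/\sqrt\delta$ ensuring $\alpha\Lambda$ dominates the entropy.

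Second comes the chaining argument. At $\tau=0$ the instances are independent, so any $r$ successful outputs have small pairwise overlaps: with high probability, via the $r$-OGP applied to the fully independent ensemble, or simply by a first-moment bound showing that independent instances' collisions cannot be correlated. At $\tau=n$ all instances coincide and the algorithm's outputs are identical, giving overlap $1$. Overlaps move by at most $O(1/n)$ per step, since the output prefix is frozen and only one coordinate changes status. Hence some $\tau$ puts all pairwise overlaps near $\beta$. The simultaneity across all $\binom r2$ pairs needs care. I would handle it as Gamarnik--Kızıldağ--Perkins--Xu do, by revealing instances sequentially and using the extremal/Ramsey-type argument. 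Alternatively, I would use the fact that by symmetry the overlaps between distinct $\ell,\ell'$ at a given $\tau$ share the same distribution. Combined with concentration under the prefix constraint, this puts them simultaneously in the band.

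Third, the probability bookkeeping. Let $p$ be the algorithm's success probability. The event that it succeeds on all $r$ instances at every grid $\tau$ has probability at least roughly $p^{r}$ minus union terms, by a Jensen/conditional-independence argument: given the prefix, the $r$ suffixes are i.i.d., so $\Pr[\text{all succeed}]\ge \E[q_\tau^r]\ge p^r$. A subtlety is that we need success along the whole path, which costs a factor $n$ in the union bound. On that event the forbidden configuration exists, which has probability $2^{-\Omega(\alpha n)}$. So $p^r\le n\,2^{-\Omega(\alpha n)}$, giving $p\le 2^{-\Omega(\alpha n)}$ since $r$ is constant. The main obstacle is the second step: getting all pairwise external overlaps (and, for pairs, the right notion of overlap between collision pairs) into the gap simultaneously. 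That requires carefully matching the OGP's overlap definition with the continuity of the online interpolation.
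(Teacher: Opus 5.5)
There is a genuine gap in your chaining step. The interpolants $\matA^{(\ell)}_\tau$ and $\matA^{(\ell)}_{\tau+1}$ differ in column $\tau+1$, and the algorithm reads that column \emph{before} committing to coordinates $\tau+1,\dots,n$. So its whole suffix output may change, and pairwise overlaps can jump by $\Theta(1)$ between consecutive $\tau$. Online-ness freezes the prefix but gives no stability; that kind of continuity is what a stability hypothesis supplies for low-degree or Lipschitz algorithms. Without it, your intermediate-value argument has nothing to stand on.

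Your fallback at a fixed $\tau$ fails too. Conditional on the prefix, the $r$ suffix outputs are exchangeable, but nothing makes their overlaps concentrate. The suffix bits may be strongly correlated across runs, for example chosen from the prefix and the shared coins, which gives suffix overlap near $n-\tau$. Two further steps do not hold. The claim that overlaps are small at $\tau=0$ is unproven. The bookkeeping also runs the wrong way: success at every $\tau$ on the path is an intersection of events, each of probability only about $p^r$. A union bound cannot bound such an intersection from below when $p$ is small, so $p^r\le n\,2^{-\Omega(\alpha n)}$ does not follow.

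The missing idea is the paper's sign-flip symmetrization at a \emph{single} $t$ with $1-t/n$ strictly inside $(\delta/c_2,\delta/c_1)$; this removes the need for any interpolation.
\begin{itemize}
\item For each $i>t$, replace suffix column $\veca_i^{(j)}$ by $\sigma_i^{(j)}\veca_i^{(j)}$ with independent uniform signs, and run the algorithm with the same prefix, thresholds and coins to get $(\vecu^{(j)},\vecv^{(j)})$.
\item On the suffix, output $x_i^{(j)}=\sigma_i^{(j)}u_i^{(j)}$ and $y_i^{(j)}=\sigma_i^{(j)}v_i^{(j)}$. Then $\matA^{(j)}\vecx^{(j)}=\widetilde{\matA}^{(j)}\vecu^{(j)}$, so collisions and internal overlaps are preserved even though $\varphi$ is not odd.
\item By Gaussian sign symmetry, the $\sigma$'s are independent of the raw outputs. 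Hence the suffix agreements $x_i^{(j)}x_i^{(j')}$ are independent uniform signs, and Hoeffding places all $\binom{r}{2}$ external overlaps at $t/n\pm\xi/4$, inside the window, simultaneously.
\item Jensen over the conditionally i.i.d.\ suffixes gives simultaneous success with probability at least $p^r$ at this one $t$. Ensemble OGP then yields $p^r\le 2^{-\Omega(n)}$.
\end{itemize}

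Separately, the ensemble OGP is not literally the single-instance theorem. Across instances, the covariance of $\langle\veca^{(j)},\vecx^{(j)}\rangle$ and $\langle\veca^{(j')},\vecx^{(j')}\rangle$ is the prefix overlap, not the Boolean overlap. So the selection lemma's metric hypothesis is no longer automatic. The paper instead counts $t$-prefix-aligned families, with exponent $2+2(r-1)(1-t/n)$. It then exhibits the separated subfamily $\{X_j\}_{j\in S}\cup\{Y_{j_0}\}$ directly.
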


The constant \(C\) above comes from the same first-moment and ensemble-OGP requirements as in the preceding theorem; the precise dependence is given by the row exponent \(\Lambda_{r,\delta}(\kappa,K)\) in \Cref{thm:randomized-ogp-first-moment}. The formal online bound is \(2^{-\Omega(n/r)}\); for fixed \(\alpha\) the proof takes \(r=O(1/\alpha)\), which gives the displayed \(2^{-\Omega(\alpha n)}\).

Our formal lower bound is stated for extensive collisions. For cryptographic collision resistance in the usual sense, one may first encode the input with a code \(E:\{0,1\}^{\ell}\to\{\pm1\}^n\) whose distinct codewords have absolute inner product at most \((1-\delta)n\). Any collision in the composed function \(\varphi\circ\matA\circ E\) then gives a \(\delta\)-extensive collision in the neural network. If the code has rate \(R=\ell/n\), the composed function is compressing precisely when \(\alpha<R\); hence this interpretation applies in the density regimes covered by our lower bound whenever such a code exists with rate exceeding \(\alpha\).

The randomized oscillating activation differs from the square-wave perceptron in that the thresholds are sampled independently for each row, rather than placed on a fixed period shared across all neurons.

The online lower bound provides evidence for a broader hardness phenomenon. The square-wave model studied in~\cite{swp} suggested that OGP can serve as a geometric indicator of collision resistance in neural networks, supported there by first-moment calculations and AMP experiments. Our randomized oscillating model removes the shared periodic structure of the square wave and proves the corresponding online statement: once the row exponent dominates entropy, the collision space develops an overlap gap, and online algorithms fail. Our theorem proves this online implication for odd \(K\); the conjectural next step is that the same geometric obstruction reflects hardness beyond the online model. In particular, even for fixed \(\kappa\) and fixed odd \(K\), sufficiently high density should make it computationally hard to coordinate two extensive preimages through the row-wise random thresholds. We do not prove lower bounds against spectral, algebraic, lattice-based, low-degree, message-passing, or quantum algorithms, but the OGP theorem motivates the following conjecture.

\begin{conjecture}[Hardness of Collision Finding]\label{conj:constantK}
For fixed constants $K \in \mathbb{N}$, $\kappa > 0$, and $\delta > 0$, there exists a threshold $\alpha_\star < 1$ such that for all $\alpha \ge \alpha_\star$, no polynomial-time algorithm can find a $\delta$-extensive collision with non-negligible probability.
\end{conjecture}

We leave confirming this conjecture, or finding algorithmic counterexamples, as an open direction for future work.

\paragraph{Paper Organization.} The preliminaries fix notation and the formal collision and OGP definitions. Section~\ref{sec:evolution} proves the positive result via a one-sided potential method and a refinement procedure for positive-tail activations. Section~\ref{sec:random-activation} proves the first-moment OGP bound for the randomized oscillating activation; the subsequent subsections extend it to correlated-prefix ensembles and derive the online lower bound.

\section{Technical Overview}

The two main technical components of this paper are a positive algorithmic result for positive-tail activations, including the asymmetric binary perceptron (ABP), and a negative online-hardness result via the overlap gap property (OGP) for a randomized activation function. We outline the core ideas behind both directions.

\subsection{A Simple Online Algorithm}
For any activation that is constant on the positive tail $[\kappa,\infty)$, we only need to drive all margin values $(\matA\vecx)_i$ and $(\matA\vecy)_i$ above $\kappa$. Once there, the activation $\varphi$ stabilizes and outputs coincide. The drift analysis is stated in the normalization $\matB=\matA/\sqrt{\alpha}$, whose columns have entrywise variance $1/m$; reaching margin $\rho=\kappa/\sqrt{\alpha}$ for $\matB$ gives margin $\kappa$ for $\matA$. The basic margin-building routine processes columns online, maintaining the current normalized margin vector \(\vecs=\matB_{\le t}\vecx_{\le t}\). At step \(t+1\), it chooses the next bit \(x_{t+1}\) so as to reduce an exponential potential \(\Phi(\vecs)\), which gives large weight to coordinates whose margins are still small. This local sign choice creates a systematic drift in the potential: although each normalized column changes every margin by only \(O(1/\sqrt m)\), the exponential weighting turns these small updates into a \(\Theta(1/m)\) decrease in \(\Phi\) per step in expectation.

We show that when the activation $\varphi$ is constant on $[\kappa,\infty)$, one can efficiently construct extensive collisions with a sufficiently small constant overlap gap in the shrinking regime $\alpha = m/n < 1$. The algorithm processes the columns of $\matA$ sequentially, choosing the signs of two vectors $\vecx,\vecy\in\{\pm1\}^n$ in an online fashion as each column is revealed. The goal is to produce two inputs that differ on a linear fraction of their coordinates, yet agree under the activation map $\varphi(\matA\cdot)$.

For a single margin vector, the one-step drift is captured by the following potential. Given the current state \(\vecs_t\), the algorithm chooses the next sign \(x_t\in\{\pm1\}\) to minimize
\[
\Phi(\vecs) = \frac{1}{m}\sum_{i=1}^m e^{-\lambda s_i}\,,
\]
which penalizes small margins and rewards those that already lie far above $\rho$. The exponential weights make the potential sensitive to precisely the coordinates that are still problematic: a coordinate with small margin contributes much more to \(\Phi\) than a coordinate already far above the target. When a fresh Gaussian column arrives, the algorithm chooses the sign that is better aligned with this weighted set of low-margin coordinates. Concretely, after seeing the fresh column \(\veca\), the choice \(x_t=\sign\!\left(\frac1m\sum_i a_i e^{-\lambda s_i}\right)\) makes the first-order term in the Taylor expansion negative and of order \(\Theta(1/m)\) in expectation over \(\veca\). The second-order correction is also of order \(1/m\), but with a smaller constant at our choice of \(\lambda\), and the remaining terms are \(O(1/m^2)\). This yields a clean multiplicative drift for \(\E[\Phi]\). Quantitatively,
\[
    \mathbb{E}\!\left[\Phi(\vecs_{t+1}) \,\middle|\, \vecs_t\right]
    \le \left(1 - \frac{\gamma}{m}\right)\Phi(\vecs_t),
    \quad
    \gamma=\Theta(1)\,,
\]
because $\sqrt{2/\pi}\,\sqrt{\E\left[e^{-2\lambda s}\right]}-\tfrac12\lambda\,\E\left[e^{-\lambda s}\right]\ge c\,\E\left[e^{-\lambda s}\right]$ for $\lambda=\sqrt{2/\pi}$ and some universal $c>0$. The first term is the gain from choosing the better sign against a Gaussian column, while the second term is the variance cost from the quadratic Taylor expansion. Thus, $\Phi$ decays exponentially in $t/m$. After \(O((1+\rho)m)\) steps, only a small fraction of coordinates of $\matB\vecx$ remain below the normalized target margin $\rho$; a short refinement phase completes the separation, and rescaling by $\sqrt{\alpha}$ gives margins above $\kappa$ for $\matA$.

For collision finding, we first create the required overlap gap by assigning opposite signs to $\vecx$ and $\vecy$ on $\beta_1m$ coordinates, with $\beta_1=\delta/(2\alpha)$. Since $m=\alpha n$, this gives $\delta n/2+o(n)$ initial disagreements. After this initialization, every remaining column receives the same sign in $\vecx$ and $\vecy$, so the overlap is preserved. The proof tracks the pointwise minimum $\vecs=\min(\matB\vecx,\matB\vecy)$; because $\min(u+za,v+za)=\min(u,v)+za$, the same one-sided potential argument applies to this minimum state. A refinement phase based on \(FocusedPush\) uses $O(m)$ additional columns to clear the remaining low-margin coordinates. The total number of used columns is at most $n$ once $\delta$, $\kappa\sqrt{\alpha}$, and $\alpha$ are chosen sufficiently small: reaching normalized margin \(\rho=\kappa/\sqrt{\alpha}\) costs \(O((1+\rho)m)=O((\alpha+\kappa\sqrt{\alpha})n)\) drift columns, and the initialization/refinement budgets are also linear in \(m\) and \(\delta n\). This is why the theorem is stated only for sufficiently small constant overlap gap. Once all entries of $\matA\vecx$ and $\matA\vecy$ lie above $\kappa$, the component-wise activations stabilize, giving,
\[
\varphi(\matA\vecx) = \varphi(\matA\vecy)\,,
\]
certifying a $\delta$-extensive collision. Informally, the exponential potential converts the low-margin tail into a single scalar quantity: coordinates far below the target contribute exponentially more than coordinates already safely above it. The \(1/m\) drift therefore does not just improve an average margin; it steadily reduces the weighted mass of problematic coordinates. After \(O((1+\rho)m)\) drift columns this leaves only a small residual set, which the refinement phase targets directly.

\subsection{Overlap Gaps for Collision Finding}

To certify online hardness for more expressive, non-monotone activations, we establish an $r$-wise overlap gap property (OGP) for a randomized multi-threshold activation. For each row $i\in[m]$, we fix an odd integer $K\ge1$, so that the activation changes sign between the two tails, draw independent thresholds $t_{ij}\sim\mathrm{Unif}[-\kappa,\kappa]$ for $j\in[K]$, and define
\[
\varphi_i(x)=\prod_{j=1}^{K}\sign(x-t_{ij})\,,\qquad
\varphi(\matA\vecx)_i=\varphi_i((\matA\vecx)_i)\,.
\]
The independent thresholds make the row calculation depend on the Gaussian positions of the candidate margins, rather than on a common period shared by all rows. For a fixed admissible \(r\)-tuple, a single threshold has some chance of separating the signs in the selected Gaussian subfamily. Using \(K\) independent thresholds amplifies the resulting row-wise contraction, while increasing \(\kappa\) spreads the thresholds over a wider interval and weakens the contribution of each one. The exponent \(\Lambda_{r,\delta}(\kappa,K)\) is the resulting row-wise gain. The OGP proof then compares this gain with the entropy cost of choosing an admissible \(r\)-tuple.

Intuitively, the OGP states that no valid $r$-tuple of collision pairs $(\vecx_i,\vecy_i)$ can have external overlaps,
\[
\frac{1}{2n}\left(\vecx_i^\top\vecx_j+\vecy_i^\top\vecy_j\right)\,,
\]
lying in a fixed forbidden interval. The proof certifies this gap by a first-moment bound. Fix $r$ collision pairs obeying the internal overlap constraint and consider a random row $\veca^\top$ of $\matA$. For each pair,
\[
(X_i,Y_i)=(\veca^\top\vecx_i,\veca^\top\vecy_i)\,,
\]
is jointly Gaussian with covariance determined by overlaps. For a single pair, the signs of \(X_i\) and \(Y_i\) differ exactly when the threshold lies between them. Conditional on \(|X_i|,|Y_i|\le \kappa\), this has probability \(\Theta(|X_i-Y_i|/\kappa)\). The proof uses a higher-order version of this observation: once a selected subfamily of Gaussian values is well separated inside \([-\kappa,\kappa]\), a random threshold has nontrivial probability of making the relevant product of signs negative. The \(K\) independent thresholds then amplify this one-threshold contraction.
By \Cref{lem:gaussian-estimates}, for every $p\in(0,1)$ the selected Gaussian variables are separated by order $p\sqrt{\delta}$ except with probability of order $r^2p$, while the event that some selected Gaussian lies outside $[-\kappa,\kappa]$ costs order $r\,e^{-\kappa^2/2}$. Combining this with the $K$ independent thresholds gives the formal row-wise exponent $\Lambda_{r,\delta}(\kappa,K)$ defined in Section~\ref{sec:random-activation}: for a fixed admissible $r$-tuple, the one-row collision probability is at most \(2^{-\Lambda_{r,\delta}(\kappa,K)}\).
By the entropy count in \Cref{lem:admissible-entropy-count}, summing over all families of $r$ pairs consistent with the overlap constraints gives the per-coordinate first-moment bound,
\[
\frac1n\log_2
\mathbb{E}[\#\text{$r$-tuples of $\delta$-admissible collisions}]
\le
2+2(r-1)H(\delta/(2c_1))
-\alpha\Lambda_{r,\delta}(\kappa,K)
+o(1).
\]
Thus the first-moment exponent is negative once $\alpha\Lambda_{r,\delta}(\kappa,K)$ exceeds $2+2(r-1)H(\delta/(2c_1))$ by a constant amount. By Markov's inequality, in this negative-exponent regime no such $r$-tuple exists with exponentially high probability, giving the formal OGP result in \Cref{thm:randomized-ogp-first-moment,corollary:rogp}. The overlap assumptions enter through a selection lemma. Applied to the \(2r\) Gaussian projections from one row, it shows that admissibility forces an \((r+1)\)-variable subfamily whose covariances are bounded away from one. This separated subfamily is what drives the one-row contraction; the anti-concentration parameter \(p\) converts covariance separation into actual separation of Gaussian values and is optimized in the definition of \(\Lambda_{r,\delta}(\kappa,K)\).

Online hardness uses a separate ensemble version of the OGP. In the ensemble setting, we consider correlated instances $(\matA^{(1)},\ldots,\matA^{(r)})$ sharing their first $t$ columns and having independent suffixes, with
\[
\frac{\delta}{c_2}<1-\frac{t}{n}<\frac{\delta}{c_1}.
\]
For prefix-aligned families, the first-moment count becomes \(2+2(r-1)(1-t/n)+o(1)\), and the same row-wise contraction applies after selecting an explicitly separated subfamily. This gives ensemble \(r\)-OGP. If a single-instance online algorithm succeeds with probability $p$, then, after conditioning on the shared prefix, thresholds, and internal coins, the $r$ suffixes remain independent with a common conditional success probability \(q\). Jensen's inequality gives \(\E[q^r]\ge \E[q]^r=p^r\), so the coupled experiment succeeds on all \(r\) instances with probability at least \(p^r\). Because the algorithm is online and the coupled instances have the same prefix, the outputs produced on the first \(t\) coordinates are identical across the \(r\) runs. On the suffix, we independently flip columns in each coupled instance and then undo the flips in the output; this preserves the collision equations, while making the suffix contributions to external overlaps behave like sums of independent mean-zero signs. Choosing $t$ inside the forbidden window therefore turns simultaneous success into an ensemble-OGP-forbidden configuration except with exponentially small probability. Hence \(p\le \exp(-\Omega(n/r))\).

In contrast to the ABP, where monotonicity and a simple exponential potential enable efficient collision finding, sufficiently strong randomized oscillation introduces a geometric barrier for online algorithms.

\section{Preliminaries and Definitions}

We will use capital letters $A,B,C, \ldots$ to denote sets and use standard notation $\cap, \cup, \setminus$ to denote common set-theoretic operations. We denote vectors using boldface lower-case letters $\vecx,\vecy, \ldots, \in \bbR^n$, and reserve boldface upper-case letters $\matA, \matB, \ldots$ for matrices. For a vector \(\vecx\), write \(\vecx_{\le t}\) for its first \(t\) coordinates; for a matrix \(\matB\), write \(\matB_{\le t}\) for its first \(t\) columns. We use $\boldsymbol{0}_k$ or $\boldsymbol{1}_k$ to denote the $k$-vectors containing only 0 or 1 respectively. Inner products between vectors $\vecx$ and $\vecy$ are written as $\vecx^\top \vecy$, or occasionally as $\langle \vecx, \vecy \rangle$ when the latter is more convenient. We may omit the subscript if its length is implicit from the context. We use $\boldsymbol{1}\{\cdot\}$ to denote the indicator variable which is 1 if its argument holds true. For $\vecx,\vecy\in\{-1,1\}^n$ we write $\Delta_{\mathrm{Ham}}(\vecx,\vecy)$ for their Hamming distance; equivalently, $\vecx^\top\vecy = n-2\Delta_{\mathrm{Ham}}(\vecx,\vecy)$. We write $[n]=\{1,\ldots,n\}$ for positive integers $n$.

We denote by $\calN(\mu, \sigma^2)$ the Gaussian distribution with mean $\mu \in \bbR$ and variance $\sigma^2 > 0$. Similarly, we denote by $\calN(\boldsymbol{\mu}, \bSigma)$ the multivariate Gaussian distribution with mean $\boldsymbol{\mu} \in \bbR^{n}$ and covariance matrix $\bSigma \in \bbR^{n \times n}$. Denote by $H(p) = -p\log_2(p) - (1-p) \log_2(1-p)$ the binary entropy function. Throughout the paper, $\alpha=m/n$ is treated as a fixed constant unless stated otherwise. When a fixed density $\alpha$ is specified, all asymptotic statements are over integer pairs $(m,n)$ with $n\to\infty$ and $m/n\to\alpha$; equivalently, one may take $m=\lfloor \alpha n\rfloor$, with rounding effects absorbed in the $o(1)$ terms. We use the convention \(\sign(0)=1\).

\paragraph{Collision Finding.}
Fix an activation map $\varphi : \bbR^m \to \{-1,1\}^m$, deterministic coordinate-wise or row-wise randomized. The instance consists of a matrix $\matA \in \bbR^{m \times n}$ with entries $\matA_{ij} \sim \calN(0,1/n)$, drawn independently. In the randomized case, the activation is sampled once and fully revealed together with $\matA$. When randomized, all thresholds are drawn once at the start and then treated as fixed.

\begin{definition}[Collision Problem]
    Given $\delta \in (0,1)$, a (randomized) algorithm $C$ \emph{solves the collision problem} for $\varphi$ at distance $\delta$ if, on input $\matA$ and any sampled activation thresholds, it outputs $\vecx,\vecy \in \{\pm1\}^n$ with
    \begin{equation} \label{eq:ogp-def}
        \varphi(\matA \vecx) = \varphi(\matA \vecy)\,,
    \quad\text{and}\quad
    \left|\vecx^\top \vecy\right| \le (1-\delta)\,n\,,
    \end{equation}
    with non-negligible probability over the joint randomness of $\matA$ and $C$. We say $\varphi$ is \emph{$\delta$-extensive collision resistant} if no probabilistic polynomial-time algorithm solves this distance-$\delta$ collision problem.
\end{definition}

The preceding definition is the full probabilistic-polynomial-time notion for extensive collisions, stated for context. The rigorous lower bound proved in this paper is for online algorithms; when referring to proved hardness below, we use \emph{online extensive collision resistance} to mean resistance within that restricted online model.

We refer to any pair $(\vecx,\vecy)$ satisfying \cref{eq:ogp-def} as a \emph{$\delta$-extensive collision}: a valid collision whose two inputs have absolute inner product at most $(1-\delta)n$. When $\delta=\Omega(1)$ is a positive constant we call it simply an \emph{extensive collision}.

The extensive-distance requirement is weaker than standard cryptographic collision resistance, since it only rules out collisions whose inputs have small absolute overlap. This distinction can be removed by placing an error-correcting code before the neural network, provided the code itself has the corresponding absolute-overlap separation and the resulting function is still compressing. Indeed, if \(E:\{0,1\}^{\ell}\to\{\pm1\}^n\) satisfies
\[
\left|\langle E(u),E(v)\rangle\right|\le (1-\delta)n
\qquad\text{for all }u\neq v,
\]
then any collision
\[
\varphi(\matA E(u))=\varphi(\matA E(v)),
\qquad u\neq v,
\]
induces a \(\delta\)-extensive collision between the codewords \(E(u)\) and \(E(v)\). If the code has rate \(R=\ell/n\), the composed function \(\varphi\circ \matA\circ E\) maps \(\ell\) bits to \(m=\alpha n\) bits and is compressing exactly when \(\alpha<R\). Since such absolute-overlap codes have rates bounded away from zero for every fixed small enough \(\delta\), the ECC interpretation applies to the randomized-activation online lower bound for any fixed \(\alpha\) below such a rate, by taking \(\kappa=\Omega(1/\sqrt{\alpha})\) and an odd \(K=\Omega_\alpha(\kappa/\sqrt{\delta})\).

\begin{definition}[Overlap Gap Property]\label{def:ogp}
    Let $r\in\bbN$, $0< \delta < 1$, and let $\varphi:\bbR^m\to\{-1,1\}^m$ be an activation map, deterministic coordinate-wise or sampled once and then fixed. Let $2<c_1<c_2<2c_1$ be two constants. Define $T(r, \delta, c_1, c_2) \subseteq \{-1,1\}^{2rn}$ as the set of all $r$ pairs $(\vecx_i, \vecy_i)_{i=1}^r$ with $\vecx_i, \vecy_i \in \{-1,1\}^n$ that satisfy the following two conditions.
    \begin{enumerate}
        \item \emph{(Internal Overlap)}. For every $i=1\ldots r$, it holds that,
        \begin{equation}\label{eq:extensive_distance}
            \left \lvert \vecx_i^\top \vecy_i \right \rvert  \leq (1-\delta)\,n\,.
        \end{equation}
        \item \emph{(External Overlap)}. For every $i,j =1\ldots r$ with $i\neq j$, we have that, 
        \begin{equation}\label{eq:pairwise_overlap}
        1 - \frac\delta{c_1} \leq \frac{\vecx_i^\top\vecx_j + \vecy_i^\top\vecy_j}{2n} \leq 1 - \frac\delta{c_2}\,.
        \end{equation}
    \end{enumerate}
    A set of $r$ pairs $(\vecx_i, \vecy_i)_{i=1}^r$ satisfying the internal and external overlap requirements is said to be \emph{$\delta$-admissible}. 
    
    Let $\alpha > 0$ be a density parameter, and let $m=m(n)$ satisfy $m/n\to\alpha$. Let $\matA = (\veca_j)_{j=1}^n \in \bbR^{m \times n}$ have columns $\veca_j \in \bbR^m$, and entries $\matA_{ij}\, {\scriptstyle \iid}\, \calN(0,1 / n)$. Denote by $S(r, \delta, c_1, c_2; \matA) \subseteq T(r, \delta, c_1, c_2)$ the subset of $\delta$-admissible sets of $r$ pairs that also satisfy a third property.
    \begin{enumerate}
        \item[3.] \emph{(Collision)}. For every $i=1\ldots r$, it holds that,
        \begin{equation}
            \varphi(\matA \vecx_i) = \varphi(\matA \vecy_i)\,.
        \end{equation}
    \end{enumerate}
    For fixed \(\matA\) and fixed activation thresholds, the function \(f_\matA(\vecx)=\varphi(\matA\vecx)\) satisfies the \emph{\(r\)-overlap gap property} (\(r\)-OGP) with parameters \(0<\delta<1\) and \(2<c_1<c_2<2c_1\) if
    \[
    S(r,\delta,c_1,c_2;\matA)=\emptyset,
    \]
    i.e. if there is no \(r\)-tuple of collision pairs satisfying the internal overlap condition whose pairwise external overlaps all lie in the forbidden window.
\end{definition}

\paragraph{Probability Conventions.}
Unless stated otherwise, all probabilities and expectations are over the joint randomness of:
(i) the Gaussian matrix $\matA$ (and, in the ensemble setting, the correlated draw $(\matA^{(1)},\dots,\matA^{(r)})$),
(ii) any random thresholds defining the activation (sampled once and then revealed together with $\matA$),
and (iii) the internal randomness of the algorithm (when present).
When we write $\Pr[\cdot \mid \matA]$ (or $\mathbb{E}[\cdot \mid \matA]$) we condition on the realized instance(s)
and thresholds, and keep only algorithmic randomness.
When we write $\mathbb{E}[\mathsf{Coll}_{r,\delta}]$ we mean expectation over the draw of $\matA$ and the
activation thresholds (and over the ensemble when applicable).

\begin{lemma}[First-Moment Bound]\label{lem:first-moment}
If $X\ge0$, then $\Pr[X\ge a]\le \E\left[X\right]/a$ for every $a>0$. In particular, if $N$ counts forbidden configurations, then $\Pr[N\ge1]\le \E\left[N\right]$.
\end{lemma}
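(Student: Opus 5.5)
This is Markov's inequality; I will derive it from the pointwise comparison between $X$ and a scaled indicator, and then specialize to integer-valued counts.

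\textbf{Pointwise comparison.} Fix $a>0$. Since $X\ge 0$, we have
\[
a\,\boldsymbol{1}\{X\ge a\}\le X
\]
pointwise. On the event $\{X\ge a\}$ the left side equals $a\le X$. Off that event the left side is $0\le X$.

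\textbf{Taking expectations.} By monotonicity and linearity of expectation, $a\Pr[X\ge a]\le \E[X]$. Dividing by $a>0$ gives the first claim. If $\E[X]=\infty$ the bound is trivial, so nothing changes in that case.

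\textbf{The counting case.} Let $N$ be a nonnegative integer-valued count of forbidden configurations. Apply the first part with $a=1$. Because $N$ is integer-valued, $\{N\ge 1\}=\{N\neq 0\}$, so $\Pr[N\ge1]\le\E[N]$.

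In the later OGP arguments this reads as follows: if the expected number of forbidden configurations is $2^{-\Omega(n)}$, then with exponentially high probability no such configuration exists.

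I see no obstacle here. The only things to check are that $X$ is measurable, which holds for counts over the finite set $\{-1,1\}^{2rn}$, and that $a$ is strictly positive.
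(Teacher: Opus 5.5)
Your proof is correct: it is the standard pointwise-comparison proof of Markov's inequality, specialized to $a=1$ for integer-valued counts. The paper gives no proof of its own and simply defers to standard references (Vershynin; Dubhashi--Panconesi), so your argument is essentially the one it relies on.
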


\begin{lemma}[Gaussian Estimates]\label{lem:gaussian-estimates}
If $G\sim\calN(0,\sigma^2)$, then for all $u\ge0$,
\[
\Pr[|G|\ge u\sigma]\le 2\,e^{-u^2/2},
\qquad
\Pr[|G|\le u\sigma]\le \sqrt{\frac{2}{\pi}}\,u.
\]
Moreover, for every fixed integer $k\ge1$, $\E\left[|G|^k\right]=\Theta_k(\sigma^k)$.
\end{lemma}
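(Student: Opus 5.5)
The first two claims are standard facts about a one-dimensional Gaussian, so the plan is to rescale and treat each claim separately. Throughout, take $G=\sigma Z$ with $Z\sim\calN(0,1)$. Both probability bounds are invariant under this scaling, and the moment claim becomes $\E|G|^k=\sigma^k\E|Z|^k$. This reduces everything to the standard normal.

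\emph{Tail bound.} For the bound $\Pr[|Z|\ge u]\le 2e^{-u^2/2}$, I would use symmetry to write $\Pr[|Z|\ge u]=2\Pr[Z\ge u]$. It then suffices to show $\Pr[Z\ge u]\le e^{-u^2/2}$.
\begin{itemize}
\item The Chernoff argument gives this directly: $\Pr[Z\ge u]\le e^{-\lambda u}\E[e^{\lambda Z}]=e^{-\lambda u+\lambda^2/2}$, and choosing $\lambda=u$ gives $e^{-u^2/2}$.
\item Alternatively, one can use the sharper bound $\Pr[Z\ge u]\le \tfrac12 e^{-u^2/2}$. This follows by substituting $x=u+s$ in the tail integral, which gives $\int_0^\infty \phi(u+s)\,ds\le e^{-u^2/2}\int_0^\infty\phi(s)\,ds$, since $e^{-us}\le 1$.
\end{itemize}
Either version suffices for the stated bound.

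\emph{Small-ball bound.} For $\Pr[|Z|\le u]\le\sqrt{2/\pi}\,u$, I would bound the density $\phi(x)=(2\pi)^{-1/2}e^{-x^2/2}$ pointwise by its maximum $(2\pi)^{-1/2}$. Integrating this bound over the interval $[-u,u]$, which has length $2u$, gives
\[
2u(2\pi)^{-1/2}=\sqrt{2/\pi}\,u.
\]

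\emph{Moments.} For the last claim, $\E|Z|^k=\frac{2^{k/2}\Gamma((k+1)/2)}{\sqrt\pi}$, which is a finite positive constant depending only on $k$. The constant is positive because $|Z|^k>0$ almost surely, and finite because the Gaussian tail from the first part makes $\int x^k e^{-x^2/2}\,dx$ converge. Multiplying by $\sigma^k$ gives $\E|G|^k=c_k\sigma^k$ with $c_k\in(0,\infty)$, which is exactly $\Theta_k(\sigma^k)$.

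No step here is a real obstacle. The only points needing care are the factor $2$ from symmetry in the tail bound and checking the constant $\sqrt{2/\pi}$ in the small-ball bound.
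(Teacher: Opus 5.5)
Your proof is correct. The Chernoff (or shifted-integral) tail bound, the pointwise density bound $(2\pi)^{-1/2}$ for the small-ball estimate, and the formula $\E|Z|^k=2^{k/2}\Gamma((k+1)/2)/\sqrt{\pi}$ after rescaling $G=\sigma Z$ all check out. The paper gives no proof of its own; it states these as standard facts with a pointer to the textbooks of Vershynin and of Dubhashi--Panconesi, and your arguments are the standard derivations found there.
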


\begin{lemma}[Hoeffding Bound]\label{lem:hoeffding}
If $\xi_1,\ldots,\xi_s$ are independent mean-zero random variables with $\xi_i\in[-1,1]$, then for all $\lambda>0$,
\[
\Pr\!\left[\left|\sum_{i=1}^s \xi_i\right|\ge \lambda s\right]
\le 2\,e^{-\lambda^2s/2}.
\]
\end{lemma}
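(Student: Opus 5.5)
We prove the Hoeffding bound by the standard exponential-moment (Chernoff) method. Throughout write $S=\sum_{i=1}^s \xi_i$. By symmetry, it suffices to bound the upper tail $\Pr[S\ge \lambda s]\le e^{-\lambda^2 s/2}$. The variables $-\xi_i$ satisfy the same hypotheses, so the same bound holds for $\Pr[S\le -\lambda s]$, and a union bound gives the factor $2$.

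For the upper tail, fix $\theta>0$ and apply Markov's inequality (\Cref{lem:first-moment}) to the nonnegative variable $e^{\theta S}$:
\[
\Pr[S\ge \lambda s]\le e^{-\theta\lambda s}\,\E\!\left[e^{\theta S}\right]=e^{-\theta\lambda s}\prod_{i=1}^s \E\!\left[e^{\theta\xi_i}\right],
\]
where the product form uses independence of the $\xi_i$.

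The key step is Hoeffding's lemma: a mean-zero $\xi\in[-1,1]$ satisfies $\E[e^{\theta\xi}]\le e^{\theta^2/2}$. The argument goes as follows.

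\begin{itemize}
\item By convexity of $x\mapsto e^{\theta x}$ on $[-1,1]$, we have $e^{\theta\xi}\le \frac{1-\xi}{2}e^{-\theta}+\frac{1+\xi}{2}e^{\theta}$.
\item Taking expectations and using $\E[\xi]=0$ gives $\E[e^{\theta\xi}]\le\cosh\theta$.
\item Comparing Taylor series term by term, $\cosh\theta=\sum_k \theta^{2k}/(2k)!\le\sum_k \theta^{2k}/(2^k k!)=e^{\theta^2/2}$, since $(2k)!\ge 2^k k!$.
\end{itemize}

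This cosh route is cleaner than the general $(b-a)^2/8$ version and suffices here.

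Combining, $\Pr[S\ge\lambda s]\le \exp(-\theta\lambda s+s\theta^2/2)$. Choosing $\theta=\lambda$ minimizes the exponent and yields $e^{-\lambda^2 s/2}$. The case $\lambda>1$ is trivial, since $|S|\le s$, but the bound holds regardless.

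The only nonroutine step is Hoeffding's lemma. With the convexity and cosh comparison above it is elementary, so I expect no real obstacle.
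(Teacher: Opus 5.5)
Your proof is correct: Markov's inequality applied to $e^{\theta S}$, the convexity bound $\E[e^{\theta\xi}]\le\cosh\theta\le e^{\theta^2/2}$, the choice $\theta=\lambda$, and the symmetric treatment of the lower tail together give exactly $2e^{-\lambda^2 s/2}$. The paper does not prove this lemma itself but cites it as a standard fact from Vershynin and from Dubhashi--Panconesi, and your argument is the usual textbook proof found in those references.
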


These standard forms can be found, for example, in \cite{Vershynin2018,DubhashiPanconesi2009}.

\section{Finding Collisions for Positive-Tail Activations}
\label{sec:evolution}
We demonstrate that extensive collisions with sufficiently small constant overlap gap can be found at sufficiently low aspect ratios $\alpha$ for activations that are constant on a positive tail.  To measure quality, we define the \emph{algorithmic rate} to be the smaller of the absolute-overlap gap $\delta$, where $|\langle x,x'\rangle|\le (1-\delta)n$, and the collision ratio $\alpha = m/n$ as $n$ increases.

We describe two algorithms. The first one works well in practice, but possibly not so well in theory.  We prove its correctness at inverse logarithmic algorithmic rates.  We leave open its performance at constant algorithmic rates.  We use it as a component in a more complicated algorithm that provably works at fixed positive algorithmic rates, for sufficiently small constants $\delta$ and $\alpha$. The main result of this section is the following:

\begin{theorem}[Collision Finding for Positive-Tail Activations]\label{thm:algorithm}
There exists $\delta_0>0$ such that for every fixed $\delta\in(0,\delta_0)$ there exist constants $\alpha_0=\alpha_0(\delta)>0$ and $c>0$, and an efficient online algorithm, such that for every pair of parameters $(\alpha,\kappa)$ satisfying $\alpha<\alpha_0$ and
\(
    \kappa \;\le\; c/\sqrt{\alpha}
\)
and for all sufficiently large $n$, the algorithm outputs with constant probability two vectors $\vecx,\vecy\in\{\pm1\}^n$ such that
\begin{align*}
    \left\lvert\vecx^\top\vecy\right\rvert\le (1-\delta)\,n\,,
    \quad\quad
    \min_{i}(\matA \vecx)_i \ge \kappa\,, \quad\quad \min_{i}(\matA \vecy)_i \ge \kappa\,.
\end{align*}
In particular, for any activation function $\varphi$ that is constant on $[\kappa,\infty)$, $\vecx$ and $\vecy$ form a $\delta$-extensive collision for $\varphi$.
\end{theorem}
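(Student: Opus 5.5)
We follow the three-phase structure sketched in the technical overview, working throughout in the normalization $\matB=\matA/\sqrt{\alpha}$, whose entries have variance $1/m$, with target normalized margin $\rho=\kappa/\sqrt{\alpha}\le c/\alpha$.

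\textbf{Phase 1 (initialization).} On the first $\beta_1 m$ columns, with $\beta_1=\delta/(2\alpha)$, we set $y_t=-x_t$ and choose $x_t$ arbitrarily, for instance uniformly at random. Then $\vecx^\top\vecy\le n-2\cdot\delta n/2+o(n)$, and the same bound holds for $-\vecx^\top\vecy$ as long as fewer than $n(1-\delta/2)$ columns are used in total. After this phase, each of $(\matB\vecx)_i$ and $(\matB\vecy)_i$ is a centered Gaussian of variance $\beta_1$. Their minimum $s_i$ therefore satisfies $\E[e^{-\lambda s_i}]\le 2e^{\lambda^2\beta_1/2}$, so $\Phi(\vecs_0)=O(e^{O(\beta_1)})$ with constant probability by Markov.

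\textbf{Phase 2 (Push).} On each later column $\veca$ we set $x_t=y_t=\sign\bigl(\frac1m\sum_i a_i e^{-\lambda s_i}\bigr)$ with $\lambda=\sqrt{2/\pi}$. The identity $\min(u+za,v+za)=\min(u,v)+za$ means this acts on the minimum state $\vecs$ exactly as the single-vector update does. Since the column is fresh and independent of $\vecs_t$, we Taylor-expand $e^{-\lambda a_i x}$ to second order with an $O(|a_i|^3)$ remainder. The linear term contributes $-\lambda\,\E\bigl|\sum_i a_i e^{-\lambda s_i}\bigr|/m=-\lambda\sqrt{2/\pi}\,\frac{1}{m}\bigl(\frac1m\sum_i e^{-2\lambda s_i}\bigr)^{1/2}\cdot\frac{1}{\sqrt{m}}\cdot\sqrt{m}$. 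Cauchy--Schwarz then gives $\sqrt{\frac1m\sum e^{-2\lambda s_i}}\ge\Phi$, and the quadratic term costs $\frac{\lambda^2}{2m}\Phi$. Together these yield the stated drift $\E[\Phi(\vecs_{t+1})\mid\vecs_t]\le(1-\gamma/m)\Phi(\vecs_t)$ up to $O(m^{-3/2})\Phi$ error terms. Iterating over $T=Cm(1+\rho)$ steps gives $\E\Phi_T\le e^{-\gamma C(1+\rho)}\E\Phi_0$. By Markov, the number of coordinates with $s_i<2\rho$ is then at most $m e^{2\lambda\rho}\Phi_T\le \epsilon m$ with constant probability, where $\epsilon$ is as small as desired after choosing $C$ large.

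\textbf{Phase 3 (refinement, FocusedPush).} This phase is the main obstacle. We must lift the residual $\epsilon m$ bad coordinates above $\rho$ without letting the good coordinates, which sit at margin $\ge2\rho$, drop below $\rho$. We run the same potential rule restricted to the bad set $W$, with the potential $\Phi_W=\frac1m\sum_{i\in W}e^{-\lambda' (s_i-\rho)}$ using a larger $\lambda'$. Over $L=C'\epsilon m\log(1/\epsilon)$ further columns, the selected signs depend only on the $W$-rows. The increments on the rows outside $W$ are then sums of $L$ terms of the form $x_t a_{ti}$, and the $x_t$ are determined by other rows, hence independent of $a_{ti}$ because rows of $\matA$ are independent. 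These sums are therefore exactly centered Gaussian of variance $L/m=O(\epsilon\log(1/\epsilon))$. A union bound fails over all $m$ rows, so instead we choose $\epsilon$ small enough that the Gaussian tail $e^{-\rho^2/(2L/m)}$ (or $e^{-\Omega(1/(\epsilon\log))}$ when $\rho$ is small) is $\ll\epsilon$. Each round then creates at most $\epsilon^2 m$ new bad rows, and iterating over geometrically shrinking rounds costs total $O(\epsilon m\log)$ columns. We finish the last $O(\log n)$-sized residue with Kim--Roche style handling, or by the drift argument at that scale. The delicate points are controlling the geometric recursion and the dependence through $W$; if the independence argument breaks, we would fall back to a round-by-round re-selection of $W$ with fresh columns.

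\textbf{Budget.} The total number of columns used is $\beta_1m+Cm(1+\rho)+O(\epsilon m)=\delta n/2+O(\alpha n+\kappa\sqrt\alpha\, n)$, since $\rho m=\kappa\sqrt\alpha\,n$. This is less than $n(1-\delta/2)$ once $\delta_0$, $c$ and $\alpha_0(\delta)$ are chosen small. We set all remaining coordinates equal in $\vecx$ and $\vecy$, choosing their signs by continuing the drift rule, which only helps the potential; the overlap stays in $[-(1-\delta)n,(1-\delta)n]$. Rescaling gives $(\matA\vecx)_i,(\matA\vecy)_i\ge\sqrt\alpha\rho=\kappa$. Intersecting the constant-probability events from Markov completes the proof.
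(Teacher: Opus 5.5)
Your outline matches the paper's: normalize by $\matB=\matA/\sqrt{\alpha}$, create the overlap gap with an opposite-sign block, run the potential rule on the pointwise minimum via $\min(u+za,v+za)=\min(u,v)+za$, then refine. Your observation that rows outside the focused set receive exactly centered Gaussian increments is also the paper's argument.

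\textbf{The gap is Phase 3.} It carries the constant-rate claim, and you leave it as a sketch with a fallback. Two ingredients are missing.

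First, nothing bounds the rows of $W$ that are still bad after a round. Phase 2 controls only $\Phi_{\sqrt{2/\pi}}$, which permits individual coordinates at depth of order $\log m$. So a potential on $W$ with larger $\lambda'$ can start polynomially large in $m$. Meanwhile the Cauchy--Schwarz drift estimate certifies at most an $\epsilon^{O(C')}$ decrease over $L=C'\epsilon m\log(1/\epsilon)$ steps. Your recursion only counts good rows turning bad.

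Second, the slack gets used up. Rows that survive a round are only known to lie above the target (or a fixed threshold). The next round's fluctuations of size $\sqrt{L/m}$ then push a constant fraction of the rows near that threshold back below it. You need all of the following:
\begin{itemize}
\item a decreasing schedule of safety thresholds whose decrements dominate each round's fluctuation scale;
\item round lengths tied to the current active-set size;
\item failure probabilities summing to less than one;
\item a terminal step. Appealing to Kim--Roche for an $O(\log n)$ residue does not supply this.
\end{itemize}

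The paper's \emph{FocusedPush} does exactly this:
\begin{itemize}
\item After the drift phase and the shift $\vecs\gets\vecs-\rho\mathbf{1}$, one has $\Phi\le e^{-499\lambda}$, a constant cushion above the target.
\item The potential remains $\Phi_{\sqrt{2/\pi}}$ in the original coordinates, so the invariant transfers between scales.
\item The critical set (the $m$ smallest entries) is halved at each scale.
\item Scale $m$ runs $\lceil 6\gamma^{-1}m_0^{5/8}m^{3/8}\rceil$ steps. At rate $\gamma'/\sqrt{mm_0}$ this gives decay $e^{-6\lambda(m_0/m)^{1/8}}$.
\item The thresholds $b(m)=500(m/m_0)^{1/8}$ make each outside entry's allowed drop, about $41(m/m_0)^{1/8}$, large relative to its fluctuation variance, about $6\gamma^{-1}(m/m_0)^{3/8}$. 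So an outside entry fails only with probability $e^{-\Theta((m_0/m)^{1/8})}$.
\end{itemize}
The invariant ``all $b(m)$-unsafe entries lie in $S$ and $\Phi(\vecs|_S)\le e^{-\lambda(b(m)-1)}$'' then propagates. At $|S|=2$ the total-potential bound forces both entries positive, and the whole phase uses $O(m_0)$ columns.

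\textbf{Further points to repair.}

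\emph{Ordering.} You refine first and then fill the leftover columns with the drift rule. But that rule decreases $\Phi$ only in expectation. With $T_{\mathrm{fill}}=\Theta(n)$ leftover columns, the potential bounds the expected number of coordinates below $\rho$ only by $m\,e^{-\Theta(T_{\mathrm{fill}}/m)}\ge m\,e^{-O(1/\alpha)}$. This is not $o(1)$ at fixed $\alpha$, so the margins you just secured are not protected. The paper instead reserves the final $L_{\mathrm{FP}}$ columns for refinement and lets the drift phase run through column $n-L_{\mathrm{FP}}$.

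\emph{Phase 2 length.} $T=Cm(1+\rho)$ with a universal $C$ cannot absorb the initial potential $2e^{\lambda^2\beta_1/2}$, since $\beta_1=\delta/(2\alpha)$ is unbounded as $\alpha\to0$ at fixed $\delta$. You need $T\gtrsim m(1+\beta_1+\rho)$ additively. Letting $C$ grow with $\beta_1$ instead would cost $\beta_1\rho m=\delta\kappa n/(2\sqrt{\alpha})$, which exceeds $n$ for small $\alpha$ when $\kappa$ is near $c/\sqrt{\alpha}$. The additive term contributes $\Theta(\delta n)$ columns, which your budget omits. This is the actual reason $\delta_0$ must be small.

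\emph{Exact overlap.} Use $D=\lceil\delta n/2\rceil$ disagreements, so that $|\vecx^\top\vecy|=n-2D\le(1-\delta)n$ holds exactly rather than up to $o(n)$.
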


The proof works in the normalized matrix $\matB=\matA/\sqrt{\alpha}$ and targets margin $\rho=\kappa/\sqrt{\alpha}$. The collision finder first creates the desired overlap gap by assigning opposite signs on an initial block, and then assigns the same signs to $\vecx$ and $\vecy$ on all remaining coordinates. The state tracked after initialization is the pointwise minimum $\min(\matB\vecx,\matB\vecy)$. A drift phase applies \(Push\) to reduce the one-sided exponential potential of this minimum state, and \(FocusedPush\) clears the remaining low-margin coordinates. The column budget is kept below $n$ by taking $\delta$, $\kappa\sqrt{\alpha}$, and $\alpha$ sufficiently small.

Because the drift analysis acts row-by-row, the algorithmic bookkeeping is measured relative to the output dimension $m$. Since $m=\alpha n$, a disagreement on a $\beta m$-sized set of coordinates corresponds to an overlap deficit of $2\alpha\beta+o(1)$ on the $n$-scale.

The drift analysis is most naturally stated in the normalization where the online columns have law $\mathcal{N}(\mathbf{0},\matI_m/m)$. To apply it to the paper's convention $\matA_{ij}\sim\calN(0,1/n)$, write $\matB=\matA/\sqrt{\alpha}$, so that $\matB_{ij}\sim\calN(0,1/m)$. If the algorithm constructs $\vecx,\vecy$ with $\min_i(\matB\vecx)_i,\min_i(\matB\vecy)_i\ge\rho$, then the original matrix satisfies $\min_i(\matA\vecx)_i,\min_i(\matA\vecy)_i\ge\sqrt{\alpha}\rho$. Thus, to obtain margin $\kappa$ for $\matA$, the $\matB$-normalized analysis targets $\rho=\kappa/\sqrt{\alpha}$.
Since the normalized process has \(n=m/\alpha\) available columns, a target \(\rho\) costs \(O(\rho m)\) drift steps, which is feasible as long as \(\alpha\rho=\kappa\sqrt{\alpha}\) is sufficiently small.

\subsection{A Simple Online Algorithm}
At the heart is an online search algorithm in the normalized model. Let $\matM\in\bbR^{m\times n}$ have independent columns with law $\calN(\boldsymbol{0},\matI_m/m)$. The goal is to choose $\vecx\in\{-1,1\}^n$ so that $\matM\vecx$ is not only entrywise positive, but bounded away from zero. The algorithm maintains a current state of the form $\vecs=\matM\vecx$ (and possibly a few extra bits of information). Upon seeing the next column $\veca$ of $\matM$ it decides on the corresponding sign in $\vecx$. The sign is effectively chosen to minimize the potential,
\[ 
    \Phi(\vecs)=\E_{I\sim \mathrm{Unif}([m])}\!\big[\psi(s_I)\big]
=\frac{1}{m}\sum_{i=1}^m \psi(s_i)\,,
\]
for a suitable ``energy function'' $\psi$, where the expectation is over a uniformly random coordinate $I\in[m]$. The wisdom of our choice $\psi(s) = e^{-\lambda s}$, where $\lambda = \sqrt{2/\pi}$, will become apparent shortly.

\noindent Given the current state $\vecs \in \bbR^m$, upon reading input $\veca \in \bbR^m$, the potential updates to,
\[ 
\Phi(\vecs') = \Phi(\vecs \pm \veca) = \E[\psi(s \pm a)] = \Phi(\vecs) \pm \E\left[\psi'(s)\,a\right] + \frac12 \E\left[\psi''(s)\,a^2\right] \pm \frac16 \E\left[\psi'''(s)\,a^3\right] + \cdots\,,
\]
assuming $\psi$ is analytic at $s$. All other expectations $\E[\cdot]$ in this section are over the randomness of the current Gaussian column $\veca$
(and any internal randomness of the algorithm), conditioned on the current state $\vecs$ unless stated otherwise. To simplify the analysis, we will choose the sign so as to minimize the leading term $\E[\psi'(s)a]$ instead of $\Phi(\vecs')$ itself.  That choice is $x_j = -\sign \E[\psi'(s)a]$.  The change in potential is,
\begin{equation}
\label{eq:expansion}
\Phi(\vecs') - \Phi(\vecs) = -\big\lvert{\E[\psi'(s)\,a]}\big\rvert + \frac12 \E[\psi''(s)\,a^2] + \sum_{i = 3}^\infty (\pm 1)^i \frac{1}{i!} \E[\psi^{(i)}(s)\,a^i]\,.
\end{equation}
To understand the typical behavior, we first analyze the expected change given $\vecs$. Heuristically, choosing the sign to minimize the linear term exploits the $1/\sqrt{m}$ fluctuations of $a$ to produce a systematic $1/m$ drop in $\Phi$, while the quadratic and higher-order terms contribute only smaller, controlled corrections.  As $\veca$ is $m$-dimensional normal with entrywise variance $1/m$, the key quantity $\E[\psi'(s)a]$ is a centered normal with variance $(1/m^2)\E[\psi'(s)^2]$. The leading term is the absolute value of this variable with a minus sign in front.  Its mean value is,
\[ 
    -\E\!\left[\left|\E[\psi'(s)\,a \mid \vecs]\right|\right]
    = -\sqrt{\frac{2}{\pi}}\,\frac{\sqrt{\E\left[\psi'(s)^2\right]}}{m}\,.
\]
The next term $\E[\psi''(s)\,a^2]/2$ contributes $+\frac{1}{2m}\E[\psi''(s)]$, since $\E[a^2]=1/m$ and $s$ is independent of $a$. For analytic $\psi$ with subexponential growth, the remainder is $O\!\left(\E[|\psi^{(4)}(s)|]/m^2\right)$ by \Cref{lem:gaussian-estimates}, using $\E|a|^k=\Theta(m^{-k/2})$. Intuitively, the linear term in the expansion enforces a negative drift (pulling the margins outward), while the quadratic term captures variance from the Gaussian increments and is controlled by moment bounds; the remaining higher-order terms are $O(1/m^2)$ for our choice of $\psi$. In summary, the average change in potential is,
\[
\E\!\left[\Phi(\vecs') - \Phi(\vecs)\,\middle|\,\vecs\right] =
\frac{1}{m}\biggl(-\sqrt{\frac{2}{\pi}} \sqrt{\E\left[\psi'(s)^2\right]} + \frac12  \E\left[\psi''(s)\right]
\biggr) + O\left(\frac{\E\left[\psi^{(4)}(s)\right]}{m^2}\right). 
\]
 Intuitively, the first term captures the outward pull from small margins, the second is a variance correction that stays strictly smaller at our choice of $\psi$, and the remainder vanishes at rate $1/m$. The potential $\psi$ should be decreasing to reward positivity.  To control outliers, it should be positive.  Thus its derivative $\psi'$ should be decreasing, and the second derivative $\psi''$ should be positive.  Then progress is made as long as the term $\sqrt{2/\pi} \sqrt{\E\left[\psi'(s)^2\right]}$ wins over $\E\left[\psi''(s)\right]$ in the battle of drifts.

For exponential potentials $\psi(x) = e^{-\lambda x}$, this is guaranteed as long as $0 < \lambda < \sqrt{8/\pi}$ because,
\[
\begin{aligned}
    &\sqrt{\frac{2}{\pi}} \sqrt{\E\left[\psi'(s)^2\right]}
    - \frac12 \E\left[ \psi''(s)\right] \\
    &\quad=
    \sqrt{\frac{2}{\pi}} \lambda \sqrt{\E\left[e^{-2\lambda s}\right]}
    - \frac12 \lambda^2 \E\left[e^{-\lambda s}\right]
    \geq
    (\sqrt{2/\pi} - \lambda / 2) \lambda \E\left[e^{-\lambda s}\right]\,,
\end{aligned}
\]
by convexity of moments.  The drift is optimized when $\lambda = \sqrt{2/\pi}$. In effect, the updates harvest many tiny $1/\sqrt{m}$ nudges into a uniform multiplicative decay of $\Phi$, which is exactly what ultimately clears the low-margin coordinates.

The algorithms below use three related routines. \(Push(m)\) is the basic one-step potential descent rule in an \(m\)-dimensional normalized state. \(Push(m,m_0)\) is the same rule when the active set has size \(m\) but the ambient Gaussian increments have variance \(1/m_0\). \(FocusedPush(m_0)\) repeatedly applies this scaled rule to the currently smallest coordinates, while updating all coordinates with the chosen signs.

It is useful to isolate the one-step sign rule used by \(Push\). For a state \(\vecs\in\mathbb R^m\) and a fresh column \(\veca\in\mathbb R^m\), define
\[
PushSign(\vecs,\veca)
=
\sign\!\left(\frac1m\sum_{i=1}^m a_i e^{-\lambda s_i}\right).
\]
The corresponding update is \(\vecs\gets\vecs+PushSign(\vecs,\veca)\veca\).

\begin{pseudocode}
{\bf Algorithm} \(Push(m)\) \\
{\bf State} $\vecs = (\dots, s, \dots) \in \mathbb{R}^m$ \\
On input $\veca \sim (1/\sqrt{m})\calN(\boldsymbol{0}, \matI_m)$, \\
{\small 1} \> Output \(x = PushSign(\vecs,\veca)\). \\
{\small 2} \> Update $\vecs$ to $\vecs + x\veca$. 
\end{pseudocode}

\noindent The average potential $\Phi_\lambda$ then satisfies the following recurrent inequality:

\begin{claim} 
\label{claim:recurrence}
When $\psi(s) = e^{-\lambda s}$, assuming $0 < \lambda \leq \sqrt{2/\pi}$,
\[ 
    \E\!\left[\Phi_\lambda(\vecs')\,\middle|\,\vecs\right] \leq \biggl(1 - \frac{\gamma \lambda}{m}\biggr)\Phi_\lambda(\vecs)\,,
\]
where $\gamma = \sqrt{2/\pi} - \sqrt{\pi/2}(\exp(1/\pi) - 1) > 0.328$.
\end{claim}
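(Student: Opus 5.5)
The plan is to compute the one-step change exactly rather than through the truncated Taylor series, splitting the exponential into its even and odd parts. Write $w_i=e^{-\lambda s_i}>0$, so $\Phi_\lambda(\vecs)=\frac1m\sum_i w_i$. Let $S=\sum_i w_i a_i$, so that the chosen sign is $x=\sign(S)$. Then
\[
\Phi_\lambda(\vecs+x\veca)=\frac1m\sum_i w_i\cosh(\lambda a_i)-\frac{x}{m}\sum_i w_i\sinh(\lambda a_i).
\]
I would bound the even (cosh) part from above and the odd (sinh) part from below.

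\emph{Odd part.} I would show $\E[x\sinh(\lambda a_i)]\ge \lambda\,\E[x a_i]$ for each $i$ via Gaussian conditioning on $S$.
\begin{itemize}
\item Since $(a_i,S)$ are jointly Gaussian, we can write $a_i=c_iS+\xi_i$, where $\xi_i$ is a centered Gaussian independent of $S$ and $c_i=\operatorname{Cov}(a_i,S)/\operatorname{Var}(S)=w_i/\|\vecw\|_2^2\ge0$.
\item Because $\xi_i$ is symmetric, $\E[\sinh(\lambda(c_iS+\xi_i))\mid S]=\sinh(\lambda c_iS)\,\E\cosh(\lambda\xi_i)$.
\item The factor $x\sinh(\lambda c_iS)$ is nonnegative, since $c_i\ge0$ and $x=\sign(S)$. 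Combined with $\E\cosh\ge1$ and $\sinh u\ge u$ for $u\ge0$, this gives $\E[x\sinh(\lambda a_i)\mid S]\ge \lambda c_i|S|=\lambda\,\E[xa_i\mid S]$.
\end{itemize}
Summing over $i$, the odd part contributes at most $-\frac{\lambda}{m}\E|S|$. Since $S\sim\calN(0,\|\vecw\|_2^2/m)$, this equals $-\frac{\lambda}{m}\sqrt{2/\pi}\,\|\vecw\|_2/\sqrt m$. By Cauchy--Schwarz, $\|\vecw\|_2/\sqrt m\ge \frac1m\sum_i w_i=\Phi_\lambda(\vecs)$, so the odd part contributes at most $-\frac{\lambda}{m}\sqrt{2/\pi}\,\Phi_\lambda(\vecs)$. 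This is the convexity-of-moments step from the text, now done without any Taylor error.

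\emph{Even part.} The cosh term does not depend on $x$, so I can compute it directly: $\E\cosh(\lambda a_i)=e^{\lambda^2/(2m)}$. I would use two convexity facts.
\begin{itemize}
\item The function $t\mapsto e^{ct}-1$ is convex and vanishes at $0$, so $e^{\lambda^2/(2m)}-1\le \frac1m\bigl(e^{\lambda^2/2}-1\bigr)$ for $m\ge1$.
\item The function $\lambda\mapsto e^{\lambda^2/2}-1$ is convex and vanishes at $0$. On $[0,\lambda_0]$ with $\lambda_0=\sqrt{2/\pi}$ it therefore lies below its chord, giving $e^{\lambda^2/2}-1\le \frac{\lambda}{\lambda_0}\bigl(e^{1/\pi}-1\bigr)=\lambda\sqrt{\pi/2}\,\bigl(e^{1/\pi}-1\bigr)$.
\end{itemize}
Hence the even part is at most $\bigl(1+\frac{\lambda}{m}\sqrt{\pi/2}(e^{1/\pi}-1)\bigr)\Phi_\lambda(\vecs)$.

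Adding the two bounds gives
\[
\E[\Phi_\lambda(\vecs')\mid\vecs]\le\Bigl(1-\frac{\lambda}{m}\bigl(\sqrt{2/\pi}-\sqrt{\pi/2}(e^{1/\pi}-1)\bigr)\Bigr)\Phi_\lambda(\vecs),
\]
which is exactly the claimed factor $1-\gamma\lambda/m$. The numerical bound $\gamma>0.328$ is then a direct evaluation.

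The step I expect to be delicate is the odd part. The sign $x$ depends on $\veca$, so the cubic and higher odd Taylor terms are not mean zero and cannot be discarded by symmetry. The conditioning-on-$S$ argument with $c_i\ge0$ is what makes these terms favorable instead of merely $O(m^{-3/2})$. If that step failed, the fallback would be to bound them by $O(m^{-3/2})\Phi_\lambda$ and absorb them into the slack in $\gamma>0.328$ for large $m$. That would give only an asymptotic version of the claim, not the exact inequality.
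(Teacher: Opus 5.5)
Your proof is correct and is essentially the paper's argument. The paper likewise writes $a_i$ as a nonnegative multiple of the sign-determining Gaussian plus independent noise, which shows the higher odd Taylor terms only help; it sums the even terms to $(e^{\lambda^2/2m}-1)\Phi_\lambda(\vecs)$ and finishes with Cauchy--Schwarz and a convexity bound on the exponential. Your $\cosh/\sinh$ split is a closed-form packaging of that term-by-term Taylor argument, and your two chord bounds reproduce the paper's single convexity step on $[0,1/\pi]$.
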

\begin{proof}
For the exponential potential, set \(w_i=e^{-\lambda s_i}\) and
\[
Z=\frac1m\sum_{i=1}^m w_i a_i,\qquad x=\sign(Z).
\]
The linear term in \(\Phi_\lambda(\vecs+x\veca)-\Phi_\lambda(\vecs)\) is \(-\lambda |Z|\).
We next justify that the adaptive sign does not increase the nonlinear remainder. Fix \(i\) and \(k\ge1\). Since \((a_i,Z)\) is jointly Gaussian and \(\mathrm{Cov}(a_i,Z)=w_i/m^2\ge0\), we may write
\[
a_i=\rho_i m^{-1/2}G+\sqrt{1-\rho_i^2}\,m^{-1/2}H,
\qquad
\sign(Z)=\sign(G),
\]
where \(\rho_i\in[0,1]\) and \(G,H\) are independent standard Gaussians. Expanding in powers of \(H\), only even powers of \(H\) survive, and \(\sign(G)G^{2\ell+1}=|G|^{2\ell+1}\), so
\[
\E[\sign(Z)a_i^{2k+1}\mid \vecs]\ge0.
\]
Thus all odd Taylor terms beyond the linear one have nonpositive contribution. Dropping them and summing the even terms gives
\[
\begin{aligned}
\E\!\left[\Phi_\lambda(\vecs')-\Phi_\lambda(\vecs)\,\middle|\,\vecs\right]
&\le
-\lambda\,\E[|Z|\mid\vecs]
+\frac1m\sum_{i=1}^m w_i
\sum_{k\ge1}\frac{\lambda^{2k}}{(2k)!}\E[a_i^{2k}]\\
&=
-\lambda\,\E[|Z|\mid\vecs]
+\left(\exp(\lambda^2/2m)-1\right)\Phi_\lambda(\vecs).
\end{aligned}
\]
The first term on the right hand side evaluates to,
\[
    -\lambda\,\E[|Z|\mid\vecs]
    =
    -\sqrt{\frac{2}{\pi}}\,\frac{\lambda}{m}\,
    \sqrt{\frac1m\sum_{i=1}^m e^{-2\lambda s_i}}
    \leq - \sqrt{\frac{2}{\pi}}\,\frac{\lambda}{m}\,\Phi_\lambda(\vecs) \,,
\]
by Cauchy--Schwarz.
Overall,
\[ 
\begin{aligned}
    \E\!\left[\Phi_\lambda(\vecs') - \Phi_\lambda(\vecs)\,\middle|\,\vecs\right]
    \leq
    \biggl(- \sqrt{\frac{2}{\pi}}\,\frac{\lambda}{m}
    + \exp\!\left(\lambda^2/2m\right) - 1\biggr)\Phi_\lambda(\vecs)\,.
\end{aligned}
\]
As $\lambda^2/2 \leq 1/\pi$ and $m \geq 1$, $\exp(\lambda^2/2m) - 1 \leq \pi(\exp(1/\pi) - 1)(\lambda^2/2m) \leq \sqrt{\pi/2}(\exp(1/\pi) - 1)(\lambda/m)$.  The desired bound follows.
\end{proof}

\begin{corollary}
\label{cor:supermart}
$\E\left[\Phi_\lambda(\vecs_t)\right] \leq (1 - \gamma \lambda /m)^t \Phi_\lambda(\vecs_0) \leq \exp\!\left(-\gamma\lambda t/m\right) \Phi_\lambda(\vecs_0)$, where $\vecs_t$ is the state after $t$ steps, assuming $\lambda^2 \leq 2/\pi$.
\end{corollary}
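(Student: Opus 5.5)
The plan is to iterate \Cref{claim:recurrence} using the tower property of conditional expectation, and then pass from the product bound to the exponential one with the elementary inequality $1-u\le e^{-u}$.

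Let $\vecs_0,\vecs_1,\ldots$ be the successive states produced by $Push(m)$. Each step reads a fresh Gaussian column $\veca_{t+1}$, independent of everything seen so far. So, conditional on the history up to time $t$, the state $\vecs_{t+1}$ is obtained from $\vecs_t$ by exactly the one-step update analyzed in \Cref{claim:recurrence}. Since $\lambda^2\le 2/\pi$ means $0<\lambda\le\sqrt{2/\pi}$, the claim applies and gives
\[
\E\!\left[\Phi_\lambda(\vecs_{t+1})\,\middle|\,\vecs_0,\ldots,\vecs_t\right]\le\Bigl(1-\frac{\gamma\lambda}{m}\Bigr)\Phi_\lambda(\vecs_t).
\]
Taking expectations and using the tower property yields $\E[\Phi_\lambda(\vecs_{t+1})]\le(1-\gamma\lambda/m)\,\E[\Phi_\lambda(\vecs_t)]$.

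To unroll this by induction on $t$, the factor $1-\gamma\lambda/m$ must be nonnegative, so that multiplying an inequality by it preserves direction. This holds because $\gamma<1$ and $\lambda\le\sqrt{2/\pi}<1$, so $\gamma\lambda/m<1$. Starting from a deterministic $\vecs_0$ (or taking expectation over $\vecs_0$), the induction gives the first inequality,
\[
\E[\Phi_\lambda(\vecs_t)]\le\Bigl(1-\frac{\gamma\lambda}{m}\Bigr)^t\Phi_\lambda(\vecs_0).
\]
The second inequality then follows from $1-u\le e^{-u}$ with $u=\gamma\lambda/m$, raised to the $t$-th power; both sides are nonnegative, so this is valid.

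The only point that needs care is that the claim's conditioning really is on the current state alone. The sign is chosen by a deterministic function of $(\vecs_t,\veca_{t+1})$, and $\veca_{t+1}$ is independent of the past. Hence conditioning on the full history is the same as conditioning on $\vecs_t$, and there is no genuine obstacle: this is a standard supermartingale iteration.
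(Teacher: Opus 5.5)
Your proof is correct and matches the paper's argument. The paper phrases the same iteration as the statement that $M_t=(1-\gamma\lambda/m)^{-t}\Phi_\lambda(\vecs_t)$ is a supermartingale and applies the tower property, which is your step-by-step unrolling of \Cref{claim:recurrence}; your checks that $1-\gamma\lambda/m>0$ and that $1-u\le e^{-u}$ gives the second inequality supply details the paper leaves implicit.
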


\begin{proof}
By~\Cref{claim:recurrence}, $\E[M_{t+1}\mid M_1,\ldots,M_t]\le M_t$, i.e.\ $M_t=(1 - \gamma\lambda /m)^{-t}\,\Phi_\lambda(\vecs_t)$ is a supermartingale. Applying the tower property of conditional expectation repeatedly gives $\E[M_t]\le\E[M_0]=M_0$.
\end{proof}

\begin{claim}\label{claim:push-tail}
Let \(\vecs_T\) be the state obtained by running \(Push(m)\) for \(T\) steps from an initial state \(\vecs_0\), with \(\lambda=\sqrt{2/\pi}\). Then,
\[
\E[\Phi(\vecs_T)]
\le
\exp\!\left(-\gamma'T/m\right)\Phi(\vecs_0).
\]
Consequently, for every threshold \(b\in\mathbb R\), the expected fraction of coordinates below \(b\) is at most,
\[
\exp\!\left(\lambda b-\gamma'T/m\right)\Phi(\vecs_0).
\]
\end{claim}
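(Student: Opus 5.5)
This is a direct consequence of \Cref{cor:supermart} followed by Markov's inequality applied coordinatewise. Here \(\gamma'\) is the product \(\gamma\lambda\) with \(\lambda=\sqrt{2/\pi}\). Taking \(\gamma>0.328\) from \Cref{claim:recurrence} gives \(\gamma'=\gamma\sqrt{2/\pi}>0.26\).

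\textbf{Step 1: the potential bound.} The choice \(\lambda=\sqrt{2/\pi}\) satisfies \(\lambda^2\le 2/\pi\), so \Cref{cor:supermart} applies with the potential \(\Phi=\Phi_\lambda\). Iterating over the \(T\) steps of \(Push(m)\) gives
\[
\E[\Phi(\vecs_T)]\le (1-\gamma\lambda/m)^T\Phi(\vecs_0)\le \exp(-\gamma' T/m)\,\Phi(\vecs_0),
\]
using \(1-u\le e^{-u}\).

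The only point to check is that the recurrence of \Cref{claim:recurrence} can be chained over \(T\) steps. At step \(t\) the fresh column \(\veca_{t+1}\) is independent of the history, so conditioning on \(\vecs_t\) (equivalently, on the filtration generated by the first \(t\) columns) makes the one-step bound valid. The tower property then yields the iterated inequality. If \(\vecs_0\) is random, the bound is read conditionally on \(\vecs_0\), or with \(\E[\Phi(\vecs_0)]\) on the right.

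\textbf{Step 2: the tail bound.} Fix \(b\in\mathbb R\). For any coordinate with \(s_i<b\), we have \(e^{-\lambda s_i}>e^{-\lambda b}\). Hence pointwise
\[
\mathbf 1\{s_i<b\}\le e^{\lambda b}e^{-\lambda s_i}.
\]
Averaging over \(i\in[m]\) gives
\[
\frac1m\#\{i: (\vecs_T)_i<b\}\le e^{\lambda b}\,\Phi(\vecs_T).
\]
Taking expectations and inserting Step 1 gives the claimed bound
\[
\exp(\lambda b-\gamma' T/m)\,\Phi(\vecs_0).
\]

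\textbf{Main obstacle.} There is no real obstacle. The one step needing care is the conditioning in the supermartingale chaining, which is already handled in \Cref{cor:supermart}.
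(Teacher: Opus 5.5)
Your proof is correct and matches the paper's argument: the potential bound is \Cref{cor:supermart} with $\gamma'=\gamma\lambda$. The tail bound comes from the pointwise inequality $\mathbf{1}\{s_i<b\}\le e^{\lambda b}e^{-\lambda s_i}$, averaged over coordinates, followed by taking expectations.
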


\begin{proof}
The first inequality is \Cref{cor:supermart}. If \(s_i<b\), then \(e^{-\lambda s_i}>e^{-\lambda b}\). Hence the fraction of coordinates below \(b\) is at most \(e^{\lambda b}\Phi(\vecs_T)\). Taking expectations gives the second bound.
\end{proof}

For example, from \Cref{claim:push-tail} with \(\vecs_0=\boldsymbol 0\) and \(b=0\), the expected fraction of negative entries after \(t\) steps is at most \(\exp\!\left(-\gamma't/m\right)\). Thus plain \(Push\) clears all negative coordinates with constant probability after \(O(m\log m)\) steps. Similarly, taking \(b=1\) shows that after \(O(m)\) steps all but a constant fraction of the coordinates have margin at least \(1\) in expectation. The limitation is that this still leaves open the possibility of a linear number of bad coordinates after only \(O(m)\) steps.

\subsection{A Constant Rate Algorithm}

The preceding analysis shows that plain \(Push\) steadily reduces the exponential potential, but by itself it gives only an inverse-logarithmic algorithmic rate if we insist that every coordinate becomes positive. The obstruction is that, after \(O(m)\) steps, a small but still linear set of low-margin coordinates may remain. The algorithm for constant algorithmic rate removes this loss by focusing the potential descent on the current low-margin tail.

In each stage, the entries of \(\vecs\) are split into critical and safe coordinates. The critical set consists of the \(m\) smallest entries at the current halving scale, and \(Push\) is applied only to this set when choosing the next sign. The same sign is then applied to all coordinates. Thus the algorithm spends most of its effort on the coordinates that still need help, while the safe coordinates are allowed to fluctuate.

As the size of the critical set changes from the initial $m_0$ to the ``current'' $m$, the potential function has to be scaled accordingly.  

\begin{pseudocode}
{\bf Algorithm} \(Push(m,m_0)\) \\
{\bf State} $\vecs\in\mathbb R^m$ \\
On input \(\veca\sim(1/\sqrt{m_0})\calN(\boldsymbol 0,\matI_m)\), \\
{\small 1} \> Set \(\lambda_{m,m_0}=\sqrt{2/\pi}\sqrt{m/m_0}\). \\
{\small 2} \> Set \(x\gets\sign(W)\), where \\
{\small 3} \> \> \(\displaystyle
W=
\frac1m\sum_{i=1}^m
\sqrt{\frac{m_0}{m}}\,a_i
\exp\!\left(-\lambda_{m,m_0}\sqrt{\frac{m_0}{m}}\,s_i\right)
\). \\
{\small 4} \> Update \(\vecs\gets\vecs+x\veca\).
\end{pseudocode}
Equivalently, this is the one-step rule of \(Push(m)\) applied to the rescaled state \(\sqrt{m_0/m}\vecs\) and rescaled input \(\sqrt{m_0/m}\veca\).
The state \(\vecs_t\) of \(Push(m, m_0)\) in step \(t\) satisfies
\begin{equation}
\label{eq:pushevolve}
\E\left[\Phi(\vecs_t)\right] \leq \exp\!\left(-\gamma' t/\sqrt{m m_0}\right)\,\Phi(\vecs_0).
\end{equation}

\begin{proof}[Proof of~\eqref{eq:pushevolve}]
Let $\lambda = \sqrt{2/\pi} \sqrt{m/m_0}$. By \Cref{cor:supermart}, 
\[
\E\left[\Phi_\lambda\left(\sqrt{m_0/m}\,\vecs_t\right)\right]
\leq
\exp\!\left(-\gamma' t/\sqrt{m m_0}\right)\,\Phi_\lambda\left(\sqrt{m_0/m}\,\vecs_0\right).
\]
By definition of $\Phi$, \(\Phi_\lambda\left(\sqrt{m_0/m}\,\vecs\right)
= \Phi_{\lambda\sqrt{m_0/m}}(\vecs)
= \Phi_{\sqrt{2/\pi}}(\vecs)\).
\end{proof}

\noindent For a vector $\vecx \in \mathbb{R}^{m_0}$ and subset of entries $S$, $\vecx|_S \in \mathbb{R}^S$ will denote the projection of $\vecx$ onto $S$. 

\noindent In the description and analysis of \(FocusedPush\), $m$ denotes the current scale in the halving sequence (ranging from $m_0$ down to $2$); the original output dimension $m_0=\alpha n$ is fixed throughout.

\noindent For notational clarity in the description of \(FocusedPush\), assume \(m_0\) is a power of two, so that the active set size can be halved cleanly. For general \(m_0\), one may round the active set sizes to integers; this changes only universal constants and none of the asymptotic conclusions.

\noindent We distinguish between the nonterminal scales \(m=m_0,m_0/2,\ldots,4\), where the proof maintains an invariant for the next critical set of size \(m/2\), and the terminal scale \(m=2\), where the remaining two critical coordinates are pushed above zero directly.

\begin{pseudocode}
{\bf Algorithm} \(FocusedPush(m_0)\) \\
{\bf State} $\vecs \in \mathbb{R}^{m_0}$ \\
{\small 1} \> For \(m=m_0,m_0/2,\ldots,2\), \\
{\small 2} \> \> Let $S$ be the coordinates of the $m$ smallest entries in $\vecs$. \\
{\small 3} \> \> For $t = \lceil 6\gamma^{-1}m_0^{5/8}m^{3/8}\rceil$ steps, \\
{\small 4} \> \> \> On input \(\veca\sim(1/\sqrt{m_0})\calN(\boldsymbol 0,\matI_{m_0})\), \\
{\small 5} \> \> \> \> Run \(Push(m,m_0)\) on \(\vecs|_S,\veca|_S\), obtaining sign \(x\). \\
{\small 6} \> \> \> \> Update all coordinates with the same sign: \(\vecs\gets\vecs+x\veca\).
\end{pseudocode}

\noindent Let \(L_{\mathrm{FP}}(m_0)\) denote the deterministic number of update steps used by \(FocusedPush(m_0)\) with the rounded loop counts above. Under the power-of-two convention,
\[
\begin{aligned}
    L_{\mathrm{FP}}(m_0)
    &=
    \sum_{k=0}^{\log_2 m_0-1}
    \left\lceil
    6\gamma^{-1}m_0^{5/8}(m_0 2^{-k})^{3/8}
    \right\rceil \\
    &\le
    \sum_{k=0}^{\log_2 m_0-1}
    \left(6\gamma^{-1}m_0\,2^{-3k/8}+1\right)
    \le Cm_0
\end{aligned}
\]
for all sufficiently large \(m_0\), where \(C<\infty\) is a universal constant.

\noindent For the rest of this section, $\lambda$ is fixed to $\sqrt{2/\pi}$.

\begin{theorem}[\(FocusedPush\)]
\label{thm:focusedpush}
Assuming $\Phi(\vecs_0) \leq \exp\!\left(-499\lambda\right)$, all entries of the final state of \(FocusedPush\) are positive with constant probability.
\end{theorem}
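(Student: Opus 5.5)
The plan is an induction over the halving scales $m=m_0,m_0/2,\ldots,2$, maintaining an invariant on the exponential potential of the current critical set. Write $S_m$ for the $m$ smallest coordinates at the start of stage $m$. Write $\Phi_m$ for the average of $e^{-\lambda\sqrt{m_0/m}\,s_i}$ over $i\in S_m$, i.e.\ the potential in the rescaled coordinates used by $Push(m,m_0)$. The invariant I aim for is $\Phi_m\le \exp(-c\,(m_0/m)^{1/8})$ for a suitable constant $c$ (with $c\approx 499\lambda$ at $m=m_0$, matching the hypothesis). More loosely, the aim is that the rescaled potential on the critical set is tiny and gets tinier as $m$ shrinks. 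At $m=m_0$ this is exactly the assumption $\Phi(\vecs_0)\le e^{-499\lambda}$.

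Within a stage, \eqref{eq:pushevolve} applies to $\vecs|_S$, since the update uses a fresh Gaussian column independent of the past. After $t=\lceil 6\gamma^{-1}m_0^{5/8}m^{3/8}\rceil$ steps the expected rescaled potential on $S$ contracts by $\exp(-\gamma' t/\sqrt{mm_0})=\exp(-6(m_0/m)^{1/8}\cdot\gamma'/\gamma)$. I then convert this into a statement about the next critical set of size $m/2$. By the Markov-type argument of \Cref{claim:push-tail}, the number of coordinates of $S$ below a threshold $b$ is at most $m\,e^{\lambda\sqrt{m_0/m}\,b}\Phi$. Choosing $b$ so that fewer than $m/2$ coordinates of $S$ lie below $b$ shows that the $m/2$ smallest coordinates overall have controlled potential. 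Indeed, the new critical set consists of coordinates of $S$, which are controlled, and of coordinates outside $S$. The latter started above every entry of $S$ and have only moved by a random walk.

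The main obstacle is exactly these \emph{safe} coordinates outside $S$. They are not steered by $Push$, and they receive the same signs, which are correlated with $\veca|_S$ but independent of $\veca|_{S^c}$. So each such coordinate performs a martingale walk with $t$ steps of variance $1/m_0$, hence displacement of order $\sqrt{t/m_0}=O((m/m_0)^{3/16})$. I would handle this with a union or Hoeffding/Gaussian tail bound (\Cref{lem:gaussian-estimates}), showing that few of them drop below the level of the $(m/2)$-th smallest entry of $S$. I also need to show that their potential at the new, larger scaling $\sqrt{m_0/(m/2)}$ stays below the invariant. The exponents $5/8,3/8$ are designed so that the contraction gain $(m_0/m)^{1/8}$ dominates both the loss $\log 2$ from halving, which changes the normalization and the averaging set, and the rescaling factor $\sqrt2$ in the exponent. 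For the bound on the surviving potential I would apply Jensen: raising a potential to the power $\sqrt2$ controls the rescaled one.

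The stage failure probabilities come from Markov applied to $\Phi_m$ and from the walk deviations. By the invariant these are bounded by $\exp(-c'(m_0/m)^{1/8})$ plus a constant only at the first stage, and they sum to a constant below $1$ over the geometric sequence of scales. At the terminal scale $m=2$, a potential below $e^{-\Omega(m_0^{1/8})}$ means both critical coordinates have $s_i>0$, and every other coordinate exceeds them. Hence all entries are positive with constant probability.
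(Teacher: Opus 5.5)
\textbf{There is a genuine gap, and it lies in your choice of invariant.}

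\emph{The potential.} You track the wrong quantity. The weights used by $Push(m,m_0)$ are $\exp(-\lambda_{m,m_0}\sqrt{m_0/m}\,s_i)=e^{-\sqrt{2/\pi}\,s_i}$ at every scale. Accordingly, \eqref{eq:pushevolve} bounds $\frac1m\sum_{i\in S}e^{-\sqrt{2/\pi}\,s_i}$, not your $\frac1m\sum_{i\in S}e^{-\lambda\sqrt{m_0/m}\,s_i}$, so the contraction you quote does not come from that lemma.

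\emph{The outside coordinates.} With exponent $\mu_m=\lambda\sqrt{m_0/m}$, the bound $\Phi_m\le e^{-c(m_0/m)^{1/8}}$ only certifies heights of order $(c/\lambda)(m/m_0)^{3/8}$. Meanwhile, each coordinate outside $S$ moves during the stage by an independent $\calN(0,t/m_0)$ shift, whose standard deviation $\asymp(m/m_0)^{3/16}$ is much larger at late scales. In potential terms, $\E e^{-\mu_{m/2}\sqrt{t/m_0}\,G}=e^{(2/\pi)t/m}=e^{\Theta((m_0/m)^{5/8})}$, which swamps your target $e^{-c(2m_0/m)^{1/8}}$. So you cannot show that outside coordinates entering $S'$ satisfy the invariant.

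\emph{The Jensen step.} It goes the wrong way. Convexity gives $\frac1k\sum_i w_i^{\sqrt2}\ge(\frac1k\sum_i w_i)^{\sqrt2}$, which is a lower bound on the rescaled potential. The only general upper bound, $\sum_i w_i^{\sqrt2}\le(\sum_i w_i)^{\sqrt2}$, loses a factor $k^{\sqrt2-1}$, and $e^{-\Theta((m_0/m)^{1/8})}$ cannot absorb this when $m$ is comparable to $m_0$.

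\emph{The other reading.} Reading $\Phi_m$ as the fixed-$\lambda$ potential does not help. A bound tightening like $e^{-c(m_0/m)^{1/8}}$ then demands rising heights $(c/\lambda)(m_0/m)^{1/8}$ from coordinates outside $S$, and these receive no drift at all.

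\emph{The terminal step.} The same mismatch breaks it. ``Every other coordinate exceeds them'' holds only at the start of the last stage. During its $t\asymp m_0^{5/8}$ steps, the $m_0-2$ other coordinates still move with standard deviation $\asymp m_0^{-3/16}$. Your invariant does not exclude such coordinates sitting at height $\asymp m_0^{-3/8}$, and one of them then ends negative with probability close to $1/2$.

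\textbf{The missing idea.} Keep $\lambda=\sqrt{2/\pi}$ fixed and let the invariant \emph{weaken} as $m$ shrinks, expressed through safety levels $b(m)=500(m/m_0)^{1/8}$: all $b(m)$-unsafe entries lie in $S$, and $\Phi(\vecs|_S)\le e^{-\lambda(b(m)-1)}$. The exponents $5/8,3/8$ are tuned to make four things work.
\begin{enumerate}
\item The level drop $b(m)-b(m/2)\asymp(m/m_0)^{1/8}$ dominates the walk scale $(m/m_0)^{3/16}$. This gives per-coordinate failure $e^{-44(m_0/m)^{1/8}}$, and a union bound over $\binom{m_0}{m/4}$ subsets follows.
\item The levels telescope down to $b(2)>0$.
\item The contraction $e^{-6\lambda(m_0/m)^{1/8}}$ from \eqref{eq:pushevolve}, combined with Markov, leaves at most $me^{-2\lambda}<m/4$ unsafe entries in $S$.
\item The new critical set consists of unsafe entries plus $b(m/2)$-safe fillers, each contributing at most $e^{-\lambda b(m/2)}$. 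The slack $4e^{-2\lambda}+1\le e^{\lambda}$ in the target then closes the induction.
\end{enumerate}
Finally, the last stage is treated separately, by a union bound showing that no $b(2)$-safe coordinate falls below $0$ during it.
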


Call an entry $s$ of $\vecs$ \emph{$b$-safe} if $s > b$.  In the following two claims $\vecs_0$ and $\vecs_t$ are the initial and final state in iteration $m$ of loop {\small 1}.

\begin{claim}
\label{claim:motion}
Assuming $\Phi(\vecs_0|_S) \leq \exp\!\left(-\lambda(b - 1)\right)$, $\Phi(\vecs_t|_S) \leq \exp\!\left(-\lambda(b + 2)\right)$, except with probability at most $\exp\!\left(-3\lambda (m_0/m)^{1/8}\right)$.
\end{claim}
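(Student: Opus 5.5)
Within one iteration the critical set \(S\) is fixed, and every step chooses its sign by \(Push(m,m_0)\) applied to \(\vecs|_S,\veca|_S\). The restricted state \(\vecs|_S\) therefore evolves exactly as a run of \(Push(m,m_0)\) on \(m\) coordinates. The coordinates outside \(S\) receive the same sign, but they do not influence how \(\vecs|_S\) evolves. So \eqref{eq:pushevolve} applies directly to \(\vecs|_S\) with \(t=\lceil 6\gamma^{-1}m_0^{5/8}m^{3/8}\rceil\) steps, giving
\[
\E\left[\Phi(\vecs_t|_S)\right]\le \exp\!\left(-\gamma' t/\sqrt{m m_0}\right)\Phi(\vecs_0|_S).
\]
Here I take \(\gamma'=\gamma\lambda\) from \Cref{claim:recurrence} (the rescaled drift), or simply \(\gamma\) up to constants; I would check which constant the loop count was calibrated for.

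Next I compute the exponent. We have \(t/\sqrt{m m_0}\ge 6\gamma^{-1}m_0^{5/8}m^{3/8}m^{-1/2}m_0^{-1/2}=6\gamma^{-1}(m_0/m)^{1/8}\), so the expected potential decays by a factor \(\exp(-6c\,(m_0/m)^{1/8})\), where \(c=\gamma'/\gamma\). With the hypothesis \(\Phi(\vecs_0|_S)\le e^{-\lambda(b-1)}\), this gives
\[
\E\left[\Phi(\vecs_t|_S)\right]\le \exp\!\left(-\lambda(b-1)-6c(m_0/m)^{1/8}\right).
\]

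Finally I apply Markov's inequality (\Cref{lem:first-moment}) at level \(a=e^{-\lambda(b+2)}\). This bounds the failure probability by
\[
\exp\!\left(3\lambda-6c(m_0/m)^{1/8}\right).
\]
The claim needs this to be at most \(\exp(-3\lambda(m_0/m)^{1/8})\), i.e. \(6c(m_0/m)^{1/8}\ge 3\lambda+3\lambda(m_0/m)^{1/8}\). Since \(m\le m_0\) forces \((m_0/m)^{1/8}\ge1\), it suffices that \(6c\ge 6\lambda\), i.e. \(c\ge\lambda\), where \(c\) is the decay rate per unit \(t/\sqrt{mm_0}\).

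The main obstacle is exactly this constant bookkeeping. With \(\gamma'=\gamma\lambda\) the factor \(6\gamma^{-1}\) gives \(c=\lambda\), and the inequality holds with equality in the worst case \(m=m_0\). If the decay rate instead carries an extra factor, I would absorb it either by enlarging the constant in the loop count or by using \((m_0/m)^{1/8}\ge1\) to trade slack between the \(3\lambda\) term and the main term. A second point to verify is that \(S\) is selected before the fresh columns of the iteration arrive. This keeps \(\vecs|_S\) adapted, so the supermartingale argument of \Cref{cor:supermart} applies conditionally on \(\vecs_0\) and \(S\).
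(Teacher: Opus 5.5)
Your proposal is correct and matches the paper's proof. Apply \eqref{eq:pushevolve} with $\gamma'=\lambda\gamma$ so that the loop count gives decay $\exp(-6\lambda(m_0/m)^{1/8})$. Use $(m_0/m)^{1/8}\ge 1$ to bound $\exp(-\lambda(b-1)-6\lambda(m_0/m)^{1/8})\le \exp(-\lambda(b+2))\exp(-3\lambda(m_0/m)^{1/8})$, and finish with Markov's inequality.

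Your hedging about the constant is unnecessary: $\gamma'=\lambda\gamma$ is exactly the calibration the paper uses. Your remark that $S$ is fixed before the iteration's columns arrive, so that the supermartingale argument applies conditionally, is a useful point the paper leaves implicit.
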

\begin{proof}
By~\eqref{eq:pushevolve}, 
\[ 
\begin{aligned}
\E\left[\Phi(\vecs_t|_S)\right]
&\leq \exp\!\left(-\lambda\left(b - 1 + 6(m_0/m)^{1/8}\right)\right) \\
&\leq \exp\!\left(-\lambda(b + 2)\right)\,\exp\!\left(-3\lambda (m_0/m)^{1/8}\right)\,.
\end{aligned}
\]
The conclusion follows from \Cref{lem:first-moment}.
\end{proof}

\begin{claim}
\label{claim:outliers}
For a nonterminal scale \(m\ge4\), assuming all entries of $\vecs_0$ outside $S$ are $b = 500(m/m_0)^{1/8}$-safe, (1) at most $m/4$ entries of $\vecs_t$ outside $S$ are not $500(m/2m_0)^{1/8}$-safe and (2) the unnormalized total potential of these unsafe entries is at most $m \exp\!\left(-\lambda (b + 2)\right)$ after the iteration, except with probability at most $0.007 \exp\!\left(-(m_0/m)^{1/8}\right)$.
\end{claim}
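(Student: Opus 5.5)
The proof tracks the coordinates outside $S$ during the $t=\lceil 6\gamma^{-1}m_0^{5/8}m^{3/8}\rceil$ steps of iteration $m$. On these coordinates the signs are chosen by looking only at $\veca|_S$. Since the columns are independent $\calN(\boldsymbol 0,\matI_{m_0}/m_0)$, the entries of $\veca$ outside $S$ are independent of everything that determines $x$. Conditional on $\vecs_0$ and on the sign sequence, each coordinate $i\notin S$ therefore evolves as
\[
s_{t,i}=s_{0,i}+\sum_{u\le t}x_u a_{u,i},
\]
where the increment $\sum_u x_u a_{u,i}\sim\calN(0,t/m_0)$ and these increments are independent across $i\notin S$. (Because $x_u$ is independent of $a_{u,i}$, each $x_ua_{u,i}$ is again a fresh $\calN(0,1/m_0)$.) Write $\sigma^2=t/m_0\approx 6\gamma^{-1}(m/m_0)^{3/8}$. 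This is the key structural observation, and it reduces both parts of the claim to Gaussian tail estimates for an unbiased random walk.

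For part (1), each $i\notin S$ starts above $b=500(m/m_0)^{1/8}$. It fails to be $b'$-safe, with $b'=500(m/2m_0)^{1/8}$, only if its Gaussian increment is below $-(b-b')$. We have $b-b'=500(1-2^{-1/8})(m/m_0)^{1/8}\approx 41.5\,(m/m_0)^{1/8}$, while $\sigma\approx\sqrt{6/\gamma}\,(m/m_0)^{3/16}\approx 4.3(m/m_0)^{3/16}$. So the ratio $u=(b-b')/\sigma$ is of order $10\,(m_0/m)^{1/16}\ge 10$. By \Cref{lem:gaussian-estimates}, each outside coordinate is unsafe with probability at most $p=e^{-u^2/2}$, which is at most $e^{-c(m_0/m)^{1/8}}$ with a large constant $c$. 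Since there are at most $m_0$ outside coordinates, the expected number of unsafe ones is at most $m_0p$. Markov's inequality (\Cref{lem:first-moment}) then bounds the probability that more than $m/4$ are unsafe by $4m_0p/m$. Because $p$ decays like $\exp(-c(m_0/m)^{1/8})$ with $c\gg 1$, the factor $m_0/m$ is absorbed and the bound falls below $0.0035\,e^{-(m_0/m)^{1/8}}$. If the Markov bound is too weak here, I would instead use a Chernoff bound over the independent coordinates.

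For part (2), I bound the expected unnormalized potential $\sum_{i\notin S}e^{-\lambda s_{t,i}}\mathbf 1\{s_{t,i}\le b'\}$. For a single coordinate started at $s_0>b$, a direct Gaussian computation gives
\[
\E\!\left[e^{-\lambda s_t}\mathbf 1\{s_t\le b'\}\right]\le e^{-\lambda s_0}e^{\lambda^2\sigma^2/2}\Pr\!\left[\calN(-\lambda\sigma^2,\sigma^2)\le-(s_0-b')\right],
\]
which is at most $e^{-\lambda b}e^{O(\sigma^2)}e^{-u^2/2}$ up to the shifted mean. Summing over at most $m_0$ coordinates gives $m_0e^{-\lambda b}e^{-u^2/2+O(\sigma^2)}$. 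Markov's inequality with threshold $m e^{-\lambda(b+2)}$ then bounds the failure probability by $(m_0/m)e^{2\lambda+O(1)}e^{-u^2/2}$, which is again at most $0.0035\,e^{-(m_0/m)^{1/8}}$. A union bound over parts (1) and (2) gives the total $0.007\,e^{-(m_0/m)^{1/8}}$.

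The main obstacle is the constant bookkeeping. One must check that $u^2/2\approx 50\,(m_0/m)^{1/8}$ beats $\log(m_0/m)$, the factor $e^{\lambda^2\sigma^2/2}$, the $e^{2\lambda}$, and $(m_0/m)^{1/8}$ itself, uniformly over all nonterminal scales including $m$ close to $m_0$, where $(m_0/m)^{1/8}$ is only $O(1)$. Since $\log(m_0/m)=8\log\big((m_0/m)^{1/8}\big)$ is dominated by $50\,(m_0/m)^{1/8}$, this should hold with room to spare given the generous constant $500$. I would verify the worst case $m=m_0$ explicitly, and check that using the conditional independence of the outside entries from the sign choices is legitimate. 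That step is valid because $S$ is fixed at the start of the iteration.
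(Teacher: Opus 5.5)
Your proposal is correct and follows the paper's proof. Because the signs depend only on $\veca|_S$, the outside increments are independent $\calN(0,t/m_0)$, so each outside entry fails to be $b'$-safe with probability at most $e^{-44R}$, where $R=(m_0/m)^{1/8}$; first-moment bounds then give (1) and (2). Your Markov bound on the number of unsafe entries is a simpler substitute for the paper's union bound over $\binom{m_0}{m/4}$ subsets, and your exponential-tilt computation replaces its Cauchy--Schwarz step.

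One bookkeeping correction: the tilt gives exactly $\E[e^{-\lambda s_t}\mathbf{1}\{s_t\le b'\}]\le e^{-\lambda b'}e^{-u^2/2}$ (valid since $b-b'\ge\lambda\sigma^2$). Compared with $e^{-\lambda b}e^{-u^2/2}$, you therefore pay a factor $e^{\lambda(b-b')}\approx e^{33/R}$, not $e^{O(\sigma^2)}$. Since $u^2/2\ge 44.6R$, the Markov bound in the worst case $m=m_0$ is still about $e^{2\lambda+33.1-44.6}\approx e^{-9.9}$, which is below the required $0.0035\,e^{-1}$.
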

\begin{proof}
The sign chosen by \(Push\) in this iteration depends only on \(\veca|_S\). Conditional on that sign, the entries of \(\veca\) outside $S$ remain independent centered Gaussians, so the relevant entries are shifted by independent centered normals of variance \(t/m_0\), where
\[
6\gamma^{-1}m_0^{5/8}m^{3/8}
\le t
\le (6+\gamma)\gamma^{-1}m_0^{5/8}m^{3/8}.
\]
Let
\[
u=500\left(1-2^{-1/8}\right)\left(m/m_0\right)^{1/8}.
\]
For an entry not to be \(500(m/2m_0)^{1/8}\)-safe, the shift must take value at most \(-u\). By \Cref{lem:gaussian-estimates} and the upper bound on \(t\), for any given entry this probability is at most \(\exp\!\left(-44 (m_0/m)^{1/8}\right)\). By a union bound the probability that at least $m/4$ offending entries exist is at most, writing \(R=(m_0/m)^{1/8}\ge1\),
\begin{align*}
\binom{m_0}{m/4}\,\exp\!\left(-44R\right)^{m/4}
&\leq \left(4eR^8\,\exp\!\left(-44R\right)\right)^{m/4} \\
&\leq \left(4eR^8\,\exp\!\left(-40R\right)\right)^{1/4}\exp\!\left(-R\right) \\
&\leq 0.005 \exp\!\left(-R\right)\,.
\end{align*}
The average unnormalized total potential of the unsafe entries is at most, 
\[
    m_0\,\E\left[\exp\!\left(-\lambda X\right)\boldsymbol{1}\{X \leq b-u\}\right]\,,
\]
for a normal $X$ of mean $b$ and variance $t/m_0$. Applying Cauchy--Schwarz as
\[
\E\!\left[\exp\!\left(-\lambda X\right)\boldsymbol{1}\{X\le b-u\}\right]
\le
\sqrt{\E\!\left[\exp\!\left(-2\lambda X\right)\right]}\,\sqrt{\Pr(X\le b-u)},
\]
and using the upper bound on $t$ and the Gaussian tail estimate, this is at most
\begin{align*}
m_0 \sqrt{\E\!\left[\exp\!\left(-2\lambda X\right)\right]}\,\sqrt{\Pr(X \leq b-u)}
&\leq m_0\,\exp\!\left(-\lambda b\right)\,\exp\!\left(\lambda^2 t/m_0\right)\,\exp\!\left(-22R\right) \\
&\leq m_0\,\exp\!\left(-\lambda b\right)\,\exp\!\left(13R^{-3}\right)\,\exp\!\left(-22R\right) \\
&\leq m\,\exp\!\left(-\lambda (b + 2)\right)\,0.002\,\exp\!\left(-R\right).
\end{align*}
In the second line we used \(\lambda^2=2/\pi\), \(\gamma>0.328\), and \(t\le (6+\gamma)\gamma^{-1}m_0^{5/8}m^{3/8}\). The last inequality has numerical slack: for \(R\ge1\),
\[
R^8\exp\!\left(2\lambda+13R^{-3}-21R\right)\le 0.002.
\]
The claim follows from \Cref{lem:first-moment}.
\end{proof}

\begin{claim}
\label{claim:mainfocusedpush}
For a nonterminal scale \(m\ge4\), assume that all entries of $\vecs_0$ outside $S$ are $b(m)$-safe and $\Phi(\vecs_0|_S) \leq \exp\!\left(-\lambda(b(m) - 1)\right)$ before iteration $m$, with $b(m) = 500(m/m_0)^{1/8}$. Let $S'$ be the set of $m/2$ smallest entries of $\vecs$ after this iteration. Then, except with probability $\exp\!\left(-3\lambda(m_0/m)^{1/8}\right) + 0.007 \exp\!\left(-(m_0/m)^{1/8}\right)$, (1) all $b(m/2)$-unsafe entries are contained in $S'$, and (2) $\Phi(\vecs|_{S'}) \leq \exp\!\left(-\lambda\left(b(m/2)-1\right)\right)$.
\end{claim}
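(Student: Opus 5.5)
The plan is to intersect the good events of \Cref{claim:motion} and \Cref{claim:outliers}, applied with $b=b(m)$, and then check (1) and (2) deterministically on that intersection. The hypothesis $\Phi(\vecs_0|_S)\le\exp(-\lambda(b(m)-1))$ is exactly the assumption of \Cref{claim:motion}. The hypothesis that all entries outside $S$ are $b(m)$-safe is exactly the assumption of \Cref{claim:outliers}, which is available because $m\ge4$ is nonterminal. A union bound over the two failure events gives the stated error probability $\exp(-3\lambda(m_0/m)^{1/8})+0.007\exp(-(m_0/m)^{1/8})$. On the good event we have three facts:
\begin{itemize}
\item[(a)] $\Phi(\vecs_t|_S)\le\exp(-\lambda(b+2))$, where $\Phi$ is normalized by $|S|=m$;
\item[(b)] at most $m/4$ entries outside $S$ fail to be $b(m/2)$-safe;
\item[(c)] these outside-unsafe entries have unnormalized potential at most $m\exp(-\lambda(b+2))$.
\end{itemize}

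For (1), let $U$ be the set of $b(m/2)$-unsafe entries of $\vecs_t$. I will show $|U|\le m/2$. Since $S'$ consists of the $m/2$ smallest entries, and every unsafe entry is smaller than every safe entry, this forces $U\subseteq S'$. By (a), the number of unsafe entries inside $S$ is at most $m\exp(\lambda b(m/2))\Phi(\vecs_t|_S)\le m\exp(-\lambda(b(m)-b(m/2)+2))\le m e^{-2\lambda}$, using $b(m/2)\le b(m)$. Since $\lambda=\sqrt{2/\pi}$, we have $e^{-2\lambda}\approx0.2<1/4$. Adding the at most $m/4$ outside entries from (b) gives $|U|<m/2$.

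For (2), bound the unnormalized potential of $S'$, which is $(m/2)\Phi(\vecs|_{S'})$. Split $S'$ into three groups:
\begin{itemize}
\item entries of $S'$ in $S$, which contribute at most $m\exp(-\lambda(b+2))$ by (a);
\item unsafe entries of $S'$ outside $S$, which contribute at most $m\exp(-\lambda(b+2))$ by (c);
\item safe entries of $S'$ outside $S$, which are $b(m/2)$-safe and so contribute at most $(m/2)\exp(-\lambda b(m/2))$.
\end{itemize}
Together this gives
\[
\Phi(\vecs|_{S'})\le 4\exp(-\lambda(b(m)+2))+\exp(-\lambda b(m/2)).
\]
Using $b(m)\ge b(m/2)$, this is at most $\exp(-\lambda b(m/2))\,(4e^{-2\lambda}+1)$. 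We need this to be at most $\exp(-\lambda(b(m/2)-1))=\exp(-\lambda b(m/2))\,e^{\lambda}$. That reduces to the numerical inequality $1+4e^{-2\lambda}\le e^{\lambda}$. With $\lambda\approx0.798$, the left side is about $1.81$ and the right side about $2.22$, so it holds.

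The main obstacle is the bookkeeping in (2), specifically the third group. The safe entries of $S'$ outside $S$ can cost as much as $\exp(-\lambda b(m/2))$ each, so the whole argument depends on the slack of one unit of $\lambda$ in the target bound $b(m/2)-1$. If the constant check failed, I would instead use the gap $b(m)-b(m/2)=500(1-2^{-1/8})(m/m_0)^{1/8}$. That gap is not uniformly bounded below, though, so the numerical inequality $1+4e^{-2\lambda}<e^{\lambda}$ is the route I would take first. It also needs checking that no entry is counted twice across the three groups.
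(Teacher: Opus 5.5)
Your proposal is correct and follows the paper's proof essentially step for step. You intersect the good events of Claims~\ref{claim:motion} and~\ref{claim:outliers}, count the $b(m/2)$-unsafe entries of $S$ via the potential to get at most $me^{-2\lambda}<m/4$ of them, and bound $\Phi(\vecs|_{S'})$ by splitting $S'$ into the same three groups (entries of $S$, unsafe entries outside $S$, safe filler), closing with $4e^{-2\lambda}+1\le e^{\lambda}$. The only cosmetic difference is that you apply the Markov-type count at threshold $b(m/2)$ and then use $b(m/2)\le b(m)$, whereas the paper compares directly with $\exp(-\lambda b(m))$; both give the same bound.
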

\begin{proof}
Assume the exceptional events from the above two claims do not occur. By \Cref{claim:motion}, at the end of iteration $m$, the total (unnormalized) potential in $S$ is at most $m\exp\!\left(-\lambda(b(m)+2)\right)$. An entry in $S$ is $b(m/2)$-unsafe only if $s_i < b(m/2) < b(m)$, hence its potential exceeds $\exp\!\left(-\lambda b(m)\right)$. If more than $me^{-2\lambda}$ entries of $S$ were $b(m/2)$-unsafe, their potential alone would exceed $me^{-2\lambda}\cdot\exp\!\left(-\lambda b(m)\right) = m\exp\!\left(-\lambda(b(m)+2)\right)$, contradicting the total potential bound; so at most $me^{-2\lambda}<m/4$ entries of $S$ are $b(m/2)$-unsafe. By~\Cref{claim:outliers}, at most $m/4$ entries of $\vecs_t$ outside $S$ are not $b(m/2)$-safe. Altogether there can be at most $m/2$ unsafe entries at the end of iteration $m$; since $S'$ is the set of the $m/2$ smallest entries, it contains all of them, establishing~(1). For~(2), $S'$ is contained in the union of the entries of $S$, the unsafe entries outside $S$, and $b(m/2)$-safe filler entries needed to bring the size to $m/2$. Their total potential is at most the contribution from $S$, at most $m\exp\!\left(-\lambda(b(m)+2)\right)$, plus the contribution of the unsafe entries outside $S$, also at most $m\exp\!\left(-\lambda(b(m)+2)\right)$ by \Cref{claim:outliers}, plus the potential of the safe filler entries, at most $(m/2)\exp\!\left(-\lambda b(m/2)\right)$. As $b(m)>b(m/2)$,
\[
\frac{m}{2}\,\Phi(\vecs|_{S'})
\;\le\;
2m\,\exp\!\left(-\lambda\left(b(m/2)+2\right)\right)
+\frac{m}{2}\,\exp\!\left(-\lambda b(m/2)\right)
\;\le\;
\frac{m}{2}\,\exp\!\left(-\lambda\left(b(m/2)-1\right)\right),
\]
where the last step uses \(4e^{-2\lambda}+1\le e^{\lambda}\) (numerically \(4\cdot0.20+1=1.80<e^{0.798}\approx2.22\)). Dividing by \(m/2\) gives \(\Phi(\vecs|_{S'})\le\exp\!\left(-\lambda\left(b(m/2)-1\right)\right)\).
\end{proof}

\begin{proof}[Proof of \Cref{thm:focusedpush}]
Apply \Cref{claim:mainfocusedpush} inductively over the halving scales \(m=m_0,m_0/2,\ldots,4\), with invariant: all $b(m)$-unsafe entries are contained in the active set $S$, and $\Phi(\vecs|_S)\le\exp\!\left(-\lambda(b(m)-1)\right)$. At $m=m_0$ the invariant holds: $S$ consists of all $m_0$ entries so the outside-$S$ condition is vacuous, and $\Phi(\vecs_0)\le\exp\!\left(-499\lambda\right)=\exp\!\left(-\lambda(b(m_0)-1)\right)$ by hypothesis. Before the final scale \(m=2\), the active set \(S\) of the two smallest coordinates contains all \(b(2)\)-unsafe entries and satisfies
\[
\Phi(\vecs|_S)\le \exp\!\left(-\lambda(b(2)-1)\right).
\]
The final push clears these remaining unsafe entries. \Cref{claim:motion} with \(m=2\) gives
\[
\Phi(\vecs_t|_S)\le \exp\!\left(-\lambda(b(2)+2)\right),
\]
except with probability at most \(\exp\!\left(-3\lambda (m_0/2)^{1/8}\right)\). Since \(|S|=2\), each of these two coordinates is at least \(b(2)+2-\lambda^{-1}\log 2>0\).

It remains to verify that the coordinates outside $S$ stay safe. At the final scale the sign depends only on \(\veca|_S\) with \(|S|=2\); since \(\veca|_S\) is independent of the entries of \(\veca\) outside $S$, the sign is independent of those increments, which are therefore independent centered Gaussians of variance \(t/m_0\), where
\[
t=\left\lceil 6\gamma^{-1}m_0^{5/8}2^{3/8}\right\rceil
\le (6+\gamma)\gamma^{-1}m_0^{5/8}2^{3/8}.
\]
Each coordinate starts above \(b(2)=500(2/m_0)^{1/8}\). Hence, by the same Gaussian tail estimate used in \Cref{claim:outliers} and a union bound, the probability that any coordinate outside $S$ falls below zero during the final scale is at most \(m_0\exp\!\left(-c m_0^{1/8}\right)\) for a universal constant \(c>0\). This is \(o(1)\), and is at most \(0.001\) for all sufficiently large \(m_0\). Thus, outside the exceptional events, every coordinate of the final state is positive.

The failure probability from the inductive scales and the final critical coordinates is at most the sum of,
\[
\exp\!\left(-3\lambda\right) + \exp\!\left(-3\lambda\,2^{1/8}\right) + \exp\!\left(-3\lambda\,2^{1/4}\right) + \cdots \leq 0.379\,,
\]
and
\[
0.007\left(\exp\!\left(-1\right) + \exp\!\left(-2^{1/8}\right) + \exp\!\left(-2^{1/4}\right) + \cdots\right) \leq 0.020\,.
\]
Together with the \(o(1)\) final-scale complement error, the success probability is bounded below by a positive universal constant for all sufficiently large \(m_0\).
\end{proof}

\subsection{Finding Collisions}
\label{subsec:finding-collisions}

We now adapt the margin-building procedure to collision finding. The algorithm first creates the desired Hamming separation by assigning opposite signs to \(\vecx\) and \(\vecy\) on an initial block of coordinates. After this block, every new coordinate receives the same sign in both vectors, so the overlap remains fixed. The relevant state is then the pointwise minimum,
\[
\vecs=\min(\matM\vecx,\matM\vecy),
\]
which evolves by the same update rule as a single margin vector whenever the two assignments use the same sign. Since \Cref{cor:supermart} applies even from a moderately large starting potential, we can run \(Push\) on this minimum state and then use \(FocusedPush\) for the final refinement.

\begin{pseudocode}
{\bf Algorithm} \(CollisionFinder(\matM,\theta;\beta_1)\) \\
{\bf Input} columns \(\veca_1,\ldots,\veca_n\) of \(\matM\in\mathbb R^{m\times n}\) \\
{\bf State} \(\vecs\in\mathbb R^m\), the pointwise minimum of the two current partial margin vectors, initially \(\vecs=\boldsymbol{0}\) \\[3pt]
{\small 1} \> Let \(L_{\mathrm{FP}}=L_{\mathrm{FP}}(m)\), and reserve the final \(L_{\mathrm{FP}}\) columns for refinement. \\
{\small 2} \> {\it Initialization:} \\
{\small 3} \> \> For $t=1,\ldots,\lfloor\beta_1 m\rfloor$, set $x_t\gets -1$ and $y_t\gets 1$. \\
{\small 4} \> \> Update \(\vecs\gets\min(\matM\vecx,\matM\vecy)\). \\[3pt]
{\small 5} \> {\it Drift phase:} \\
{\small 6} \> \> For $t=\lfloor\beta_1 m\rfloor+1,\ldots,n-L_{\mathrm{FP}}$, \\
{\small 7} \> \> \> On input column \(\veca_t\), set \(z_t\gets PushSign(\vecs,\veca_t)\). \\
{\small 8} \> \> \> Set \(x_t\gets z_t\) and \(y_t\gets z_t\), and update \(\vecs\gets\vecs+z_t\veca_t\). \\[3pt]
{\small 9} \> {\it Refinement:} \\
{\small 10} \> \> Update \(\vecs \gets \vecs - \theta \mathbf{1}\).\\
{\small 11} \> \> Run \(FocusedPush\) from \(\vecs\) on columns \(n-L_{\mathrm{FP}}+1,\ldots,n\), \\
\> \> using each output sign for both $\vecx$ and $\vecy$. \\[3pt]
{\bf Output} $(\vecx,\vecy)$.
\end{pseudocode}

In the ABP case $\theta=0$, if every entry of $\matM\vecx$ and $\matM\vecy$ is positive at termination then $\sign(\matM\vecx)=\sign(\matM\vecy)=\mathbf{1}_m$, so $(\vecx,\vecy)$ is a collision for ABP.

\begin{proof}[Proof of \cref{thm:algorithm}]
Let $\matB=\matA/\sqrt{\alpha}$ and set $\rho=\kappa/\sqrt{\alpha}$. Set \(D=\lceil \delta n/2\rceil\) and \(\beta_1=D/m\), and run \(CollisionFinder(\matB,\rho;\beta_1)\). Since the algorithm initializes for \(\lfloor\beta_1m\rfloor=D\) coordinates, the initial inner product is exactly \(\vecx^\top\vecy=n-2D\). For all sufficiently large \(n\), \(D\le n/2\), so \(\lvert\vecx^\top\vecy\rvert=n-2D\le(1-\delta)n\).

Let $\mathbf{s}$ be the pointwise minimum of $\matB\vecx$ and $\matB\vecy$ after these assignments. For each coordinate, the assigned part of $(\matB\vecx)_i$ is a Gaussian $G\sim\calN(0,\beta_1)$ and the assigned part of $(\matB\vecy)_i$ is $-G$, so $s_i=-|G|$. Hence,
\[
\E[\Phi(\mathbf{s})]
=
\E\!\left[\exp\!\left(\lambda |G|\right)\right]
\le 2\,\exp\!\left(\lambda^2\beta_1/2\right)\,,
\qquad \lambda=\sqrt{2/\pi}\,.
\]
	The parameter \(\beta_2\) below is used only in the analysis, as a lower bound on the number of drift steps available before the reserved refinement block. Take
	\[
	\beta_2
	=
	C\left(1+\beta_1+\rho\right),
	\]
	for a sufficiently large universal constant \(C\). Let \(L_{\mathrm{FP}}=L_{\mathrm{FP}}(m)\). The number of columns needed for initialization, \(\beta_2m\) drift steps, and the reserved refinement block is at most,
	\[
	    D+\beta_2m+L_{\mathrm{FP}}
	    \le
	    D+\beta_2m+Cm
	    \le
	    \left(\frac{\delta}{2}+\alpha(\beta_2+C)\right)n+1
	    \le
	    \left(\frac{\delta}{2}+C'\delta+C'\kappa\sqrt{\alpha}+C'\alpha\right)n,
	\]
	for a universal constant $C'$, with the additive \(+1\) absorbed for all sufficiently large \(n\). Choose $\delta_0>0$ small enough that $\delta/2+C'\delta<1/3$ for all $\delta<\delta_0$. Then choose $c>0$ small enough that $C'c<1/3$, and finally choose $\alpha_0(\delta)>0$ small enough that $C'\alpha<1/3$ for all $\alpha<\alpha_0(\delta)$. Since $\kappa\le c/\sqrt{\alpha}$, the number of used columns is at most $n$.
	Thus the actual drift phase in \(CollisionFinder\) has length at least \(\beta_2m\), because it runs through column \(n-L_{\mathrm{FP}}\). Applying \(Push\) for this many or more steps gives,
	\[
	    \E[\Phi] \le 2\,\exp\!\left(\lambda^2\beta_1/2 - \gamma'\beta_2\right)\,,
	\]
	where $\gamma'=\lambda\gamma>0$ is the fixed decay constant from \Cref{cor:supermart}. The choice of \(C\) above ensures,
	\[
	\E[\Phi] \le \frac12\,\exp\!\left(-499\lambda-\lambda\rho\right).
	\]
	Since the expectation is at most half of \(\exp\!\left(-499\lambda-\lambda\rho\right)\), \Cref{lem:first-moment} gives,
	\[
	\Pr\!\left[\Phi>\exp\!\left(-499\lambda-\lambda\rho\right)\right]\le \frac12.
	\]
	Thus, with probability at least \(1/2\), we have \(\Phi\le\exp\!\left(-499\lambda-\lambda\rho\right)\) before the refinement. During the drift phase and the refinement, every new column receives the same sign in $\vecx$ and $\vecy$, so the pointwise minimum evolves according to the same update rule since $\min(u+z a,v+z a)=\min(u,v)+z a$. The shift $\vecs \gets \vecs-\rho\mathbf{1}$ multiplies the potential by $e^{\lambda\rho}$, so after the shift we have $\Phi \le \exp\!\left(-499\lambda\right)\), and Theorem~\ref{thm:focusedpush} applies to the reserved final block.

	The algorithm assigns every one of the \(n\) coordinates: the initialization and drift phases assign the first \(n-L_{\mathrm{FP}}\) coordinates, and \(FocusedPush\) assigns the reserved final \(L_{\mathrm{FP}}\) coordinates. After initialization, every assigned coordinate receives the same sign in \(\vecx\) and \(\vecy\), so the final inner product remains exactly \(n-2D\le(1-\delta)n\). If \(c_{\mathrm{FP}}>0\) denotes the success probability in Theorem~\ref{thm:focusedpush}, then conditioning on the Markov event, Theorem~\ref{thm:focusedpush} gives success probability at least \(c_{\mathrm{FP}}\); hence the intersection has probability at least \(c_{\mathrm{FP}}/2\). On this event, \(CollisionFinder\) returns a pair $(\vecx,\vecy)$ satisfying,
	\[
	\min_i(\matB\vecx)_i\ge\rho,\qquad
	\min_i(\matB\vecy)_i\ge\rho,\qquad
\left\lvert\vecx^\top\vecy\right\rvert\le(1-\delta)\,n.
\]
Multiplying the first two inequalities by $\sqrt{\alpha}$ gives,
\[
\min_i(\matA\vecx)_i\ge\kappa,\qquad
\min_i(\matA\vecy)_i\ge\kappa,
\]
as required.
\end{proof}

\subsection{Simulations}

This subsection is purely empirical and is not used in any of the formal results below.

We implemented a version of the \(CollisionFinder\) algorithm to explore the empirical transition. The implementation processes columns of $\matA \in \mathbb{R}^{m \times n}$ online, uses the one-sided potential $\Phi(s)=\frac{1}{m}\sum_{i=1}^m e^{-\lambda s_i/\sqrt{\alpha}}$ with $\lambda=\sqrt{2/\pi}$, and applies the drift-based decision rule described in Section~\ref{sec:evolution}. All experiments use $\delta = 0.1$ and Gaussian matrices $\matA_{ij} \sim \mathcal{N}(0, 1/n)$. The implementation includes practical choices, so success probabilities should be interpreted as empirical evidence for the scaling rather than as a direct verification of the proof.

\begin{figure}[!ht]
  \centering
  \begin{subfigure}[b]{0.48\textwidth}
    \centering
    \includegraphics[width=\textwidth]{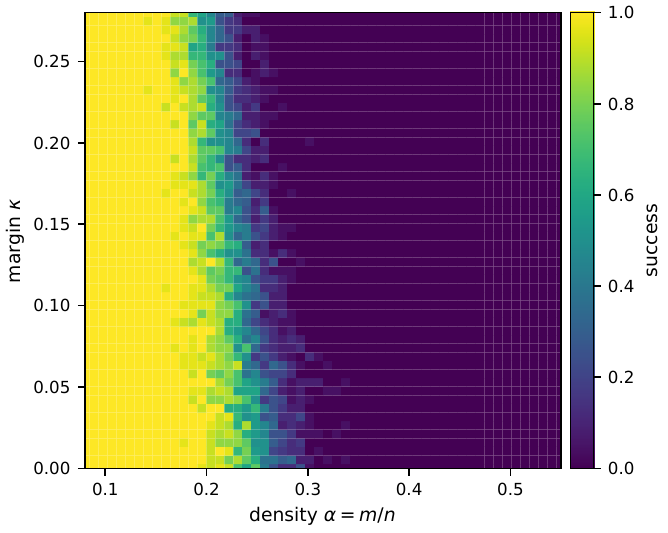}
  \end{subfigure}
  \hfill
  \begin{subfigure}[b]{0.48\textwidth}
    \centering
    \includegraphics[width=\textwidth]{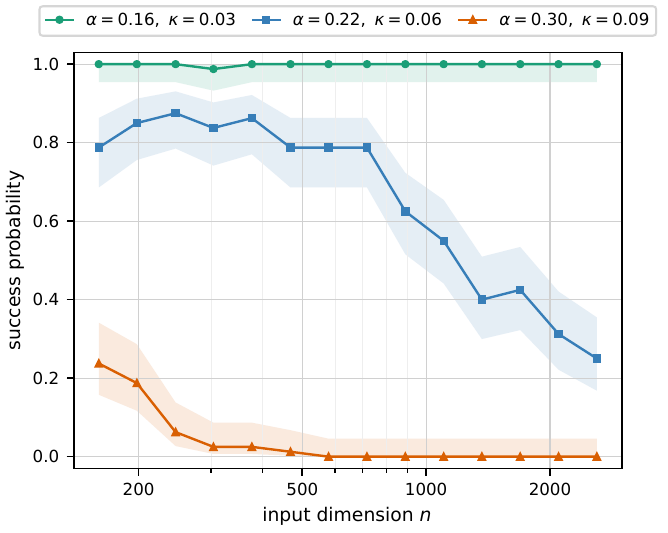}
  \end{subfigure}
  \caption{Empirical behavior of the online collision heuristic. Left: Phase diagram on a $54 \times 54$ grid with $n=420$. The observed transition is consistent with dependence on $\kappa\sqrt{\alpha}$. Right: Success probability versus $n$ for three configurations. The easy regime $(\alpha,\kappa)=(0.16,0.03)$ remains near $1$, the transition regime $(0.22,0.06)$ decreases with $n$, and the low-success regime $(0.30,0.09)$ quickly decays to $0$.}
  \label{fig:simulations}
\end{figure}

Figure~\ref{fig:simulations} shows two empirical patterns. The phase diagram shows the success probability over a $54 \times 54$ grid spanning $\alpha \in [0.08, 0.55]$ and $\kappa \in [0, 0.28]$ for $n=420$. While the exact transition curve fluctuates mildly due to implementation choices, a clear inverse relationship between $\alpha$ and $\kappa$ emerges: the algorithm succeeds almost always in the lower-left region, where $\kappa\sqrt{\alpha}$ is smaller, and almost never in the upper-right region, where $\kappa\sqrt{\alpha}$ is larger. This diagonal pattern is consistent with the scaling in Theorem~\ref{thm:algorithm}, suggesting that the empirical boundary is controlled by the dimensionless parameter $\kappa \sqrt{\alpha}$.

To test whether this empirical transition persists as $n$ grows, we ran the algorithm for $n \in [160, 2600]$ across three representative configurations. In the easy regime $(\alpha,\kappa)=(0.16,0.03)$, the success probability remains near $1$ across all $n$. In the transition regime $(0.22,0.06)$ it decreases with $n$, while in the low-success regime $(0.30,0.09)$ it quickly decays toward $0$.

Overall, while precise probabilities are sensitive to implementation details, the experiments support the prediction that the empirical transition is controlled by the product $\kappa\sqrt{\alpha}$.

\section{Establishing OGP}
\label{sec:random-activation}

In this section we analyze a randomized activation function. For each coordinate $i\in[m]$ we fix an odd integer $K\geq 1$ and draw $K$ i.i.d. thresholds $t_{ij}$ uniformly at random from the range $[-\kappa, \kappa]$, for $j\in[K]$, and define the coordinate-wise activation,
\[
    \varphi_i(x)\,=\,\prod_{j=1}^K \mathrm{sign}(x-t_{ij})\,,\qquad \varphi(\matA \vecx)_i\,=\,\varphi_i\left((\matA \vecx)_i\right)\,.
\]
The thresholds are independent across $i$ and $j$, and independent of $\matA$. Note that the activation is constant on each tail: for $x>\kappa$ all factors are $+1$, while for $x<-\kappa$ all factors are $-1$, so the product is $(-1)^K=-1$ for odd $K$. We restrict to odd \(K\) for simplicity; the oddness is used in the row-wise contraction argument below, and we do not analyze the even-\(K\) case.
The proof is a first-moment argument over admissible \(r\)-tuples of collision pairs. For a fixed row, admissibility and a selection lemma produce \(r+1\) Gaussian projections whose covariances are separated from one. The random thresholds then give a row-wise contraction, summarized by the exponent \(\Lambda_{r,\delta}(\kappa,K)\). Combining this row bound with entropy counting and independence across rows proves \Cref{thm:randomized-ogp-first-moment}. The ensemble version repeats the row-wise argument for correlated-prefix instances, where admissibility is controlled by \(1-t/n\), and the online lower bound follows by coupling \(r\) runs of an online algorithm and symmetrizing the suffix signs.

Fix $r\ge2$ and $\delta\in(0,1]$, and let $2 < c_1 < c_2 < 2c_1$ be constants. Throughout this section, we will let $(X_i,Y_i)_{i=1}^r$ be a set of centered Gaussian variables with unit variances. Let $Z_i, Z_j$ be two random variables from the set. For Gaussian projections arising from Boolean vectors in a common instance,
\[
d(\langle a,u\rangle,\langle a,v\rangle)
=\frac{1-\mathrm{Cov}(\langle a,u\rangle,\langle a,v\rangle)}{2}
=\frac{1}{n}\Delta_{\mathrm{Ham}}(u,v),
\]
and hence \(d\) satisfies the triangle inequality on the variables considered below.
More generally, the selection argument below is used only for Gaussian families for which this
quantity defines a metric.

Let $a,b,\tau > 0$ be constants. A set of centered Gaussian variables $(X_i,Y_i)_{i=1}^r$ with unit variances is said to be an \emph{admissible set} if it satisfies the following two conditions:
\begin{enumerate}
    \item \emph{(Internal overlap).} For every $i$, $d(X_i, Y_i) \geq \tau$.
    \item \emph{(External overlap).} For every $i \neq j$, $a < d(X_i,X_j) + d(Y_i, Y_j) < b$.
\end{enumerate}
For the Gaussian projections of a $\delta$-admissible family from Definition~\ref{def:ogp}, we have,
\[
\tau=\delta/2,\qquad a=\delta/c_2,\qquad b=\delta/c_1.
\]
Conditioned on a pair $(X,Y)$, a single threshold falls between them with probability $\Theta(|X-Y|/\kappa)$ while both lie in $[-\kappa,\kappa]$; $K$ independent thresholds then amplify this effect multiplicatively. Let
\[
\eta_{\mathrm{si}}=\min\left\{1-\frac2{c_1},\,\frac12\left(\frac2{c_2}-\frac1{c_1}\right)\right\},
\qquad
\eta_{\mathrm{ens}}=\min\left\{\frac1{c_2},\,1-\frac1{c_1},\,1\right\},
\qquad
\eta_0=\frac12\min\{\eta_{\mathrm{si}},\eta_{\mathrm{ens}}\}.
\]
Thus the selection lemma below, applied with \(\gamma=\eta_0\delta/2\), gives covariance separation at least \(\eta_0\delta\), and the same \(\eta_0\) is no larger than the separated-subfamily constants used in the ensemble argument. Let \(\bar c>0\) be the absolute contraction constant from Claim~\ref{claim:main rogp}. Fix constants \(c_0,C_0,C_1>0\), depending only on \(c_1,c_2\), with \(c_0=\bar c\sqrt{\eta_0}\) and \(C_0,C_1\) large enough for Lemma~\ref{lem:uniform-contraction}, and set
\[
p_{\max}=\min\left\{1,\frac{\kappa}{c_0\sqrt{\delta}}\right\}.
\]
For compactness, define
\[
\Lambda_{r,\delta}(\kappa,K)
=
-\log_2\!\left(
2^{-r}
+
\inf_{p\in(0,p_{\max}]}
\left[
\left(1-c_0p\sqrt{\delta}/\kappa\right)^K
+ C_0\,r^2p
+ C_1\,r\,e^{-\kappa^2/2}
\right]
\right).
\]
The main result of this section is the following.

\begin{theorem}[First-Moment OGP Bound]\label{thm:randomized-ogp-first-moment}
Fix constants $r\ge2$, $\delta\in(0,1]$, and $2<c_1<c_2<2c_1$, a parameter $\kappa>0$, an odd integer $K\ge1$, and $\alpha\in(0,1]$.
Let $\mathsf{Coll}_{r,\delta}=|S(r,\delta,c_1,c_2;\matA)|$ be the number of $\delta$-admissible $r$-tuples of collisions (as in \Cref{def:ogp}). The expectation $\E[\cdot]$ is over the draw of the Gaussian matrix $\matA$ and the randomized thresholds of $\varphi$.
Then,
\begin{equation}\label{eq:log-firstmoment}
    \frac{1}{n}\log_2 \E\!\left[\mathsf{Coll}_{r,\delta}\right]
    \le
    2 + 2(r-1)H\!\left(\delta/(2c_1)\right)
    - \alpha\,\Lambda_{r,\delta}(\kappa,K)
    + o(1).
\end{equation}
\end{theorem}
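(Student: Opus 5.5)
We will bound $\E[\mathsf{Coll}_{r,\delta}]$ by linearity of expectation: it is a sum over $\delta$-admissible tuples $(\vecx_i,\vecy_i)_{i=1}^r\in T(r,\delta,c_1,c_2)$ of the probability that all $r$ collision equations hold. Since the rows of $\matA$ and the thresholds of different rows are independent, this probability factors as $P_{\mathrm{row}}^{m}$. Here $P_{\mathrm{row}}$ is the probability, over one Gaussian row $\veca$ and $K$ thresholds, that $\varphi_1(\veca^\top\vecx_i)=\varphi_1(\veca^\top\vecy_i)$ for all $i$. So the proof has two parts: an entropy count $\frac1n\log_2|T|\le 2+2(r-1)H(\delta/(2c_1))+o(1)$, which we take from \Cref{lem:admissible-entropy-count}, and a uniform row bound $P_{\mathrm{row}}\le 2^{-\Lambda_{r,\delta}(\kappa,K)}$. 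With $m/n\to\alpha$, multiplying gives exactly \eqref{eq:log-firstmoment}.

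For the entropy count, the plan is as follows. Pick $(\vecx_1,\vecy_1)$ freely, at cost $2^{2n}$. For each $j\ge2$, the external condition forces $d(\vecx_1,\vecx_j)+d(\vecy_1,\vecy_j)\le \delta/c_1$ in normalized Hamming distance. Hence each of $\vecx_j,\vecy_j$ lies in a Hamming ball of radius $\delta n/(2c_1)$ around $\vecx_1,\vecy_1$ respectively, which is a crude but sufficient bound. Each ball has size $2^{nH(\delta/(2c_1))+o(n)}$ because $\delta/(2c_1)<1/2$.

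For the row bound, we work with the $2r$ unit-variance Gaussians $(X_i,Y_i)$ and the metric $d$ defined above. First we apply the selection lemma with $\gamma=\eta_0\delta/2$. It should extract $r+1$ variables $Z_0,\dots,Z_r$ with pairwise $d(Z_k,Z_l)\ge \eta_0\delta/2$, meaning covariance at most $1-\eta_0\delta$. The idea is to take one variable from each pair, plus the partner of one of them. The internal gap $\tau=\delta/2$ separates the partner, and the lower external bound $a=\delta/c_2$, combined with the triangle inequality and $c_2<2c_1$, separates distinct pairs. The collision constraints force a parity relation on the signs $\varphi(Z_k)$ along this family.

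We then split into three events. Event (i): some $|Z_k|>\kappa$. By \Cref{lem:gaussian-estimates} and a union bound this costs at most $C_1 r e^{-\kappa^2/2}$. Event (ii): some pair satisfies $|Z_k-Z_l|<c_0 p\sqrt\delta$. Since $Z_k-Z_l$ is Gaussian with variance at least $2\eta_0\delta$, anti-concentration gives probability $O(p)$ per pair, so $C_0r^2p$ in total. Event (iii), the good event: all $Z_k$ lie in $[-\kappa,\kappa]$ and are pairwise $c_0p\sqrt\delta$-separated. On this event we show that each threshold independently lands in a prescribed gap with probability at least $c_0p\sqrt\delta/(2\kappa)$. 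Such a landing flips the relevant sign product, and oddness of $K$ ensures the two tails differ, which makes the flip parity meaningful. Consequently the collision constraint survives all $K$ thresholds only with probability at most $(1-c_0p\sqrt\delta/\kappa)^K$, up to the constant absorbed in $c_0$; this is \Cref{lem:uniform-contraction} and Claim~\ref{claim:main rogp}.

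The additive $2^{-r}$ term must also be included; we expect it comes from the $r$ independent sign coincidences, or equivalently from a trivial bound in the regime $p$ near $p_{\max}$. Taking the infimum over $p\in(0,p_{\max}]$, where $p\le p_{\max}$ ensures the landing probability is at most $1$, then yields $P_{\mathrm{row}}\le 2^{-\Lambda}$. The main obstacle is the good-event contraction: one must show that, whatever the thresholds do, the collision equations on the $r$ pairs force a parity constraint on which gaps contain thresholds. A single well-separated gap receiving an odd number of thresholds must then violate some equation, and this must hold uniformly over all admissible covariance structures. We would first try to reduce to a single pair $(X_i,Y_i)$, whose separation is guaranteed internally, and use the other selected variables only to rule out cancellations.
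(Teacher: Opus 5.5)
Your outer reduction matches the paper: linearity of expectation over $T(r,\delta,c_1,c_2)$, independence across rows, \Cref{lem:admissible-entropy-count}, and a row bound $P_{\mathrm{row}}\le 2^{-\Lambda_{r,\delta}(\kappa,K)}$. But your argument for the row bound, which is the real content, rests on a false step.

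You aim to show that on the good event the collision constraints survive all $K$ thresholds with probability at most $(1-c_0p\sqrt\delta/\kappa)^K$. The collision \emph{probability} does not decay in $K$. Take one pair with $X,Y\in[-\kappa,\kappa]$. Then $\varphi(X)=\varphi(Y)$ exactly when an even number of thresholds falls between $X$ and $Y$. Conditionally this has probability $\tfrac12(1+g^K)$ with $g=1-|X-Y|/\kappa$. That tends to $\tfrac12$, not $0$, as $K\to\infty$, and is at least $\tfrac12$ whenever $|X-Y|\le\kappa$. What contracts like $(1-\epsilon)^K$ is the correlation $\E[\varphi(X)\varphi(Y)\mid X,Y]=g^K$, not a probability. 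For the same reason, your fallback of reducing to a single pair $(X_i,Y_i)$ gives at best a row bound near $\tfrac12$, i.e.\ $\Lambda\approx1$ rather than $\approx r$. Nor does the $2^{-r}$ term come from any regime of $p$; it is an unavoidable baseline.

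The missing idea is the arithmetization in \Cref{lem:uniform-contraction}:
\[
\Pr[\varphi(X_j)=\varphi(Y_j)\ \forall j]=\E\Bigl[\prod_{j=1}^r\frac{1+\varphi(X_j)\varphi(Y_j)}{2}\Bigr]=2^{-r}\sum_{S\subseteq[r]}\E\Bigl[\prod_{j\in S}\varphi(X_j)\varphi(Y_j)\Bigr].
\]
Given the Gaussians, each term equals $g_S^K$, where $g_S=1-2\Pr_T[\text{an odd number of the $2s$ values lie below } T]$. So the conditional collision probability is $2^{-r}\sum_S g_S^K$, which tends to $2^{-r}$, not $0$. The empty set contributes exactly $2^{-r}$. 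The other $2^r-1$ terms carry weight $2^{-r}$, so it suffices to bound each nonempty term by $(1-c_0p\sqrt\delta/\kappa)^K+C_0r^2p+C_1re^{-\kappa^2/2}$. Oddness of $K$ is used only so that $x\mapsto x^K$ is increasing; this turns $g_S\le 1-q/\kappa$ into the $K$-th power bound even when $g_S<0$.

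To get $g_S\le1-q/\kappa$ you need $s+1$ pairwise separated variables among the $2s$ variables $(X_j,Y_j)_{j\in S}$. The paper obtains them by applying the selection lemma to each restricted family separately; restricting your single global $(r+1)$-subfamily to $S$ can leave only $s$ of them. Then sort the $2s$ values and pair consecutive ones. The sign product is $-1$ exactly on the $s$ intervals $(Z_{\pi(2j-1)},Z_{\pi(2j)})$. By pigeonhole, one of these intervals has both endpoints selected, so its length is at least $q$. A uniform threshold therefore flips the product with probability at least $q/(2\kappa)$. This subset-by-subset parity structure is what your plan lacks.

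Separately, your entropy sketch has a small slip. The external condition bounds only the sum $\Delta_{\mathrm{Ham}}(\vecx_j,\vecx_1)+\Delta_{\mathrm{Ham}}(\vecy_j,\vecy_1)\le\delta n/c_1$, so it does not put each vector in a ball of radius $\delta n/(2c_1)$. Instead, count $(\vecx_j,\vecy_j)$ as one $2n$-bit string within distance $\delta n/c_1$ of $(\vecx_1,\vecy_1)$; this gives the same exponent $2H(\delta/(2c_1))$.
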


\noindent In particular, once
\[
\alpha\Lambda_{r,\delta}(\kappa,K)
\]
exceeds \(2+2(r-1)H(\delta/(2c_1))\) by a fixed positive constant, the exponent in~\eqref{eq:log-firstmoment} is negative and the first moment of $r$-collisions vanishes exponentially in $n$.

\begin{corollary}[Single-Instance $r$-OGP]\label{corollary:rogp}
There exist constants $c>0$ and $C<\infty$, depending only on $c_1,c_2$, such that the following holds for every \(\alpha\in(0,1]\), \(\kappa>0\), and odd integer \(K\ge1\).
Assume $H(\delta/(2c_1)) \leq c\alpha$, $\alpha r \geq C$, and $\Lambda_{r,\delta}(\kappa,K)\ge (1-c)r$. Then no $r$-collisions exist except with probability $\exp(-\Omega(\alpha r n))$. Consequently, with probability at least $1-2^{-\Omega(n)}$, the randomized activation function $\varphi$ satisfies $r$-OGP.
\end{corollary}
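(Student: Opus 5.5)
The corollary follows by plugging its three hypotheses into the exponent of \Cref{thm:randomized-ogp-first-moment} and then applying \Cref{lem:first-moment}. No new probabilistic input is needed. Writing $N=\mathsf{Coll}_{r,\delta}$, we have
\[
\Pr[N\ge1]\le \E[N]\le 2^{\,n\left(2+2(r-1)H(\delta/(2c_1))-\alpha\Lambda_{r,\delta}(\kappa,K)+o(1)\right)}.
\]
It therefore suffices to show that the bracketed exponent is at most $-c'\alpha r$ for a constant $c'>0$ depending only on $c_1,c_2$.

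\textbf{Bounding the exponent.} I will use the three hypotheses in turn.
\begin{itemize}
\item From $\Lambda_{r,\delta}(\kappa,K)\ge(1-c)r$, the negative term satisfies $-\alpha\Lambda_{r,\delta}\le-(1-c)\alpha r$.
\item From $H(\delta/(2c_1))\le c\alpha$, the entropy term satisfies $2(r-1)H(\delta/(2c_1))\le 2c\alpha r$.
\item From $\alpha r\ge C$, the constant $2$ is absorbed: $2\le (2/C)\alpha r$.
\end{itemize}
Combining these, the exponent is at most
\[
-\alpha r\left(1-c-2c-2/C\right)+o(1).
\]
Take $c=1/8$ and $C=16$. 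Then $1-3c-2/C=1/2$, so the exponent is at most $-\alpha r/2+o(1)\le-\alpha r/4$ for all sufficiently large $n$.

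This gives $\Pr[N\ge1]\le 2^{-\alpha r n/4}=\exp(-\Omega(\alpha r n))$, which is the first claim. The constants $c,C$ here are absolute, so they certainly depend only on $c_1,c_2$ as the statement allows.

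\textbf{From no $r$-collisions to $r$-OGP.} The event $\{N=0\}$ is exactly $S(r,\delta,c_1,c_2;\matA)=\emptyset$, which is the definition of $r$-OGP in \Cref{def:ogp}. Since $\alpha r\ge C\ge1$, the failure probability is at most $2^{-n/4}=2^{-\Omega(n)}$, which gives the second claim.

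\textbf{Main obstacle.} The only delicate point is making sure the $o(1)$ term in \eqref{eq:log-firstmoment} is uniform in the parameters. As \Cref{thm:randomized-ogp-first-moment} is stated, $r,\delta,c_1,c_2$ are fixed, so the $o(1)$ depends on them. I would note that it arises only from the entropy counting in \Cref{lem:admissible-entropy-count} (Stirling-type estimates for binomial coefficients and the $2^{2n}$ choice of the first pair) and from rounding $m/n\to\alpha$. It is therefore $O(r\log n/n)$, which is dominated by $\alpha r/4$ once $n$ is large for fixed $\alpha$. Everything else is arithmetic.
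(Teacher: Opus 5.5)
Your proposal is correct and follows the paper's proof: substitute the three hypotheses into the exponent of \Cref{thm:randomized-ogp-first-moment} to obtain $-\Omega(\alpha r)$, then apply \Cref{lem:first-moment}. The only differences are that you make the constants explicit ($c=1/8$, $C=16$) and spell out why the $o(1)$ term is harmless for fixed parameters, a point the paper leaves implicit.
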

\begin{proof}
By \Cref{thm:randomized-ogp-first-moment}, the first-moment exponent is at most
\[
    2 + 2(r-1)H(\delta/(2c_1)) - \alpha\Lambda_{r,\delta}(\kappa,K) + o(1) \leq -\Omega(\alpha r)\,.
\]
The expected number of $r$-collisions is then at most $\exp(-\Omega(\alpha r n))$.  By \Cref{lem:first-moment}, no $r$-collision exists except with such probability.
\end{proof}

The quantity \(\Lambda_{r,\delta}(\kappa,K)\) is the row-wise collision exponent: the proof below shows that a fixed admissible \(r\)-tuple collides on one row with probability at most \(2^{-\Lambda_{r,\delta}(\kappa,K)}\). A simple sufficient regime is obtained by choosing \(p=\Theta(2^{-r}/r^2)\): for constants \(c>0\) and \(C<\infty\) depending only on \(c_1,c_2\), the conditions
\[
\kappa^2\ge C(r+\log r),
\qquad
K\text{ is odd and }K\ge C\,2^r r^3\,\frac{\kappa}{\sqrt{\delta}}
\]
imply
\[
\Lambda_{r,\delta}(\kappa,K)\ge (1-c)r.
\]
The extra factor of \(r\) in the lower bound on \(K\) comes from making \(\left(1-c_0p\sqrt{\delta}/\kappa\right)^K\) comparable to \(2^{-r}\), while the choice \(p=\Theta(2^{-r}/r^2)\) keeps the \(C_0\,r^2p\) error at the same scale.

The proof of \Cref{thm:randomized-ogp-first-moment} has two ingredients. First, the overlap constraints give the entropy count below. Second, for every fixed admissible family, the Gaussian variables seen by one row contain a large separated subfamily; this forces a uniform contraction under the randomized thresholds. Independence across rows then gives the stated first moment.

\begin{lemma}[Entropy Count for Admissible Families]\label{lem:admissible-entropy-count}
For fixed constants $r\ge2$, $\delta\in(0,1]$, and $2<c_1<c_2<2c_1$,
\[
\frac1n\log_2\left|T(r,\delta,c_1,c_2)\right|
\le
2+2(r-1)H(\delta/(2c_1))+o(1).
\]
\end{lemma}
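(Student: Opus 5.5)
We bound $|T(r,\delta,c_1,c_2)|$ by a sequential count. The first pair $(\vecx_1,\vecy_1)$ is chosen freely, and each later pair is chosen near the first pair. The internal-overlap condition can only shrink the set, so we drop it and use only the external-overlap upper bound with $j=1$. This gives
\[
|T|\le 4^n\cdot N^{\,r-1},
\]
where $N$ is the number of pairs $(\vecx_i,\vecy_i)\in\{\pm1\}^{2n}$ satisfying
\[
\frac{\vecx_1^\top\vecx_i+\vecy_1^\top\vecy_i}{2n}\ge 1-\frac{\delta}{c_1},
\]
for a fixed reference pair $(\vecx_1,\vecy_1)$. The factor $4^n$ accounts for the term $2$ in the exponent. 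It remains to show $\frac1n\log_2 N\le 2H(\delta/(2c_1))+o(1)$.

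We rewrite the constraint in Hamming terms. Using $\vecu^\top\vecv=n-2\Delta_{\mathrm{Ham}}(\vecu,\vecv)$, the constraint becomes
\[
\Delta_{\mathrm{Ham}}(\vecx_1,\vecx_i)+\Delta_{\mathrm{Ham}}(\vecy_1,\vecy_i)\le \frac{\delta n}{c_1}.
\]
So $N$ is the size of a Hamming ball of radius $\delta n/c_1$ around the concatenated point $(\vecx_1,\vecy_1)\in\{\pm1\}^{2n}$. Since $c_1>2$ and $\delta\le1$, the relative radius is
\[
\frac{\delta n/c_1}{2n}=\frac{\delta}{2c_1}<\frac14<\frac12.
\]
The standard volume bound $\sum_{k\le \beta M}\binom{M}{k}\le 2^{MH(\beta)}$ for $\beta\le1/2$, applied with $M=2n$ and $\beta=\delta/(2c_1)$, gives
\[
N\le 2^{2nH(\delta/(2c_1))}.
\]
Alternatively, one can split the radius budget $k_1+k_2\le\delta n/c_1$ between the two coordinates. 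There are $O(n^2)$ choices of $(k_1,k_2)$, and concavity of $H$ gives $H(k_1/n)+H(k_2/n)\le 2H(\delta/(2c_1))$. The polynomial factor is absorbed into the $o(1)$ term.

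Combining the two bounds,
\[
\frac1n\log_2|T|\le 2+2(r-1)H\!\left(\frac{\delta}{2c_1}\right)+o(1).
\]
This is the claimed inequality. The only rounding effects come from integer radii, and they are absorbed in the $o(1)$ term.

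The main point needing care is using the right constraint: the lower end $1-\delta/c_1$ of the external window bounds Hamming distance from above, which is what confines each subsequent pair to a small ball. The rest is routine entropy estimation.
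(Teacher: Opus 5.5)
Your proof is correct and follows essentially the same route as the paper. Both arguments take $2^{2n}$ choices for the first pair, use the lower external-overlap bound against $(\vecx_1,\vecy_1)$ to confine each later pair to a Hamming ball of radius $\delta n/c_1$ in $\{\pm1\}^{2n}$, and bound that ball's volume by $2^{(2H(\delta/(2c_1))+o(1))n}$.
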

\begin{proof}
The first pair $(\vecx_1,\vecy_1)$ has at most $2^{2n}$ choices. For each $i>1$, the \emph{lower} external-overlap bound $(\vecx_i^\top\vecx_1+\vecy_i^\top\vecy_1)/(2n)\ge1-\delta/c_1$, combined with $u^\top v=n-2\Delta_{\mathrm{Ham}}(u,v)$, gives
\[
\Delta_{\mathrm{Ham}}(\vecx_i,\vecx_1)+\Delta_{\mathrm{Ham}}(\vecy_i,\vecy_1)
\le \frac{\delta n}{c_1}.
\]
Thus, after fixing $(\vecx_1,\vecy_1)$, the logarithm of the number of choices for $(\vecx_i,\vecy_i)$ is at most
\[
\log_2\sum_{\ell\le \delta n/c_1}\binom{2n}{\ell}
\le
\left(2H(\delta/(2c_1))+o(1)\right)n.
\]
Multiplying this bound over $i=2,\ldots,r$ proves the claim. The internal-overlap condition and the remaining external-overlap conditions can only reduce the count.
\end{proof}

\begin{claim}\label{claim:main rogp}
Let $Z_1,\dots,Z_{2s}$ be centered Gaussian random variables with unit variances.
Suppose that among them there are $s+1$ variables such that, for every distinct pair
$Z_i,Z_j$ in this subfamily,
\[
\mathrm{Cov}(Z_i,Z_j)\le 1-\eta .
\]
Then there is an absolute constant \(\bar c>0\) such that, with $\varphi$ the randomized activation function and for every \(p\in(0,1)\) with \(\bar c p\sqrt{\eta}/\kappa\le1\),
\begin{align*}
    \E\left[\prod_{i=1}^{2s} \varphi(Z_i)\right]
    \leq
    \left(1 - \bar c p\sqrt{\eta}/\kappa\right)^K
	    + \binom{s+1}{2}\,p
	    + 2s\,e^{-\kappa^2/2}.
\end{align*}
\end{claim}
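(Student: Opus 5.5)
The plan is to condition on the Gaussians, average over the shared thresholds of the row, and reduce to a deterministic statement about interval lengths. For a fixed row, all $2s$ variables see the same thresholds $t_1,\dots,t_K$, so
\[
\prod_{i=1}^{2s}\varphi(Z_i)=\prod_{j=1}^K\prod_{i=1}^{2s}\sign(Z_i-t_j)=\prod_{j=1}^K(-1)^{N(t_j)},
\qquad N(t)=\#\{i: Z_i<t\}.
\]
Conditional on $\vecz=(Z_1,\dots,Z_{2s})$, the thresholds are i.i.d.\ uniform on $[-\kappa,\kappa]$. Hence
\[
\E\!\left[\prod_i\varphi(Z_i)\,\middle|\,\vecz\right]=g(\vecz)^K,
\qquad
g(\vecz)=\E_t\!\left[(-1)^{N(t)}\right]=1-\frac{|\{t\in[-\kappa,\kappa]:N(t)\text{ odd}\}|}{\kappa}.
\]
Since $K$ is odd, $g^K\le \max(g,0)^K$. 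This is where oddness enters: a negative $g$ only helps, and we never need to control $|g|$. On any event, the trivial bound $g^K\le1$ is available.

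Next I define a good event $\calE$ by two requirements. First, all $|Z_i|\le\kappa$. By \Cref{lem:gaussian-estimates} and a union bound, this fails with probability at most $2s\,e^{-\kappa^2/2}$. Second, every distinct pair $Z_i,Z_j$ in the given $(s+1)$-subfamily satisfies $|Z_i-Z_j|\ge\varepsilon$, with $\varepsilon=\bar c\,p\sqrt\eta$. The difference $Z_i-Z_j$ is centered Gaussian with variance $2(1-\mathrm{Cov})\ge2\eta$. The small-ball part of \Cref{lem:gaussian-estimates} then gives $\Pr[|Z_i-Z_j|\le\varepsilon]\le\sqrt{2/\pi}\,\varepsilon/\sqrt{2\eta}\le p$ once $\bar c\le\sqrt{\pi}$. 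A union bound over the $\binom{s+1}{2}$ pairs gives the middle error term. Off $\calE$ we bound the conditional expectation by $1$, which produces exactly the two additive error terms in the claim.

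The key step is showing that on $\calE$ the odd set has length at least $\varepsilon$. Sort the values as $z_{(1)}\le\dots\le z_{(2s)}$. Then $N(t)$ is odd exactly for $t$ in the blocks $(z_{(2k-1)},z_{(2k)}]$, $k=1,\dots,s$, and these blocks partition the ranks into $s$ consecutive pairs. The $s+1$ separated variables occupy $s+1$ distinct ranks, so by pigeonhole two of them form one block $\{2k-1,2k\}$. That block therefore has length at least $\varepsilon$, and on $\calE$ it lies inside $[-\kappa,\kappa]$. This gives $g\le 1-\varepsilon/\kappa=1-\bar c p\sqrt\eta/\kappa$. The hypothesis $\bar cp\sqrt\eta/\kappa\le1$ makes this quantity nonnegative, so $\max(g,0)^K\le(1-\bar cp\sqrt\eta/\kappa)^K$.

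Summing the contribution from $\calE$ and the two exceptional probabilities gives the stated bound. I expect the main obstacle to be this parity and pigeonhole step. One has to see that the relevant sets are the odd-rank gaps rather than all gaps. One also has to check that ties, under the convention $\sign(0)=1$, and thresholds landing exactly on some $Z_i$ are measure-zero issues. Everything else is routine use of \Cref{lem:gaussian-estimates}.
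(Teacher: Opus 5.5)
Your proposal is correct and is essentially the paper's proof: you condition on the Gaussians to get $g^K$, use the same two-part good event, and run the same parity/pigeonhole argument on the odd-rank blocks $(z_{(2k-1)},z_{(2k)}]$, with only cosmetic differences (separation scale $\varepsilon=\bar c p\sqrt\eta$ instead of $p\sqrt\eta$, and $\max(g,0)^K$ instead of monotonicity of $x\mapsto x^K$ for odd $K$). One small point you share with the paper: \Cref{lem:gaussian-estimates} as stated gives $2e^{-\kappa^2/2}$ per variable, hence $4s\,e^{-\kappa^2/2}$ after the union bound, so the stated $2s\,e^{-\kappa^2/2}$ actually relies on the sharper (true) tail bound $\Pr[|G|\ge u\sigma]\le e^{-u^2/2}$.
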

\begin{proof}
By a union bound and \Cref{lem:gaussian-estimates}, $\abs{Z_i} \leq \kappa$ for all $i$ except with probability at most $2s\,e^{-\kappa^2/2}$. By another union bound and \Cref{lem:gaussian-estimates}, since each selected difference has variance at least \(2\eta\), $\abs{Z_i - Z_j} \geq p\sqrt{\eta} =: q$ for all selected pairs except with probability at most $\binom{s + 1}{2}p$.

On this good event all \(Z_i\) lie in \([-\kappa,\kappa]\), and the selected \(s+1\) variables are pairwise separated by at least \(q\). We show that $\sign(Z_1- T)\cdots\sign(Z_{2s}-T)$ takes value $-1$ with probability at least $q/2\kappa$ for a threshold $T$ chosen uniformly from $[-\kappa, \kappa]$.

Let $\pi$ order $Z_1,\ldots,Z_{2s}$ increasingly. For any threshold $T\in(Z_{\pi(j)},Z_{\pi(j+1)})$, exactly $j$ of the quantities $Z_i-T$ are negative, so the sign product equals $(-1)^j$ on this interval. Since all \(Z_i\in[-\kappa,\kappa]\), these adjacent intervals are not clipped by the threshold range. Thus, up to endpoints of measure zero, the set of thresholds for which the product equals $-1$ is precisely the union of the odd-parity intervals
\[
(Z_{\pi(1)},Z_{\pi(2)}),\ (Z_{\pi(3)},Z_{\pi(4)}),\ \ldots,\ (Z_{\pi(2s-1)},Z_{\pi(2s)}),
\]
and
\[
\Pr_T\!\left[\prod_{i=1}^{2s}\mathrm{sign}(Z_i-T)=-1\right]
=\frac{1}{2\kappa}\sum_{j=1}^{s}
(Z_{\pi(2j)}-Z_{\pi(2j-1)}).
\]

If every odd-parity interval had length less than \(q\), then each pair of ordered positions \((\pi(2j-1),\pi(2j))\) could contain at most one selected variable, because selected variables are pairwise separated by at least \(q\). These \(s\) pairs partition the \(2s\) positions, so they would contain at most \(s\) selected variables in total, contradicting the existence of \(s+1\) selected variables. Hence some odd-parity interval has length at least \(q\), and it contributes at least \(q/2\kappa\) to the probability above.

In summary, $\sign(Z_1- T)\cdots\sign(Z_{2s}-T)$ takes value $-1$ with probability at least \(q/(2\kappa)\), and hence its one-threshold expectation satisfies
\[
g := \E_T\!\left[\prod_{i=1}^{2s}\sign(Z_i-T)\right] \le 1-q/\kappa\le 1-\bar c p\sqrt{\eta}/\kappa
\]
for a fixed absolute \(\bar c>0\). Here \(g=g(Z_1,\ldots,Z_{2s})\) is a function of the Gaussian values. Since the $K$ thresholds $T_1,\ldots,T_K$ are independent of each other and of the $Z$'s, the conditional expectation over thresholds given the $Z$ values satisfies
\[
\E\!\left[\prod_{i=1}^{2s}\varphi(Z_i)\,\middle|\,Z_1,\ldots,Z_{2s}\right]
=\E\!\left[\prod_{j=1}^K\prod_{i=1}^{2s}\sign(Z_i-T_j)\,\middle|\,Z_1,\ldots,Z_{2s}\right]
=g(Z_1,\ldots,Z_{2s})^K.
\]
On the good event, \(g\le 1-\bar c p\sqrt{\eta}/\kappa\). Since \(K\) is odd the map \(x\mapsto x^K\) is increasing, so on the good event \(g^K\le(1-\bar c p\sqrt{\eta}/\kappa)^K\). Using \(|g|\le1\) on the bad event, the bound on the unconditional expectation follows from the law of total probability.
\end{proof}

\begin{lemma}[Selection Lemma]
\label{lem:r+1 tuple exists}
Let $(X_i,Y_i)_{i=1}^r$ be an admissible set, and assume that $d(Z,Z')=(1-\mathrm{Cov}(Z,Z'))/2$ is a metric on $\{X_i,Y_i:i\in[r]\}$. If $\tau > b$ and $2a > b$, then for every $\gamma < \min\{\tau-b, (2a-b)/4\}$, there exists a subset of $r+1$ variables $Z_1, \ldots Z_{r+1}$ satisfying $d(Z_i, Z_j) \geq \gamma$.
\end{lemma}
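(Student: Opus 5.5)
The plan is to read the statement as an independent-set problem. Call two variables \emph{close} if $d<\gamma$, and look for $r+1$ variables that are pairwise not close. The metric property of $d$, together with the two overlap conditions, should force the close relation to be very sparse.

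\textbf{Step 1: no cross closeness.} First I would show that no $X$ is close to any $Y$. For $i=j$ this is immediate from $d(X_i,Y_i)\ge\tau>\gamma$. For $i\neq j$, the upper external bound gives $d(Y_i,Y_j)<b$, so the triangle inequality yields
\[
d(X_i,Y_j)\ge d(X_i,Y_i)-d(Y_i,Y_j)>\tau-b>\gamma .
\]
Hence closeness can occur only inside the $X$-family or inside the $Y$-family. Let $G_X$ and $G_Y$ be the graphs of close pairs on $\{X_i\}$ and $\{Y_i\}$. Any union of an independent set of $G_X$ and an independent set of $G_Y$ is admissible for the conclusion.

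\textbf{Step 2: no pair is close on both sides, and $G_X$ has no paths of length two.} The lower external bound $d(X_i,X_j)+d(Y_i,Y_j)>a$ implies the following: if $X_i,X_j$ are close, then $d(Y_i,Y_j)>a-\gamma>\gamma$, since $\gamma<(2a-b)/4<a/4$. So no index pair is an edge of both $G_X$ and $G_Y$. Next, if $X_i$ is close to both $X_j$ and $X_k$, then $d(X_j,X_k)<2\gamma$. The lower external bound then gives $d(Y_j,Y_k)>a-2\gamma$, and also $d(Y_i,Y_j),d(Y_i,Y_k)>a-\gamma$. Combining these with the upper bounds $d(Y_\cdot,Y_\cdot)<b$ via triangle inequalities, the hypothesis $4\gamma<2a-b$ is exactly what should rule out such configurations. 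This makes $G_X$, and symmetrically $G_Y$, a matching.

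\textbf{Step 3: assemble the set.} Let $M_X$ and $M_Y$ be the two matchings, viewed as matchings on the index set $[r]$; by Step 2 they share no edge. I would build the $r+1$ variables by taking one endpoint from each edge plus every unmatched vertex, separately on each side. This gives $r-|M_X|$ of the $X$'s and $r-|M_Y|$ of the $Y$'s, and these are pairwise far by Step 1. It therefore suffices that $|M_X|+|M_Y|\le r-1$. The union $M_X\cup M_Y$ is a disjoint union of paths and even cycles on $r$ vertices, so $|M_X|+|M_Y|\le r$. Equality occurs only when every component is an alternating cycle, and I would aim to exclude that with the same triangle-inequality argument as in Step 2: along an alternating $4$-cycle, the $X$-distances and $Y$-distances would have to be simultaneously small and large, contradicting $2a>b+4\gamma$.

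\textbf{Main obstacle.} I expect Step 2, and its cycle version in Step 3, to be the hard part: showing from $\gamma<(2a-b)/4$ that close pairs form a matching with no alternating cycles. If the direct triangle-inequality bookkeeping does not close, my fallback is a greedy argument. I would process indices one at a time, choosing $X_i$ or $Y_i$ so that the choice is far from everything already chosen. Whenever both options fail, I would use the $b$-upper bound to find two chosen variables on the same side within $2\gamma$, which gives a contradiction with the $a$-lower bound on the complementary side. The extra $(r+1)$-st variable would then come from an index where both $X_i$ and $Y_i$ can be kept.
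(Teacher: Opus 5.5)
Your argument breaks at Step 2, and the problem is not bookkeeping: under the hypotheses, $G_X$ need not be a matching. Take $r=3$. Let the three $X$'s be pairwise at distance $\epsilon$ with $0<\epsilon<\min\{\gamma,(b-a)/2\}$, let the three $Y$'s be pairwise at distance $(a+b)/2$, and let all $d(X_i,Y_j)\ge\tau$. For small $\delta$ this is realized by $\pm1$ vectors. Start from the all-ones vector. Get each $x_i$ by flipping a private block of size $\epsilon n/2$. Get each $y_i$ by flipping a common block of size $\tau n$ plus a private block of size $(a+b)n/4$, with all blocks disjoint. Every external sum equals $\epsilon+(a+b)/2\in(a,b)$, yet $G_X$ is a triangle. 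The inequalities you derive for a vertex of $G_X$-degree two are $d(Y_j,Y_k)>a-2\gamma$ and $d(Y_i,Y_j),d(Y_i,Y_k)>a-\gamma$, all below $b$. An equilateral triangle of $Y$'s satisfies all of them, so no triangle-inequality argument can exclude this case. Step 3 then collapses, because ``one endpoint per edge of $M_X$'' presupposes a matching. Here only one of the three $X$'s can be kept.

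The fact that does hold, and on which the paper's proof rests, is about \emph{alternating} paths of length two rather than degrees inside $G_X$. If $d(X_i,X_j)<\gamma$, then $d(Y_j,Y_k)\ge\gamma$ for \emph{every} $k\neq j$. For $k=i$ this is your ``no pair is close on both sides.'' For distinct $i,j,k$, assume $d(Y_j,Y_k)<\gamma$; then in turn
\[
d(Y_i,Y_j)>a-\gamma,\qquad d(Y_i,Y_k)>a-2\gamma,\qquad d(X_i,X_k)<b-a+2\gamma,\qquad d(X_j,X_k)<b-a+3\gamma,
\]
so $d(X_j,X_k)+d(Y_j,Y_k)<b-a+4\gamma<a$, a contradiction. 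Hence an index that is non-isolated in $G_X$ is isolated in $G_Y$, and no matching analysis is needed. Let $N_X$ be the non-isolated indices of $G_X$. Take $X_i$ for $i\notin N_X$, $Y_j$ for $j\in N_X$, and one extra $X_{j^*}$ with $j^*\in N_X$ (or an arbitrary $Y$ if $N_X=\emptyset$). Step 1, the definition of $N_X$, and this claim make all pairwise distances among these $r+1$ variables at least $\gamma$.

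Your greedy fallback also works once you have this claim. Seed it with both $X_1$ and $Y_1$. A double failure at index $i$ means $d(X_i,X_j)<\gamma$ and $d(Y_i,Y_k)<\gamma$ for already chosen $X_j,Y_k$. This contradicts $2\gamma<a$ if $j=k$, and the claim if $j\neq k$. However, the contradiction you sketch (``two chosen variables on the same side within $2\gamma$'') is not this chain, and you leave the source of the extra variable unspecified. So as written, neither route closes.
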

\begin{proof}
    Fix any $\gamma < \min\{\tau-b, (2a-b)/4\}$, and define the set,
    \[
    S = \{i \in [r] : d(X_i, X_j) \geq \gamma \text{ for all } j \neq i\}\,.
    \]
    To show the statement we condition on the set $S$.
    \begin{itemize}
        \item If $S=\{1,\ldots, r\}$, then we may take all $X_i$ and one arbitrary $Y_j$. The statement follows since for all $i,j$,
\[
d(X_i, Y_j) \ge d(X_i, Y_i) - d(Y_i, Y_j) \ge \tau - b > \gamma,
\]
where we used the triangle inequality and the overlap conditions.
        \item If $S \neq \{1, \ldots r\}$, we take all $X_i$ for $i \in S$, all $Y_j$ for $j \not\in S$, and an arbitrary $X_{j^*}$ for $j^* \not\in S$. By definition of $S$, the condition is satisfied for any pairs of $X$'s. Distances between $X$'s and $Y$'s are again bounded by $d(X_i, Y_j) \geq \tau - b$. It remains to bound the distances between the $Y$'s.

We claim that if $d(X_i, X_j) < \gamma$, then $d(Y_j, Y_k) \geq \gamma$ for every $k\neq j$. For $k=i$, the external lower bound gives $d(Y_j,Y_i)\ge a-d(X_i,X_j)>a-\gamma>\gamma$, since $\gamma<a/2$. For distinct $i,j,k$, we will show it is impossible to have both $d(X_i, X_j) < \gamma$ and $d(Y_j, Y_k) < \gamma$, provided $4\gamma < 2a-b$. Assume for contradiction it is possible. By the external overlap condition, we get $d(Y_i, Y_j) \geq a - \gamma$. By the triangle inequality, $d(Y_i, Y_k) \geq a - 2\gamma$. From external overlap, we obtain $d(X_i, X_k) \leq b - a + 2\gamma$. By triangle inequality again, we get $d(X_j, X_k) \leq b- a+3\gamma$. Adding $d(Y_j, Y_k) < \gamma$, we finally get,
    \[
        d(X_j, X_k) + d(Y_j, Y_k) < b-a+4\gamma\,,
    \]
    which contradicts the lower bound $a \le d(X_j, X_k) + d(Y_j, Y_k)$. Therefore, if $d(X_i, X_j) < \gamma$, then $d(Y_j, Y_k) \geq \gamma$ for all $k \neq j$. It follows that all selected $Y$'s are pairwise at distance at least $\gamma$, and hence all pairwise distances in the chosen set are at least $\gamma$, completing the proof.\qedhere
    \end{itemize}
\end{proof}

\begin{lemma}[Uniform Row Contraction]
\label{lem:uniform-contraction}
There exist constants $c_0,C_0,C_1>0$, depending only on $c_1,c_2$, such that the following holds. Let $(X_i,Y_i)_{i=1}^r$ be an admissible set, and assume that \(d(Z,Z')=(1-\mathrm{Cov}(Z,Z'))/2\) is a metric on \(\{X_i,Y_i:i\in[r]\}\). Then in the randomized oscillating activation model with odd \(K\), for every $p\in(0,p_{\max}]$,
\[
\Pr[\varphi(X_j)=\varphi(Y_j)\ \forall j]
\le
2^{-r}
+ \left(1-c_0p\sqrt{\delta}/\kappa\right)^K
+ C_0\,r^2p
+ C_1\,r\,e^{-\kappa^2/2}.
\]
\end{lemma}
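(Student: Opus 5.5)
The plan is to expand the collision indicator as a sum of sign products over subsets of pairs, then bound each nonempty term with Claim~\ref{claim:main rogp}, using the Selection Lemma~\ref{lem:r+1 tuple exists} to supply the separated subfamily that the claim needs. All $2r$ variables $X_j,Y_j$ are projections onto the same row, so they share the same $K$ thresholds, and $\varphi$ takes values in $\{\pm1\}$. Hence
\[
\boldsymbol{1}\{\varphi(X_j)=\varphi(Y_j)\ \forall j\}
=\prod_{j=1}^r\frac{1+\varphi(X_j)\varphi(Y_j)}{2}
=2^{-r}\sum_{J\subseteq[r]}\prod_{j\in J}\varphi(X_j)\varphi(Y_j).
\]
Taking expectations, the empty set contributes exactly $2^{-r}$. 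It then suffices to show that every nonempty $J$ satisfies
\[
\E\Bigl[\prod_{j\in J}\varphi(X_j)\varphi(Y_j)\Bigr]\le B(p),\qquad B(p):=\left(1-c_0p\sqrt{\delta}/\kappa\right)^K+C_0r^2p+C_1re^{-\kappa^2/2}.
\]
There are $2^r-1$ nonempty subsets, each weighted by $2^{-r}$, so the total is at most $2^{-r}+B(p)$. The terms may be negative, but we only need an upper bound, so this is harmless.

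Fix a nonempty $J$ with $|J|=s$. The subfamily $(X_j,Y_j)_{j\in J}$ is again admissible with the same $\tau,a,b$, since both the internal and external conditions are inherited, and $d$ remains a metric on it. We apply the Selection Lemma to this subfamily with $\gamma=\eta_0\delta/2$, recalling $\tau=\delta/2$, $a=\delta/c_2$, $b=\delta/c_1$. We must check the hypotheses. First, $\tau>b$ and $2a>b$ follow from $c_1>2$ and $c_2<2c_1$. Second, $\gamma<\tau-b=\delta(1/2-1/c_1)$ holds because $\eta_0\le\eta_{\mathrm{si}}/2\le(1-2/c_1)/2$. Third, $\gamma<(2a-b)/4=\delta(2/c_2-1/c_1)/4$ holds because $\eta_0\le\eta_{\mathrm{si}}/2\le(2/c_2-1/c_1)/4$; the strictness comes from the extra factor $1/2$ in $\eta_0$. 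For $s=1$ the lemma is not even needed: $X_j,Y_j$ already form two variables at distance at least $\tau\ge\gamma$.

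The Selection Lemma thus yields $s+1$ variables among the $2s$ with pairwise $d\ge\gamma$, i.e.\ $\mathrm{Cov}\le 1-2\gamma=1-\eta$ with $\eta=\eta_0\delta$. Claim~\ref{claim:main rogp} then bounds the expectation by
\[
\left(1-\bar c p\sqrt{\eta_0\delta}/\kappa\right)^K+\binom{s+1}{2}p+2s\,e^{-\kappa^2/2}.
\]
Since $c_0=\bar c\sqrt{\eta_0}$, the first term is exactly $(1-c_0p\sqrt\delta/\kappa)^K$. The claim requires $\bar cp\sqrt\eta/\kappa\le1$, which is guaranteed by $p\le p_{\max}\le\kappa/(c_0\sqrt\delta)$. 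The constraint $p<1$ in the claim is handled by the remark below. Finally, $\binom{s+1}{2}\le r^2$ and $2s\le 2r$, so $C_0=1$ and $C_1=2$ suffice.

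The one point needing care is the endpoint case $p=p_{\max}=1$, since Claim~\ref{claim:main rogp} is stated for $p\in(0,1)$. There the right-hand side of the lemma already exceeds $C_0r^2\ge1$ once $C_0\ge1$, so the bound is trivial. The main obstacle is really just verifying the constant bookkeeping: that $\eta_0$ is small enough for the Selection Lemma's strict inequalities to hold, and that the contraction rate matches $c_0$. Everything else is the Fourier-style expansion above.
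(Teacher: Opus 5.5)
Your proposal is correct and follows the paper's proof essentially step for step. You use the same expansion of the collision indicator into $2^{-r}\sum_{S\subseteq[r]}\E[\prod_{j\in S}\varphi(X_j)\varphi(Y_j)]$, then apply the Selection Lemma to each restricted subfamily with $\gamma=\eta_0\delta/2$ to obtain $s+1$ variables with covariances at most $1-\eta_0\delta$, invoke the contraction claim with $\eta=\eta_0\delta$, and finally bound $s\le r$. The additional checks you supply are details the paper leaves implicit, and they are all correct: the Selection Lemma's hypotheses follow from $2<c_1<c_2<2c_1$ and the definition of $\eta_0$, the case $s=1$ is trivial, and the endpoint $p=1$ makes the bound vacuous.
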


\begin{proof}
We arithmetize the collision probability as follows:
\[
\Pr[\varphi(X_j)=\varphi(Y_j)\ \forall j]
= \mathbb{E}\!\left[\prod_{j=1}^r \frac{1+\varphi(X_j)\varphi(Y_j)}{2}\right]
=2^{-r}\sum_{S\subseteq[r]}\mathbb{E}\!\left[\prod_{j\in S}\varphi(X_j)\varphi(Y_j)\right].
\]
The term \(S=\emptyset\) is \(1\). Fix any nonempty \(S\) of size \(s\). The restricted family \((X_j,Y_j)_{j\in S}\) remains admissible with the same parameters \(\tau,a,b\). By Lemma~\ref{lem:r+1 tuple exists}, applied to this restricted family with \(\gamma=\eta_0\delta/2\), there exist \(s+1\) variables among \(\{X_j,Y_j : j\in S\}\) such that all pairwise covariances are at most \(1-\eta_0\delta\). Applying \Cref{claim:main rogp} with \(\eta=\eta_0\delta\) yields
\[
	\mathbb{E}\!\left[\prod_{j\in S}\varphi(X_j)\varphi(Y_j)\right]
	\le \left(1 - c_0p\sqrt{\delta}/\kappa\right)^K + C_0\,s^2p + C_1\,s\,e^{-\kappa^2/2},
	\]
where \(c_0=\bar c\sqrt{\eta_0}\), and \(C_0,C_1\) absorb the absolute constants in \Cref{claim:main rogp}. Since \(s\le r\), each nonempty term is at most
\[
\left(1 - c_0p\sqrt{\delta}/\kappa\right)^K + C_0\,r^2p + C_1\,r\,e^{-\kappa^2/2}.
\]
Therefore
\[
\Pr[\varphi(X_j)=\varphi(Y_j)\ \forall j]
\le
2^{-r}
+
\left(1 - c_0p\sqrt{\delta}/\kappa\right)^K
+ C_0\,r^2p
+ C_1\,r\,e^{-\kappa^2/2},
\]
which is the claimed bound. Taking the infimum over \(p\in(0,p_{\max}]\) gives the row bound \(2^{-\Lambda_{r,\delta}(\kappa,K)}\) used below.
\end{proof}

\begin{proof}[Proof of \Cref{thm:randomized-ogp-first-moment}]
Consider a fixed $\delta$-admissible family of $r$ pairs $\{(\vecx_i,\vecy_i)\}_{i=1}^r$ and a fixed row $\veca$ of $\matA$, where $\veca\sim\mathcal N(\mathbf 0,\matI/n)$. For each $i\in[r]$, let
\[
X_i=\langle \veca,\vecx_i\rangle,\qquad Y_i=\langle \veca,\vecy_i\rangle.
\]
Then $(X_i,Y_i)_{i=1}^r$ is a jointly Gaussian family with unit variances. Moreover, since $\{(\vecx_i,\vecy_i)\}_{i=1}^r$ is $\delta$-admissible in the sense of Definition~\ref{def:ogp}, the Gaussian family $(X_i,Y_i)_{i=1}^r$ is admissible in the above sense, with
\[
\tau=\delta/2,\qquad a=\delta/c_2,\qquad b=\delta/c_1.
\]
	For Gaussian projections from Boolean vectors, \(d(\langle \veca,u\rangle,\langle \veca,v\rangle)=\Delta_{\mathrm{Ham}}(u,v)/n\), so the metric assumption in Lemma~\ref{lem:uniform-contraction} holds. Since $2<c_1<c_2<2c_1$, Lemma~\ref{lem:r+1 tuple exists} applies, and hence so does Lemma~\ref{lem:uniform-contraction}. Therefore the single-row collision probability is bounded by
\[
\Pr\!\left[\varphi(X_i)=\varphi(Y_i)\ \forall i\in[r]\right]
\le 2^{-\Lambda_{r,\delta}(\kappa,K)}.
\]

Since the rows of $\matA$ are independent, the probability that this fixed $\delta$-admissible family collides across all $m=\alpha n$ rows is at most
\[
2^{-\alpha n\Lambda_{r,\delta}(\kappa,K)}.
\]
Therefore,
\[
\log_2 \E[\mathsf{Coll}_{r,\delta}]
\le
\log_2\left|T(r,\delta,c_1,c_2)\right|
-\alpha n\Lambda_{r,\delta}(\kappa,K).
\]
The desired bound follows after applying \Cref{lem:admissible-entropy-count} and dividing by \(n\).
\end{proof}

\subsection{\texorpdfstring{Ensemble $r$-OGP}{Ensemble r-OGP}}\label{subsec:ensemble-rogp}

The $r$-OGP established above concerns configurations within a single random instance.
For the hardness argument, we require the corresponding \emph{ensemble} $r$-OGP,
which forbids such configurations across a correlated family of instances that share a common prefix and have independent suffixes: for any $t\in\{0,\ldots,n\}$ consider $r$ correlated instances that share the first $t$ columns and have independent suffixes (the randomized thresholds of the activation, when present, are shared across instances). 

\begin{definition}[Ensemble $r$-OGP]\label{def:ensemble-rogp}
Fix $r\ge 2$ and $\delta\in(0,1]$, and let $(c_1,c_2)$ be the forbidden external-overlap
window from \Cref{def:ogp}.
For $t\in\{0,1,\ldots,n\}$, let $\mathsf{A}^{(t)}=(\matA^{(1)},\ldots,\matA^{(r)})$ be the correlated ensemble in which the first $t$ columns coincide across all $\matA^{(j)}$ and the remaining $n-t$ columns are independent (with the same randomized thresholds shared across all instances, when present). A family $\{(\vecx^{(j)},\vecy^{(j)})\}_{j=1}^r\subseteq(\{-1,1\}^n)^2$ is called \emph{$t$-prefix aligned} if
\[
\vecx^{(1)}_{\le t}=\cdots=\vecx^{(r)}_{\le t}
\qquad\text{and}\qquad
\vecy^{(1)}_{\le t}=\cdots=\vecy^{(r)}_{\le t}.
\]
It is called \emph{ensemble $\delta$-admissible} if it satisfies the following conditions.
\begin{enumerate}
\item \emph{(Internal Overlap)}. For every $j\in[r]$,
$\frac{1}{n}\left|\langle \vecx^{(j)},\vecy^{(j)}\rangle\right| \le 1-\delta$;
\item \emph{(External Overlap)}. For every distinct $j\neq j'$,
\[
1 - \frac{\delta}{c_1} \le \frac{1}{2n}(\langle \vecx^{(j)},\vecx^{(j')}\rangle+\langle \vecy^{(j)},\vecy^{(j')}\rangle) \le 1 - \frac{\delta}{c_2}.
\]
\end{enumerate}

Let $\mathsf{Coll}^{\mathrm{ens}}_{r,\delta}(t)$ denote the number of
$t$-prefix aligned, ensemble $\delta$-admissible families such that each
pair is a collision on its own instance:
\[
\varphi(\matA^{(j)}\vecx^{(j)})
=
\varphi(\matA^{(j)}\vecy^{(j)})
\qquad\text{for all }j\in[r].
\]
We say \emph{ensemble $r$-OGP holds at $(\alpha,\delta)$} if, for every fixed \(t\in\{0,1,\ldots,n\}\) satisfying
\[
\frac{\delta}{c_2}<1-\frac{t}{n}<\frac{\delta}{c_1},
\]
the corresponding correlated-prefix ensemble satisfies
\[
\Pr_{\mathsf A^{(t)}}\!\left[\mathsf{Coll}^{\mathrm{ens}}_{r,\delta}(t)=0\right]
\ge 1-2^{-\Omega(n)}.
\]
This is a fixed-\(t\) statement, uniform over choices of \(t\) in the displayed window; it does not assert a simultaneous union bound over all \(t\).
\end{definition}

\begin{lemma}[Ensemble First-Moment Bound]
\label{lem:ensemble-first-moment}
For the odd-integer \(K\) randomized activation above, fix \(r\ge2\), \(\delta\in(0,1]\), \(2<c_1<c_2<2c_1\), \(\alpha\in(0,1]\), \(\kappa>0\), and \(t\in\{0,\ldots,n\}\) satisfying
\[
\frac{\delta}{c_2}<1-\frac{t}{n}<\frac{\delta}{c_1}.
\]
Then,
\[
\frac{1}{n}\log_2 \E\!\left[\mathsf{Coll}^{\mathrm{ens}}_{r,\delta}(t)\right]
\le
2 + 2(r-1)\left(1-\frac{t}{n}\right)
- \alpha\,\Lambda_{r,\delta}(\kappa,K)
+ o(1).
\]
\end{lemma}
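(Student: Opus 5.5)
The plan is to reuse the proof of \Cref{thm:randomized-ogp-first-moment}, changing two ingredients. The entropy count is replaced by a count of $t$-prefix aligned families. The admissibility-plus-selection step is replaced by a direct covariance computation for the correlated-prefix rows. The row-wise contraction of \Cref{claim:main rogp} and the independence across rows are then used as before. By linearity of expectation,
\[
\E\!\left[\mathsf{Coll}^{\mathrm{ens}}_{r,\delta}(t)\right]\le \#\{\text{$t$-prefix aligned families}\}\cdot \max_{\text{admissible family}}\Pr\!\left[\text{all $r$ collide on their instances}\right].
\]

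\emph{Counting.} The first pair $(\vecx^{(1)},\vecy^{(1)})$ has at most $2^{2n}$ choices. For each $j\ge2$, prefix alignment forces $\vecx^{(j)}_{\le t}=\vecx^{(1)}_{\le t}$ and $\vecy^{(j)}_{\le t}=\vecy^{(1)}_{\le t}$, so only the two suffixes of length $n-t$ are free. That gives at most $2^{2(n-t)}$ choices per $j$. The total is at most $2^{2n+2(r-1)(n-t)}$, which is the entropy term in the statement.

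\emph{Row distribution.} Fix a row index $i$. Write $X^{(j)}=\langle \veca^{(j)}_i,\vecx^{(j)}\rangle$ and $Y^{(j)}=\langle \veca^{(j)}_i,\vecy^{(j)}\rangle$, where $\veca^{(j)}_i$ is row $i$ of $\matA^{(j)}$. These rows agree on the first $t$ entries and have independent suffixes. Hence the $2r$ variables are jointly centered Gaussian with unit variances, and their covariances are as follows.
\begin{itemize}
\item For $j\ne j'$, $\mathrm{Cov}(X^{(j)},X^{(j')})=t/n<1-\delta/c_2$, since only the shared prefix contributes.
\item Within one instance, $|\mathrm{Cov}(X^{(j)},Y^{(j)})|\le 1-\delta$ by internal overlap.
\item For $j'\ne j$, $\mathrm{Cov}(X^{(j')},Y^{(j)})=\frac1n\langle\vecx_{\le t},\vecy_{\le t}\rangle$, using the common prefixes $\vecx_{\le t},\vecy_{\le t}$.
\end{itemize}
To bound the last covariance, subtract the suffix part of the internal overlap. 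The suffix part is at most $(n-t)/n<\delta/c_1$ in absolute value, so $\mathrm{Cov}(X^{(j')},Y^{(j)})\le 1-\delta+\delta/c_1=1-\delta(1-1/c_1)$. Thus every pair among $\{X^{(j)}:j\in[r]\}\cup\{Y^{(1)}\}$ has covariance at most $1-\eta_{\mathrm{ens}}\delta\le 1-\eta_0\delta$. This is an explicit separated $(r+1)$-subfamily, so the selection lemma is not needed here.

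\emph{Row bound.} Next I would rerun the proof of \Cref{lem:uniform-contraction}. Arithmetize the collision indicator as
\[
2^{-r}\sum_{S\subseteq[r]}\E\!\left[\prod_{j\in S}\varphi_i(X^{(j)})\varphi_i(Y^{(j)})\right].
\]
For a nonempty $S$ of size $s$, take the $s$ variables $X^{(j)}$, $j\in S$, together with one $Y^{(j_0)}$, $j_0\in S$. By the computation above these $s+1$ variables have pairwise covariance at most $1-\eta_0\delta$. The thresholds of row $i$ are shared across the instances and are independent of the Gaussians. So \Cref{claim:main rogp} applies verbatim with $\eta=\eta_0\delta$. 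Bounding each term and optimizing over $p\in(0,p_{\max}]$ gives a one-row probability of at most $2^{-\Lambda_{r,\delta}(\kappa,K)}$.

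\emph{Across rows.} Rows $i$ are independent in the ensemble: different rows use disjoint Gaussian entries and independent thresholds. Hence a fixed family collides on all $m=\alpha n$ rows with probability at most $2^{-\alpha n\Lambda_{r,\delta}(\kappa,K)}$. Combining this with the count and taking $\frac1n\log_2$ gives the claimed bound, with rounding of $m$ absorbed in $o(1)$.

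The main point needing care is the covariance step. Because the instances differ, $d$ is no longer a Hamming metric on a single instance, so \Cref{lem:r+1 tuple exists} cannot be invoked. One must instead verify directly that the window on $1-t/n$ both separates the $X$'s from each other and keeps each cross-instance $Y$ away from the $X$'s. The second requirement uses the upper bound $1-t/n<\delta/c_1$ together with the internal-overlap condition. This is exactly where the constants in $\eta_{\mathrm{ens}}$ come from.
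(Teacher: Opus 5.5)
Your proposal is correct and follows the paper's proof essentially step for step. It uses the same prefix-aligned count $2^{(2+2(r-1)(1-t/n))n}$ and the same explicit separated subfamily $\{X^{(j)}:j\in S\}\cup\{Y^{(j_0)}\}$; where the paper lower-bounds the prefix Hamming distance $D_{\mathrm{pre}}$, you subtract the suffix inner product, which is equivalent. It then applies \Cref{claim:main rogp} with shared row thresholds and independence across rows, exactly as the paper does. One quibble: your closing remark is overstated, since $d$ is not Hamming across instances but can be checked to remain a metric on these variables (so \Cref{lem:r+1 tuple exists} would also apply), though this plays no role because, like the paper, you bypass it with the explicit subfamily.
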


\begin{proof}
We follow the same first-moment argument as in the proof of \Cref{thm:randomized-ogp-first-moment}.

Fix a candidate family $(x^{(j)},y^{(j)})_{j=1}^r$ satisfying the ensemble constraints. For each row of the matrices, define the Gaussian projections
\[
X_j = \langle a^{(j)}, x^{(j)} \rangle,\qquad
Y_j = \langle a^{(j)}, y^{(j)} \rangle,
\]
where the row vectors $a^{(j)}$ are jointly Gaussian according to the correlated ensemble.

The internal overlap condition implies
\[
d(X_j,Y_j)=\frac{1-\mathrm{Cov}(X_j,Y_j)}{2}\ge \delta/2.
\]
For distinct $j\neq j'$, the suffix rows of $a^{(j)}$ and $a^{(j')}$ are independent, and only the shared prefix contributes to the covariance. Since the family is $t$-prefix aligned,
\[
\mathrm{Cov}(X_j,X_{j'})=\frac{t}{n},
\qquad
\mathrm{Cov}(Y_j,Y_{j'})=\frac{t}{n}.
\]
This holds because $\veca^{(j)}$ and $\veca^{(j')}$ share their first $t$ columns, have independent suffixes, and prefix alignment gives $\vecx^{(j)}_{\le t}=\vecx^{(j')}_{\le t}$; therefore $\mathrm{Cov}(X_j,X_{j'})=(1/n)\,\vecx^{(j)\top}_{\le t}\,\vecx^{(j')}_{\le t}=t/n$, and similarly for $Y_j,Y_{j'}$.
Therefore
\[
d(X_j,X_{j'})+d(Y_j,Y_{j'})
=1-\frac{t}{n}.
\]
We now verify the separated subfamily needed in the row-wise arithmetization, using the same \(\eta_0\) fixed before the definition of \(\Lambda_{r,\delta}\). Fix a nonempty \(S\subseteq[r]\), write \(s=|S|\), and choose \(j_0\in S\). The variables \(\{X_j:j\in S\}\cup\{Y_{j_0}\}\) form a separated \((s+1)\)-subfamily. Indeed, for distinct \(j\neq j'\) in \(S\),
\[
d(X_j,X_{j'})=\frac{1-t/n}{2}\ge \frac{\delta}{2c_2},
\]
so their covariance separation is at least \(\delta/c_2\ge \eta_0\delta\). For \(j\neq j_0\), the covariance between \(X_j\) and \(Y_{j_0}\) comes only from the shared prefix, and prefix alignment gives \(x^{(j)}_{\le t}=x^{(j_0)}_{\le t}\):
\[
\mathrm{Cov}(X_j,Y_{j_0})
=
\frac{1}{n}\left\langle x^{(j_0)}_{\le t},y^{(j_0)}_{\le t}\right\rangle .
\]
If \(D_{\mathrm{pre}}=\Delta_{\mathrm{Ham}}(x^{(j_0)}_{\le t},y^{(j_0)}_{\le t})\), then
\[
d(X_j,Y_{j_0})
=
\frac{1-t/n}{2}+\frac{D_{\mathrm{pre}}}{n}.
\]
The internal-overlap condition gives
\[
D_{\mathrm{pre}}
\ge
\delta n/2-(n-t),
\]
and hence
\[
d(X_j,Y_{j_0})
\ge
\frac{\delta}{2}-\frac{1-t/n}{2}
>
\frac{\delta}{2}\left(1-\frac1{c_1}\right).
\]
Thus the covariance separation between \(X_j\) and \(Y_{j_0}\) is greater than \(\delta(1-1/c_1)\ge\eta_0\delta\), while \(d(X_{j_0},Y_{j_0})\ge\delta/2\) gives covariance separation at least \(\delta\). Therefore every pair in this \((s+1)\)-subfamily has covariance at most \(1-\eta_0\delta\).

	For the fixed row under consideration, all \(r\) ensemble instances use the same thresholds for that row. Thus the collision event on this row has the same arithmetization as in Lemma~\ref{lem:uniform-contraction}:
	\[
	\Pr[\varphi(X_j)=\varphi(Y_j)\ \forall j]
	=
	2^{-r}\sum_{S\subseteq[r]}
	\E\!\left[\prod_{j\in S}\varphi(X_j)\varphi(Y_j)\right].
	\]
	For each nonempty \(S\), the separated \((s+1)\)-subfamily exhibited above replaces the use of Lemma~\ref{lem:r+1 tuple exists}; applying Claim~\ref{claim:main rogp} with \(\eta=\eta_0\delta\) gives the same bound for the corresponding moment term. Therefore
	\[
	\Pr[\varphi(X_j)=\varphi(Y_j)\ \forall j] \le 2^{-\Lambda_{r,\delta}(\kappa,K)}.
	\]
Since the rows are independent, the probability that all $r$ pairs collide across all $m=\alpha n$ rows is at most,
\[
2^{-\alpha n\Lambda_{r,\delta}(\kappa,K)}\,.
\]
The number of $t$-prefix aligned families is at most
\[
2^{(2+2(r-1)(1-t/n))n}.
\]
Indeed, the first pair has at most $2^{2n}$ choices, and each later pair is fixed on the first $t$ coordinates and arbitrary on the remaining $n-t$ coordinates.

Finally, taking expectations and combining the bounds yields the stated exponent. The conclusion then follows by \Cref{lem:first-moment}.
\end{proof}

\begin{corollary}[Ensemble $r$-OGP]\label{cor:ensemble-rogp}
Fix \(\alpha\in(0,1]\), \(\kappa>0\), $\delta \in (0,1]$, $r \ge 2$, $2<c_1<c_2<2c_1$, and an odd integer \(K\ge1\).
Suppose that
\[
2 + 2(r-1)\frac{\delta}{c_1}
\;-\;
\alpha\,\Lambda_{r,\delta}(\kappa,K)
< 0.
\]
Then, for every $t \in \{0,1,\ldots,n\}$ satisfying
\[
\frac{\delta}{c_2}<1-\frac{t}{n}<\frac{\delta}{c_1},
\]
\[
\Pr\!\left[\mathsf{Coll}^{\mathrm{ens}}_{r,\delta}(t)\ge 1\right]
\le 2^{-\Omega(n)}.
\]

Consequently, ensemble $r$-OGP holds at $(\alpha,\delta)$ in the sense of \Cref{def:ensemble-rogp}.
\end{corollary}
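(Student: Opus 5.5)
The plan is to combine \Cref{lem:ensemble-first-moment} with the first-moment bound of \Cref{lem:first-moment}, after turning the hypothesis (stated with \(\delta/c_1\)) into a negative exponent for every \(t\) in the window. Fix \(t\in\{0,\ldots,n\}\) with \(\delta/c_2<1-t/n<\delta/c_1\).

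\textbf{Step 1: compare the entropy terms.} \Cref{lem:ensemble-first-moment} gives
\[
\frac1n\log_2\E\!\left[\mathsf{Coll}^{\mathrm{ens}}_{r,\delta}(t)\right]\le 2+2(r-1)\left(1-\frac tn\right)-\alpha\Lambda_{r,\delta}(\kappa,K)+o(1).
\]
On the window we have \(1-t/n<\delta/c_1\). Hence the right-hand side is at most \(E+o(1)\), where
\[
E:=2+2(r-1)\delta/c_1-\alpha\Lambda_{r,\delta}(\kappa,K).
\]
By hypothesis \(E<0\), and \(E\) is a constant independent of \(n\) and \(t\), since \(r,\delta,c_1,\alpha,\kappa,K\) are all fixed.

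\textbf{Step 2: absorb the error term and apply Markov.} The \(o(1)\) term comes from the count of \(t\)-prefix-aligned families and from rounding \(m=\alpha n\). I need it to be uniform over \(t\) in the window. This holds because the count \(2^{(2+2(r-1)(1-t/n))n}\) is exact in \(t\), and the row bound \(2^{-\Lambda_{r,\delta}(\kappa,K)}\) does not depend on \(t\). So for all sufficiently large \(n\), uniformly in \(t\), the expectation is at most \(2^{En/2}\). Then \Cref{lem:first-moment} gives
\[
\Pr\!\left[\mathsf{Coll}^{\mathrm{ens}}_{r,\delta}(t)\ge1\right]\le\E\!\left[\mathsf{Coll}^{\mathrm{ens}}_{r,\delta}(t)\right]\le 2^{-|E|n/2}=2^{-\Omega(n)}.
\]

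\textbf{Step 3: conclude ensemble \(r\)-OGP.} The bound in Step 2 is exactly the fixed-\(t\) condition in \Cref{def:ensemble-rogp}, and it is uniform over the window. Therefore ensemble \(r\)-OGP holds at \((\alpha,\delta)\).

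The only delicate point is the uniformity of the \(o(1)\) term over \(t\). I would settle it by inspecting the proof of \Cref{lem:ensemble-first-moment}: there the entropy count is exact and the row-wise contraction bound is uniform over all prefix-aligned families. Everything else is direct substitution.
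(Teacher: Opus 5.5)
Your proposal is correct and matches the paper's proof. On the window you bound the counting term $2+2(r-1)(1-t/n)$ by $2+2(r-1)\delta/c_1$, so \Cref{lem:ensemble-first-moment} makes the expectation $2^{-\Omega(n)}$, and Markov's inequality (\Cref{lem:first-moment}) finishes the argument uniformly in $t$. Your explicit accounting of the $o(1)$ term is a more careful version of the paper's remark that the exponent is uniform over $t$ in the window. That term arises only from $m/n\to\alpha$, since the prefix-aligned count is exact. One small correction: the row bound holds for every ensemble $\delta$-admissible family, because it uses the internal-overlap condition, not for every prefix-aligned family as you wrote.
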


\begin{proof}
For $t$ in the admissible window, Lemma~\ref{lem:ensemble-first-moment} gives the counting term
$2+2(r-1)(1-t/n)$, which is at most $2+2(r-1)\delta/c_1$.
Thus, for each fixed $t$ we have
$\mathbb{E}[\mathsf{Coll}^{\mathrm{ens}}_{r,\delta}(t)]\le 2^{-\Omega(n)}$ in the stated negative-exponent regime.
By \Cref{lem:first-moment},
$\Pr[\mathsf{Coll}^{\mathrm{ens}}_{r,\delta}(t)\ge 1]\le 2^{-\Omega(n)}$.
The exponent is uniform over all fixed \(t\) in the window, so this proves the stated ensemble \(r\)-OGP.
\end{proof}

\subsection{Hardness for Online Algorithms}\label{subsec:online-hardness}
An online algorithm observes the columns $\veca_1,\dots,\veca_n$ of $\matA$ sequentially and, after seeing the current column, must commit to the corresponding pair of input bits before seeing future columns. In the randomized activation model, the full threshold array $\mathcal T$ is auxiliary input available before the column stream begins.

Formally, after $\mathcal T$ is revealed, a randomized online algorithm $\mathcal A$ samples internal randomness $\omega$ at the start. At time $t$, it is given the threshold array, its coins, the observed column prefix, and its previous decisions, and outputs
\[
(x_t,y_t)
=
\mathcal A_t\!\left(
\mathcal T,\omega,
\veca_1,\ldots,\veca_t,
(x_1,y_1),\ldots,(x_{t-1},y_{t-1})
\right)
\in\{-1,1\}^2 .
\]
Equivalently, the algorithm may maintain an arbitrary internal state updated from the observed prefix, but it cannot inspect columns $\veca_{t+1},\ldots,\veca_n$ before committing to $(x_t,y_t)$.

For $t\in\{0,1,\ldots,n\}$, let $\mathsf{A}^{(t)}=(\matA^{(1)},\ldots,\matA^{(r)})$
be an ensemble whose first $t$ columns coincide and whose remaining $n-t$ columns are independent,
with the same thresholds shared across all $\matA^{(j)}$. 

\begin{theorem}\label{thm:online-hard}
Fix $2<c_1<c_2<2c_1$. Assume that ensemble $r$-OGP holds at parameters $(\alpha,\delta)$ with this forbidden-overlap window.
Then, over the random matrix, thresholds, and internal coins, every online algorithm $\mathcal{A}$ satisfies
\[
\Pr\!\left[
\begin{array}{c}
\mathcal{A}\text{ outputs }\vecx,\vecy\in\{\pm1\}^n
\text{ such that }
\varphi(\matA\vecx)=\varphi(\matA\vecy)\\
\text{and }
|\langle \vecx,\vecy\rangle|\le (1-\delta)n
\end{array}
\right]
\le 2^{-\Omega(n/r)}.
\]
Thus any constant choice of $r$ for which ensemble $r$-OGP holds gives an exponentially small online success probability.
\end{theorem}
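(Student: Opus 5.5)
Let $p$ denote the single-instance success probability of $\mathcal A$. We will pick $t$ with $\delta/c_2<1-t/n<\delta/c_1$ and couple $r$ runs of $\mathcal A$ on the ensemble $\mathsf A^{(t)}$. These runs share the thresholds $\mathcal T$ and the first $t$ columns. All runs use the same coins $\omega$, so their prefix decisions coincide. Each instance also receives an independent \emph{suffix sign vector} $\boldsymbol\sigma^{(j)}\in\{\pm1\}^{n-t}$, uniform and independent of everything else. Run $j$ is fed the columns $\sigma^{(j)}_k\veca^{(j)}_k$ for $k>t$. Its output is then corrected by multiplying the suffix coordinates of both $\vecx$ and $\vecy$ by $\sigma^{(j)}_k$.

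The transformed matrix has the same law as $\matA^{(j)}$, since Gaussian columns are symmetric. The output satisfies $\matA^{(j)}\vecx^{(j)}=\widetilde{\matA}^{(j)}\widetilde{\vecx}^{(j)}$. Hence collisions and internal overlaps are preserved exactly, and each run succeeds with probability $p$. The coupled runs are online, so the outputs are $t$-prefix aligned.

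First we lower-bound the joint success probability. Condition on $\mathcal F=(\mathcal T,\omega,\veca_{\le t})$. Given $\mathcal F$, the $r$ suffix blocks together with their sign vectors are i.i.d., so each run succeeds with the same conditional probability $q(\mathcal F)$, independently across runs. Then $\Pr[\text{all succeed}]=\E[q^r]\ge(\E q)^r=p^r$ by Jensen.

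Next we show that simultaneous success almost always produces an ensemble $\delta$-admissible family. Internal overlap holds by success. For external overlap, write the suffix parts of the outputs as $\sigma^{(j)}\odot\widetilde{\vecx}^{(j)}_{>t}$. Conditioned on the runs' raw outputs, which depend on $\sigma^{(j)}$ only through the transformed columns, the law of $\sigma^{(j)}$ given $\widetilde{\matA}^{(j)}$ is still uniform. This is because $\widetilde{\matA}^{(j)}$ is independent of $\sigma^{(j)}$, by the same symmetry of Gaussian columns.

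Thus for $j\ne j'$ the suffix inner product $\sum_{k>t}\sigma^{(j)}_k\sigma^{(j')}_k\widetilde x^{(j)}_k\widetilde x^{(j')}_k$ is a sum of $n-t$ independent uniform signs. By Lemma~\ref{lem:hoeffding} it is at most $\epsilon n$ in absolute value except with probability $2e^{-\Omega(\epsilon^2 n^2/(n-t))}=2^{-\Omega(n)}$, and the same holds for $\vecy$. Union-bounding over the $O(r^2)$ pairs, the normalized external overlap equals $t/n+O(\epsilon)$. Choosing $t$ strictly inside the window and $\epsilon$ small enough keeps it in $[1-\delta/c_1,1-\delta/c_2]$. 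The interior of the window has positive width since $c_1<c_2$.

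Combining the two steps, $p^r\le\Pr[\mathsf{Coll}^{\mathrm{ens}}_{r,\delta}(t)\ge1]+r^2 2^{-\Omega(n)}\le 2^{-\Omega(n)}$ by the ensemble $r$-OGP hypothesis. Therefore $p\le 2^{-\Omega(n/r)}$.

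The step I expect to be the main obstacle is the independence claim in the symmetrization. We need the Hoeffding sum to be over genuinely independent uniform signs even though the raw outputs are adaptive functions of the transformed columns. I would verify this by writing the joint law of $(\sigma^{(j)},\widetilde{\matA}^{(j)}_{>t})$ as a product measure and conditioning on $\mathcal F$ and all the $\widetilde{\matA}^{(j)}$.
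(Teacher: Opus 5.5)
Your proposal is correct and follows essentially the same argument as the paper. Both proofs use a shared-prefix coupling of $r$ runs with common thresholds and coins, and apply Jensen's inequality to the conditional success probability $q$. Both add independent suffix sign flips that are undone in the output, so the suffix cross-overlaps become sums of independent mean-zero signs. They then choose $t$ with slack inside the window, apply Hoeffding with a union bound over pairs, and invoke ensemble $r$-OGP to get $p^r\le 2^{-\Omega(n)}$.

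The independence point you flag is handled in the paper exactly as you propose. The sign vectors are independent of the signed matrices, thresholds and coins, so they remain uniform after conditioning on the raw outputs.
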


\begin{proof}
Let $\xi=\frac13\delta(1/c_1-1/c_2)$. The interval below has positive constant width, so for all sufficiently large \(n\) we may choose \(t\in\{0,\ldots,n\}\) with constant slack in the ensemble-OGP window:
\[
\frac{\delta}{c_2}+\xi<1-\frac{t}{n}<\frac{\delta}{c_1}-\xi.
\]
Since $\delta,c_1,c_2$ are fixed constants, this choice has $n-t=\Theta(n)$.
Let $p$ denote the single-instance success probability of $\mathcal A$ for producing a $\delta$-extensive collision (unrelated to the anti-concentration parameter of the same name in \Cref{thm:randomized-ogp-first-moment}).
The coupling is designed so that online-ness forces the \(r\) outputs to agree on the shared prefix, while independent sign flips on the unseen suffix randomize the cross-overlaps without changing whether each output pair is a collision.
We use the following symmetrized coupled experiment. Sample the correlated ensemble $\mathsf A^{(t)}=(\matA^{(1)},\ldots,\matA^{(r)})$. For each $j\in[r]$ and each $i>t$, draw an independent sign $\sigma_i^{(j)}\in\{\pm1\}$ and let $\widetilde{\matA}^{(j)}$ be obtained from $\matA^{(j)}$ by replacing the suffix column $\veca_i^{(j)}$ with $\sigma_i^{(j)}\veca_i^{(j)}$. Run $\mathcal A$ on the matrices $\widetilde{\matA}^{(1)},\ldots,\widetilde{\matA}^{(r)}$, using the same internal coins and the same thresholds across all instances. Since the signed matrices have the same marginal law as the original matrices, each run succeeds with probability $p$.

Condition on the common prefix, the thresholds, and the internal coins. Given this information, the signed suffixes are independent across the $r$ instances and have identical conditional laws. Since the algorithm is online and the thresholds and coins are already fixed, the $r$ success events are conditionally independent and have the same conditional success probability, say $q$. Hence the probability that all $r$ runs succeed in the symmetrized coupled experiment is $\E\left[q^r\right]\ge(\E\left[q\right])^r=p^r$, by Jensen's inequality since $x\mapsto x^r$ is convex for $r\ge2$.

Let $(\vecu^{(j)},\vecv^{(j)})$ be the raw output of $\mathcal A$ on $\widetilde{\matA}^{(j)}$.
Since the first $t$ columns, the thresholds, and the internal coins are identical across the coupled runs, online-ness implies that on every outcome the raw prefixes
\[
\vecu^{(1)}_{\le t}=\cdots=\vecu^{(r)}_{\le t},
\qquad
\vecv^{(1)}_{\le t}=\cdots=\vecv^{(r)}_{\le t}
\]
are identical across all $j\in[r]$.

For $i>t$, set
\[
x_i^{(j)}=\sigma_i^{(j)}u_i^{(j)},\qquad
y_i^{(j)}=\sigma_i^{(j)}v_i^{(j)}.
\]
On the prefix, set $x_i^{(j)}=u_i^{(j)}$ and $y_i^{(j)}=v_i^{(j)}$. Then
\[
\matA^{(j)}\vecx^{(j)}=\widetilde{\matA}^{(j)}\vecu^{(j)},\qquad
\matA^{(j)}\vecy^{(j)}=\widetilde{\matA}^{(j)}\vecv^{(j)},
\]
where $\widetilde{\matA}^{(j)}$ is the signed matrix seen by the algorithm. Thus this post-processing preserves collision equations and internal overlaps: since $(\sigma_i^{(j)})^2=1$ for all $i$, the inner product $\langle\vecx^{(j)},\vecy^{(j)}\rangle=\sum_i (\sigma_i^{(j)})^2 u_i^{(j)} v_i^{(j)}=\langle\vecu^{(j)},\vecv^{(j)}\rangle$ is unchanged.

By Gaussian sign symmetry, the signs $\sigma_i^{(j)}$ are independent of the signed matrices $\widetilde{\matA}^{(j)}$. The raw outputs are functions of the signed matrices, the thresholds, and the fixed coins, so after conditioning on the signed matrices and raw outputs the suffix flip signs remain independent. For every distinct $j,j'$, the suffix products
\[
x_i^{(j)}x_i^{(j')}=\sigma_i^{(j)}\sigma_i^{(j')}u_i^{(j)}u_i^{(j')}
\]
are independent mean-zero signs over $i>t$, and the same holds for the $y$-vectors. Therefore, by \Cref{lem:hoeffding} with $\lambda=\xi n/(4(n-t))$ and a union bound over pairs $j\neq j'$,
\[
\left|\frac{1}{n}\langle \vecx^{(j)},\vecx^{(j')}\rangle-\frac{t}{n}\right|\le \xi/4,
\qquad
\left|\frac{1}{n}\langle \vecy^{(j)},\vecy^{(j')}\rangle-\frac{t}{n}\right|\le \xi/4
\]
for all distinct $j,j'$ except with probability $2^{-\Omega(n)}$.
Since $1-t/n$ has slack $\xi$ in the admissible window, these inequalities imply
\[
1-\frac{\delta}{c_1}
\le
\frac{\langle \vecx^{(j)},\vecx^{(j')}\rangle+\langle \vecy^{(j)},\vecy^{(j')}\rangle}{2n}
\le
1-\frac{\delta}{c_2}
\]
for all distinct $j,j'$.

On the event that all $r$ runs succeed and the above concentration event holds, the outputs form a $t$-prefix aligned, ensemble $\delta$-admissible family of collisions. Hence
\[
p^r
\le
\Pr[\mathsf{Coll}^{\mathrm{ens}}_{r,\delta}(t)\ge1]
+2^{-\Omega(n)}
\le 2^{-\Omega(n)},
\]
where the last inequality is ensemble $r$-OGP. Therefore $p\le 2^{-\Omega(n/r)}$.
\end{proof}

\begin{corollary}[Online Collision Resistance]\label{cor:online-crh}
Fix $\alpha\in(0,1)$, \(\kappa>0\), $r\ge2$, $\delta\in(0,1]$, $2<c_1<c_2<2c_1$, and an odd integer \(K\ge1\). If
\[
2+2(r-1)\frac{\delta}{c_1}-\alpha\Lambda_{r,\delta}(\kappa,K)<0,
\]
then every online algorithm has success probability at most $2^{-\Omega(n/r)}$ for outputting a $\delta$-extensive collision. In particular, any constant choice of $r$ satisfying the condition gives online $\delta$-extensive collision resistance at density $\alpha$. (The ensemble condition uses $\delta/c_1$ in place of the entropy $H(\delta/(2c_1))$ appearing in \Cref{corollary:rogp}; this is because prefix alignment restricts the external-overlap count to the suffix, contributing $(r-1)(1-t/n)\le(r-1)\delta/c_1$ linearly rather than entropically.)
\end{corollary}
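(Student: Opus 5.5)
The corollary is a direct chaining of \Cref{cor:ensemble-rogp} and \Cref{thm:online-hard}. No new probabilistic estimate is needed; the work is checking that the hypotheses line up and tracking the constants in the exponent.

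\textbf{Step 1: ensemble $r$-OGP.} The displayed hypothesis
\[
2+2(r-1)\delta/c_1-\alpha\Lambda_{r,\delta}(\kappa,K)<0
\]
is verbatim the hypothesis of \Cref{cor:ensemble-rogp}. The remaining standing assumptions also match: $\alpha\in(0,1)\subseteq(0,1]$, $\kappa>0$, $\delta\in(0,1]$, $r\ge2$, $2<c_1<c_2<2c_1$, and $K$ odd. So that corollary applies.

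It gives, for every fixed $t$ with $\delta/c_2<1-t/n<\delta/c_1$, the bound $\Pr[\mathsf{Coll}^{\mathrm{ens}}_{r,\delta}(t)\ge1]\le 2^{-\Omega(n)}$. That is, ensemble $r$-OGP holds at $(\alpha,\delta)$ in the sense of \Cref{def:ensemble-rogp}. The implied constant comes from \Cref{lem:ensemble-first-moment}. Since $1-t/n\le\delta/c_1$ throughout the window, the counting exponent is at most $2+2(r-1)\delta/c_1$. The first-moment exponent is therefore at most the fixed negative quantity from the hypothesis plus $o(1)$, uniformly in $t$.

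\textbf{Step 2: online lower bound.} Apply \Cref{thm:online-hard} with the same window $(c_1,c_2)$. Its proof selects a $t$ strictly inside the window, with slack $\xi=\frac13\delta(1/c_1-1/c_2)$. For that $t$, Step 1 supplies the required bound $\Pr[\mathsf{Coll}^{\mathrm{ens}}_{r,\delta}(t)\ge1]\le2^{-\Omega(n)}$. The coupling and symmetrization argument then gives $p^r\le 2^{-\Omega(n)}$, where $p$ is the single-instance success probability of any online algorithm at producing a $\delta$-extensive collision. Taking $r$-th roots yields $p\le 2^{-\Omega(n/r)}$. For constant $r$ this is exponentially small, which is online $\delta$-extensive collision resistance at density $\alpha$.

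\textbf{Main point of care.} The step needing attention is the uniformity of the $\Omega(n)$ constant over the particular $t$ chosen in \Cref{thm:online-hard}. This is handled in Step 1 by bounding $1-t/n$ by $\delta/c_1$, which makes the exponent independent of $t$. The parenthetical remark follows from the counting in \Cref{lem:ensemble-first-moment}. Prefix alignment fixes later pairs on the first $t$ coordinates, so each later pair costs at most $2^{2(n-t)}$ choices. This gives $2(r-1)(1-t/n)$, which grows linearly in $1-t/n$, in place of the entropic term $2(r-1)H(\delta/(2c_1))$.
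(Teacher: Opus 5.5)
Your proposal is correct and follows the paper's proof exactly: the hypothesis gives ensemble $r$-OGP via \Cref{cor:ensemble-rogp}, and \Cref{thm:online-hard} then turns this into the $2^{-\Omega(n/r)}$ online bound. Your added remarks are accurate elaborations of points the paper states more tersely, namely the uniformity of the exponent in $t$ (via $1-t/n\le\delta/c_1$) and the linear rather than entropic count for prefix-aligned families.
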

\begin{proof}
The condition implies ensemble $r$-OGP by \Cref{cor:ensemble-rogp}. The conclusion follows from \Cref{thm:online-hard}.
\end{proof}

\bibliographystyle{splncs04}
\bibliography{refs}

@article{BILSZ15,
  author  = {Baldassi, C. and Ingrosso, A. and Lucibello, C. and Saglietti, L. and Zecchina, R.},
  title   = {{Subdominant Dense Clusters Allow for Simple Learning and High Computational Performance in Neural Networks with Discrete Synapses}},
  journal = {Physical Review Letters},
  volume  = {115},
  pages   = {128101},
  year    = {2015},
  doi     = {10.1103/PhysRevLett.115.128101}
}

@article{BS20,
  author    = {Bansal, N. and Spencer, J. H.},
  title     = {{On-Line Balancing of Random Inputs}},
  journal   = {Random Structures \& Algorithms},
  volume    = {57},
  number    = {4},
  pages     = {879--891},
  year      = {2020},
  publisher = {Wiley},
  doi       = {10.1002/rsa.20955}
}

@article{GamarnikJagannathAOP21,
  author  = {Gamarnik, D. and Jagannath, A.},
  title   = {{The Overlap Gap Property and Approximate Message Passing Algorithms for $p$-Spin Models}},
  journal = {The Annals of Probability},
  volume  = {49},
  number  = {1},
  pages   = {180--205},
  year    = {2021},
  doi     = {10.1214/20-AOP1448}
}

@inproceedings{GJW20,
  author    = {Gamarnik, D. and Jagannath, A. and Wein, A.},
  title     = {{Low-Degree Hardness of Random Optimization Problems}},
  booktitle = {IEEE Symposium on Foundations of Computer Science (FOCS)},
  pages     = {131--140},
  year      = {2020},
  publisher = {IEEE},
  doi       = {10.1109/FOCS46700.2020.00021}
}

@inproceedings{LiZM19,
  author    = {Li, K. and Zhang, T. and Malik, J.},
  title     = {{Approximate Feature Collisions in Neural Nets}},
  booktitle = {Advances in Neural Information Processing Systems (NeurIPS)},
  year      = {2019},
  pages     = {15816--15824}
}

@article{rosenblatt1958perceptron,
  author  = {Rosenblatt, F.},
  title   = {{The Perceptron: A Probabilistic Model for Information Storage and Organization in the Brain}},
  journal = {Psychological Review},
  volume  = {65},
  number  = {6},
  pages   = {386--408},
  year    = {1958},
  doi     = {10.1037/h0042519}
}

@book{Rosenblatt239697,
  author    = {Rosenblatt, F.},
  title     = {{Principles of Neurodynamics: Perceptrons and the Theory of Brain Mechanisms}},
  publisher = {Spartan},
  address   = {Washington, DC},
  year      = {1962}
}

@article{mcculloch1943logical,
  author  = {McCulloch, W. S. and Pitts, W.},
  title   = {{A Logical Calculus of the Ideas Immanent in Nervous Activity}},
  journal = {The Bulletin of Mathematical Biophysics},
  volume  = {5},
  number  = {4},
  pages   = {115--133},
  year    = {1943},
  doi     = {10.1007/BF02478259}
}

@article{SubagInvent17,
  author  = {Subag, E.},
  title   = {{The Geometry of the Gibbs Measure of Pure Spherical Spin Glasses}},
  journal = {Inventiones Mathematicae},
  volume  = {210},
  number  = {1},
  pages   = {135--209},
  year    = {2017},
  doi     = {10.1007/s00222-017-0726-4}
}

@inproceedings{GKPX22,
  author    = {Gamarnik, D. and K{\i}z{\i}lda\u{g}, E. C. and Perkins, W. and Xu, C.},
  title     = {{Algorithms and Barriers in the Symmetric Binary Perceptron Model}},
  booktitle = {IEEE Symposium on Foundations of Computer Science (FOCS)},
  pages     = {576--587},
  year      = {2022},
  publisher = {IEEE},
  doi       = {10.1109/FOCS54457.2022.00061}
}

@article{GamarnikSudanAOP17,
  author    = {Gamarnik, D. and Sudan, M.},
  title     = {{Limits of Local Algorithms over Sparse Random Graphs}},
  journal   = {The Annals of Probability},
  volume    = {45},
  number    = {4},
  pages     = {2353--2376},
  year      = {2017},
  publisher = {Institute of Mathematical Statistics},
  doi       = {10.1214/16-AOP1114}
}

@article{Gardner88,
  author    = {Gardner, E.},
  title     = {{The Space of Interactions in Neural Network Models}},
  journal   = {Journal of Physics A: Mathematical and General},
  volume    = {21},
  number    = {1},
  pages     = {257--270},
  year      = {1988},
  publisher = {IOP Publishing},
  doi       = {10.1088/0305-4470/21/1/030}
}

@article{GardnerDerrida88,
  author    = {Gardner, E. and Derrida, B.},
  title     = {{Three Unfinished Works on the Optimal Storage Capacity of Networks}},
  journal   = {Journal of Physics A: Mathematical and General},
  volume    = {22},
  number    = {12},
  pages     = {1983--1994},
  year      = {1989},
  publisher = {IOP Publishing},
  doi       = {10.1088/0305-4470/22/12/004}
}

@inproceedings{KimR95,
  author    = {Kim, J. H. and Roche, J. R.},
  title     = {{On the Optimal Capacity of Binary Neural Networks: Rigorous Combinatorial Approaches}},
  booktitle = {Proceedings of the Conference on Computational Learning Theory (COLT)},
  pages     = {240--249},
  year      = {1995},
  publisher = {ACM},
  doi       = {10.1145/225298.225327}
}

@book{MezardMontanariBook,
  author    = {M{\'e}zard, M. and Montanari, A.},
  title     = {{Information, Physics, and Computation}},
  year      = {2009},
  publisher = {Oxford University Press},
  doi       = {10.1093/acprof:oso/9780198570837.001.0001}
}

@article{swp,
  author    = {Benedetti, M. and Bogdanov, A. and Malatesta, E. and M{\'e}zard, M. and Perrupato, G. and Rosen, A. and Schwartzbach, N. I. and Zecchina, R.},
  title     = {{Overlap Gap and Computational Thresholds in the Square Wave Perceptron}},
  journal   = {Journal of Statistical Mechanics: Theory and Experiment},
  volume    = {2025},
  number    = {12},
  pages     = {123303},
  year      = {2025},
  doi       = {10.1088/1742-5468/ae23be},
  eprint    = {2506.05197},
  eprinttype= {arXiv},
}

@misc{collisionResistance2025,
  author    = {Benedetti, M. and Bogdanov, A. and Malatesta, E. and M{\'e}zard, M. and Perrupato, G. and Rosen, A. and Schwartzbach, N. I. and Zecchina, R.},
  title     = {{Are Neural Networks Collision Resistant?}},
  year      = {2025},
  eprint    = {2509.20262},
  eprinttype= {arXiv},
  url       = {https://eprint.iacr.org/2025/1756},
  note      = {Cryptology ePrint Archive, Paper 2025/1756}
}

@article{Gamarnik21OGPsurvey,
  author      = {Gamarnik, D.},
  title       = {{The Overlap Gap Property: A Topological Barrier to Optimizing over Random Structures}},
  journal     = {Proceedings of the National Academy of Sciences},
  volume      = {118},
  number      = {41},
  pages       = {e2108492118},
  year        = {2021},
  doi         = {10.1073/pnas.2108492118}
}

@article{OzbulakEtAlSciRep26,
  author      = {Ozbulak, U. and Rao, S. and De Neve, W. and Vankerschaver, J. and Van Messem, A. and Gasparyan, M.},
  title       = {{Exact Feature Collisions in Neural Networks}},
  journal     = {Scientific Reports},
  volume      = {16},
  pages       = {10139},
  year        = {2026},
  doi         = {10.1038/s41598-026-40605-4}
}

@inproceedings{VV25,
  author    = {Vafa, N. and Vaikuntanathan, V.},
  title     = {{Symmetric Perceptrons, Number Partitioning and Lattices}},
  booktitle = {Proceedings of the ACM Symposium on Theory of Computing (STOC)},
  pages     = {2191--2202},
  year      = {2025},
  publisher = {ACM},
  doi       = {10.1145/3717823.3718263}
}

@inproceedings{10.1145/3717823.3718124,
author = {Li, Shuangping and Schramm, Tselil and Zhou, Kangjie},
title = {Discrepancy Algorithms for the Binary Perceptron},
year = {2025},
isbn = {9798400715105},
booktitle = {Proceedings of the 57th Annual ACM Symposium on Theory of Computing},
pages = {1668--1679},
numpages = {12},
location = {Prague, Czechia},
series = {STOC '25},
doi = {10.1145/3717823.3718124}
}

@book{Vershynin2018,
  author    = {Vershynin, Roman},
  title     = {{High-Dimensional Probability: An Introduction with Applications in Data Science}},
  publisher = {Cambridge University Press},
  year      = {2018},
  doi       = {10.1017/9781108231596}
}

@book{DubhashiPanconesi2009,
  author    = {Dubhashi, Devdatt P. and Panconesi, Alessandro},
  title     = {{Concentration of Measure for the Analysis of Randomized Algorithms}},
  publisher = {Cambridge University Press},
  year      = {2009},
  doi       = {10.1017/CBO9780511581274}
}

\end{document}